\documentclass[a4paper,10pt]{article}
\usepackage[cm]{fullpage}
\bibliographystyle{plain}

\usepackage{amsmath,amsfonts,amssymb}
\usepackage{amsthm}
\usepackage{trimclip,adjustbox}
\newtheorem{theorem}{Theorem}[section]
\newtheorem{lemma}[theorem]{Lemma}
\theoremstyle{definition}
\newtheorem{definition}[theorem]{Definition}
\newtheorem{example}[theorem]{Example}
\newtheorem{proposition}[theorem]{Proposition}
\newtheorem{corollary}[theorem]{Corollary}

\usepackage[T1]{fontenc}
\usepackage[utf8]{inputenc}
\usepackage{hyperref}
\usepackage{circledsteps}

\hypersetup{
  colorlinks=true,
  linkcolor=blue,
  anchorcolor=blue,
  citecolor=blue,
  urlcolor=blue,
  filecolor=blue,
  breaklinks=true
}
\usepackage{xargs}
\usepackage{array}
\usepackage{stmaryrd}
\usepackage{scalerel}
\usepackage{proof}
\usepackage{needspace}
\usepackage{tikz}
\usepackage{tikz-cd}
\usepackage{float}
\floatstyle{boxed}
\restylefloat{table}
\restylefloat{figure}
\usepackage{physics}
\usepackage{tcolorbox}
\usepackage{xfrac}
\usepackage{mathrsfs} 
\usepackage{paralist}
\setlength{\textfloatsep}{3pt}
\setlength{\multlinegap}{1pt}

\usepackage{sbnf} 

\usepackage{tarrow} 
\usepackage{infer}

\usepackage{cleveref}

\usetikzlibrary{positioning,decorations.pathreplacing}
\makeatletter
\pgfdeclareshape{xor}
{
  \inheritsavedanchors[from=circle]
  \inheritanchorborder[from=circle]
  \inheritanchor[from=circle]{north}
  \inheritanchor[from=circle]{north west}
  \inheritanchor[from=circle]{north east}
  \inheritanchor[from=circle]{center}
  \inheritanchor[from=circle]{west}
  \inheritanchor[from=circle]{east}
  \inheritanchor[from=circle]{mid}
  \inheritanchor[from=circle]{mid west}
  \inheritanchor[from=circle]{mid east}
  \inheritanchor[from=circle]{base}
  \inheritanchor[from=circle]{base west}
  \inheritanchor[from=circle]{base east}
  \inheritanchor[from=circle]{south}
  \inheritanchor[from=circle]{south west}
  \inheritanchor[from=circle]{south east}
  \inheritbackgroundpath[from=circle]
  \foregroundpath{
    \centerpoint%
    \pgf@xc=\pgf@x%
    \pgf@yc=\pgf@y%
    \pgfutil@tempdima=\radius%
    \pgfmathsetlength{\pgf@xb}{\pgfkeysvalueof{/pgf/outer xsep}}%
    \pgfmathsetlength{\pgf@yb}{\pgfkeysvalueof{/pgf/outer ysep}}%
    \ifdim\pgf@xb<\pgf@yb%
      \advance\pgfutil@tempdima by-\pgf@yb%
    \else%
      \advance\pgfutil@tempdima by-\pgf@xb%
    \fi%
    \pgfpathmoveto{\pgfpointadd{\pgfqpoint{\pgf@xc}{\pgf@yc}}{\pgfqpoint{\pgfutil@tempdima}{0pt}}}
    \pgfpathlineto{\pgfpointadd{\pgfqpoint{\pgf@xc}{\pgf@yc}}{\pgfqpoint{-\pgfutil@tempdima}{0pt}}}
    \pgfpathmoveto{\pgfpointadd{\pgfqpoint{\pgf@xc}{\pgf@yc}}{\pgfqpoint{0pt}{-\pgfutil@tempdima}}}
    \pgfpathlineto{\pgfpointadd{\pgfqpoint{\pgf@xc}{\pgf@yc}}{\pgfqpoint{0pt}{\pgfutil@tempdima}}}
    \pgfsetarrowsstart{}
    \pgfsetarrowsend{}
  }
}
\makeatother

\newcommand*{\defsym}{\triangleq}
\newenvironment{lalign}[1][t]{\begin{array}[#1]{@{}>{{}}l@{}}}{\end{array}}

\newcommand{\N}{\mathbb{N}}
\newcommand{\R}{\mathbb{R}}
\newcommand{\Rext}{\R^{+\infty}}
\newcommand{\kost}{\mathbb{K}}
\newcommand{\kostf}{\mathbb{K}_{\typestyle{f}}}
\newcommand{\keg}{\ensuremath{\mathcal{K}}}
\newcommand{\Rpos}{\R^{+}}

\newcommand{\NF}[2][]{{\mathsf{Term}_{#1}(#2)}}
\newcommand{\MDistr}{\mathsf{MD}}
\newcommand{\LDistr}[1][]{\mathsf{LimDist}_{#1}}

\newcommand{\ecost}[2][]{\mathsf{ecost}_{#1}\left(#2\right)}
\newcommandx{\evalue}[4][1=,2=\keg]{\mathsf{evalue}_{#1,#2}\left(#3\right)(#4)}
\newcommand{\nf}[1][]{\mathsf{nf}_{#1}}

\newcommand{\lmulti}{\{}
\newcommand{\rmulti}{\}}
\newarrow{\rew}{-{Stealth[]}}
\newarrow{\drew}{-{Stealth[sep=-1pt] . Stealth[sep=-5pt]}}

\newcommand{\Distr}{\mathsf{D}}
\newcommand{\supp}{\mathsf{supp}}

\newcommand{\Var}{\textbf{Var}}
\newcommand{\VAR}{\textbf{VAR}}
\newcommand{\Terms}{\mathbf{Terms}}
\newcommand{\Values}{\mathbf{Values}}
\newcommand{\cTerms}{\mathbf{CTerms}}
\newcommand{\cValues}{\mathbf{CValues}}
\newcommand{\EC}{\mathbf{Evaluation\ Contexts}}
\newcommand{\st}{t} 
\newcommand{\sts}{s} 
\newcommand{\sv}{v} 
\newcommand{\sx}{x} 
\newcommand{\sy}{y} 
\renewcommand{\sf}{f} 

\newcommand{\fun}[1]{\mathtt{#1}}
\newcommand{\app}{\cdot}
\newcommand\tick[1]{#1^\checkmark}
\newcommand{\unary}[2][U]{\mathsf{#1}\,#2}
\newcommand\meas[1]{\mathsf{meas}\,#1}

\newcommand{\inj}{\mathsf{inj}}
\newtcbox{\oldbox}[1][gray]{on line,
arc=2pt,colback=#1!35!red,colframe=#1!50!black,
before upper={\rule[-3pt]{0pt}{10pt}},boxrule=0pt,
boxsep=0pt,left=1pt,right=1pt,top=0pt,bottom=0pt}

\newtcbox{\romanobox}[1][gray]{on line,
arc=7pt,colback=#1!35!white,colframe=#1!50!black,
before upper={\rule[-3pt]{0pt}{10pt}},boxrule=0pt,
boxsep=0pt,left=1pt,right=0pt,top=1pt,bottom=1pt}

\newtcbox{\romanosmallbox}[1][gray]{on line,
arc=2pt,colback=#1!35!white,colframe=#1!50!black,
before upper={\rule[-3pt]{0pt}{10pt}},boxrule=0pt,
boxsep=0pt,left=1pt,right=1pt,top=0pt,bottom=0pt}

\newcommand{\block}[2][t]{\begin{array}[#1]{@{}l@{}}#2\end{array}}

\newcommand{\espace}{\,}
\newcommand{\eindent}{\quad}
\NewDocumentCommand{\caseof}{s s m O{\cons(\vec \sx)} m O{\sy} m o}{
  \block{%
    \mathsf{case}\espace#3\espace\mathsf{of}
    \IfBooleanTF {#1} {
      \IfBooleanTF {#2}{
        \espace\{\\\eindent\hspace{-5pt}
        \begin{array}[t]{@{}c@{\,}r@{{}\mapsto{}}l@{}}
          & #4 & #5 \\
          \mid & #6 & #7\espace\}\IfNoValueF{#8}{\espace#8}
        \end{array}
      }{
        \left\{
          \begin{array}{@{}r@{{}\mapsto{}}l@{}}
            #4 & #5 \\
            #6 & #7
          \end{array}
        \right\} \IfNoValueF{#8}{\espace#8}
      }
    }{
      \espace\{\,#4 \mapsto #5 \mid #6 \mapsto #7\,\} \IfNoValueF{#8}{\espace#8}

    }
  }
}

\NewDocumentCommand{\pparg}{m}{\espace #1}
\NewDocumentCommand{\lam}{s >{\SplitList{,}}m m}{
  \block{
    \lambda \ProcessList {#2} { \pparg }.\espace\IfBooleanTF{#1}{\\\eindent}{\espace}\block{#3}
  }
}

\NewDocumentCommand{\letexp}{s O{let} m >{\SplitList{,}}m m}{
  \def\spc{\IfBooleanTF{#1}{\\\eindent}{\espace}}
  \block{
    \mathsf{#2}\espace #3 \ProcessList {#4} { \pparg } =
    \spc\block{#5}
  }
}

\NewDocumentCommand{\letrec}{s m m}{\IfBooleanTF{#1}{\letexp*[letrec]{#2}{#3}}{\letexp[letrec]{#2}{#3}}}

\newcommand{\cons}[1][c]{\texttt{#1}}

\newcommand{\Qbasic}{\mathbb{B}_\qbt}
\newcommand{\Cbasic}{\mathbb{B}_\cbty}
\newcommand{\Basic}{\mathbb{B}}
\newcommand{\typ}{\typestyle{B}}
\newcommand{\ctyp}{\typ_\cbty}
\newcommand{\qtyp}{\typ_\qbt}

\newarrow{\red}{-{Latex[round,open]}}
\newarrow{\dred}{-{Latex[sep=-1pt,round,open] . Latex[sep=-5pt,round,open]}}

\newcommand{\qnil}{\texttt{qnil}}
\newcommand{\qcons}{\texttt{qcons}}
\newcommand{\zero}{\texttt{0}}
\newcommand{\suc}{\texttt{s}}

\newcommand{\Cons}{\texttt{Cons}}
\newcommand{\qtimes}{\mathrel{\tikz[baseline=0pt]{\node[xor,scale=.65,draw=black,semithick,rotate=45,anchor=south]{};}}}
\newcommand{\cadd}{\mathbin{\pmb{\hat{+}}}}

\newcommand{\tplus}{\mathbin{\hat{+}}}
\newcommand{\fplus}{\mathbin{{+}_{\typestyle{f}}}}
\newcommand{\bary}[1]{\mathbin{+_{#1}}}
\newcommand{\cbary}[1]{\mathbin{\pmb{+\!}_{p_0(#1)}}}

\newcommand{\bzero}{\pmb{0}}
\newcommand{\bun}{\pmb{1}}
\newcommand{\breal}[1]{\pmb{#1}}

\newcommand{\cctx}{\Gamma}
\newcommand{\qctx}{\Delta}
\newcommand{\typestyle}[1]{\ensuremath{\mathsf{#1}}}
\newcommand{\ty}{\typestyle{T}}

\renewcommand{\qty}{\typestyle{Q}}
\newcommand{\cty}{\typestyle{C}}

\newcommand{\cbty}{\typestyle{c}}
\newcommand{\qbty}{\typestyle{Q}}

\newcommand{\qbt}{\typestyle{q}}
\newcommand{\Nat}{\texttt{Nat}}
\newcommand{\Out}{\texttt{Out}}

\newcommand{\Qlist}{\texttt{QList}}
\newcommand{\Bool}{\texttt{Bool}}
\newcommand{\Types}{\mathbf{Types}}
\newcommand{\CTypes}{\mathbf{DTypes}}

\newcommand{\TTerms}{{\mathbf{TERMS}}}
\newcommand{\TYPES}{\text{CS Types}}
\newcommand{\FTYPES}{\text{Functional CS Types}}

\newcommand{\TValues}{\textbf{VALUES}}

\newcommand{\te}{T}
\newcommand{\tv}{V}
\newcommand{\tx}{X} 
\newcommand{\tz}{Z} 
\newcommand{\txtwo}{Y}

\newcommand{\tf}{F}

\newcommand{\tk}{K}

\newcommand{\Costfuns}{\mathbf{FUNS}}

\newcommand{\Meas}[2]{\pmb{M}_{\pmb{#1}}#2}

\newcommand{\TTypes}{\mathbf{TYPES}}
\newcommand{\TFTypes}{\mathbf{FTYPES}}
\newcommand{\tty}{\mathsf{S}}
\newcommand{\cs}{\mathsf{CS}}
\newcommand{\fty}{\mathsf{F}}
\newcommand{\kty}{\typestyle{K}}
\newcommand{\realty}{\typestyle{R}^{+ \infty}}
\newcommand{\tctx}{\Theta}

\newcommand{\RTypes}{\dot{\mathbf{RTYPES}}}
\newcommand{\Forms}{\dot{\mathbf{FORMS}}}
\newcommand{\rty}{\dot{\tty}}
\newcommand{\rtb}{{\tt I}}

\newcommand{\skel}[1]{\ulcorner #1 \urcorner}

\newcommand{\rrfty}{\dot{\fty}}
\newcommand{\refine}[3][\tz]{\{ #1 : #2 \mid #3\}}
\newcommand{\refin}[3]{\{ #3 : #1 \mid #2\}}
\newcommand{\rctx}{\dot{\tctx}}
\newcommand{\rsubtype}{\mathrel{\textnormal{\texttt{{<}\!\raisebox{.7pt}{:}}}}}
\newcommand{\ofty}{\ {:}\ }

\newcommand{\form}{\phi}
\newcommand{\formtwo}{\phi'}

\newcommand{\lub}{\bigsqcup}
\newcommand{\tocont}{\to}
\newcommand{\lfp}{\mathsf{lfp}}
\newcommand{\apply}{\mathsf{apply}}

\newcommand{\env}{\rho}
\newcommand{\emptyv}{\emptyset}

\newcommand{\ttpe}[1]{#1^{\bullet}}
\newcommand{\qetv}[1]{#1^{\bullet}}
\newcommand{\transformer}[3]{#1\!\left[{\begin{lalign}[c]#2\end{lalign}}\right]\!\left(\begin{lalign}[c]#3\end{lalign}\right)}

\newcommand{\qet}[2]{\transformer{\mathsf{qet}}{#1}{#2}}

\newcommand{\cont}{K}

\newcommand{\INF}{ \vdash_\cs }
\newcommand{\INFwf}{\Vdash_{\mathsf{wf}}}
\newcommand{\INFr}[1][\kost]{ \Vdash^{#1} }
\newcommand{\INFrr}[1][\Rext]{ \Vdash^{#1} }

\newcommand{\INFb}{ \Vdash_{\mathsf{ST}}}
\newcommand{\INFs}[1][\kost]{ \Vdash_{\mathsf{s}}^{#1} }
\newcommand{\VDash}{
  \mathrel{
    \text{\clipbox{0pt 0pt {.8\width} 0pt}{$\Vdash$}}
    \mkern.9mu
    \text{\adjustbox{width=.87\width,height=\height}{$\vDash$}}
  }
}
\newcommand{\VALID}[1][\kost]{\VDash^{#1} }

\newarrow{\lra}{->}
\newarrow{\disto}{->.>}

\newcommand\tsem[1]{\ensuremath{\llbracket #1\rrbracket}}

\newcommandx{\rtsem}[3][1=,2=\kost]{\ensuremath{\llbracket #3 \rrbracket_{#1}^{#2}}}
\newcommand\ksem[3]{\ensuremath{\tsem{#3}_{#1}^{#2}}}
\newcommand\typesem[1]{\ensuremath{\tsem{#1}^{\kost}}}

\NewDocumentEnvironment{spec}{m o}
{\begingroup\emph{#1}\hfill\IfValueTF{#2}{\fbox{\ensuremath{#2}}}{\,}\newline%
  \setlength{\abovedisplayskip}{0pt}%
  \setlength{\belowdisplayskip}{0pt}%
}
{\endgroup\smallskip\par}

\newcommandx{\mparbox}[3][1=l]{\text{\makebox[#2][#1]{\ensuremath{#3}}}}
\newenvironment{varitemize}
{
\begin{list}{\labelitemi}
{\setlength{\itemsep}{0pt}
 \setlength{\topsep}{0pt}
 \setlength{\parsep}{0pt}
 \setlength{\partopsep}{0pt}
 \setlength{\leftmargin}{15pt}
 \setlength{\rightmargin}{0pt}
 \setlength{\itemindent}{0pt}
 \setlength{\labelsep}{5pt}
 \setlength{\labelwidth}{10pt}
}}
{
 \end{list}
}

\newcounter{numberone}

\newenvironment{varenumerate}
{
\begin{list}{(\arabic{numberone})}
{
  \usecounter{numberone}
  \setlength{\itemsep}{0pt}
  \setlength{\topsep}{0pt}
  \setlength{\parsep}{0pt}
  \setlength{\partopsep}{0pt}
  \setlength{\leftmargin}{15pt}
  \setlength{\rightmargin}{0pt}
  \setlength{\itemindent}{0pt}
  \setlength{\labelsep}{5pt}
  \setlength{\labelwidth}{15pt}
}}
{
\end{list}
}

\makeatletter
\newcommand*{\envskipline}{\hfill\@beginparpenalty=10000}
\newenvironment{enumerateenv}{\hfill\bgroup\@beginparpenalty=10000\begin{varenumerate}}{\end{varenumerate}\egroup}

\makeatother

\newcommand\ite{l}
\newcommand\ITE{\expandafter\uppercase\expandafter{\ite}}

\newcommand\ext[1]{\ensuremath{\overline{#1}}}

\newcommand{\Comp}{\mathbb C}
\newcommand{\matrixspace}{\mathcal{M}(\mathcal{H})}
\newcommand{\matrixspacearg}[1]{\mathcal{M}(#1)}

\newcommand{\cointoss}{\ensuremath{\mathsf{cointoss}}}
\newcommand{\cointossK}{\ensuremath{\mathsf{ecost}}}
\newcommand{\COINTOSS}{\fun{COINTOSS}}

\newcommand{\mdist}{\mu}
\newcommand{\mdisttwo}{\nu}
\newcommand{\mdistthree}{o}

\title{Expectation-based Analysis of Higher-Order Quantum Programs}

\author{Martin Avanzini$^1$ \and Alejandro D\'{\i}az-Caro$^{2,3}$ \and Emmanuel Hainry$^2$ \and Romain P\'echoux$^2$}

\date{\small $^1$ INRIA Sophia Antipolis Méditerranée, France\\
  $^2$ Université de Lorraine, CNRS, Inria, LORIA, France\\
  $^3$ Universidad Nacional de Quilmes, Argentina}

\begin{document}
\maketitle 

\begin{abstract}
  The paper introduces a new methodology for studying the expected costs of higher-order quantum programs by proposing a first extension of expectation transformer analysis to the higher-order quantum setting.
Towards that end, we consider an abstract quantum language, an extension of PCF featuring unbounded recursion. The language is universal and admits classical and quantum data, as well as a tick operator to account for costs.
  Our quantum expectation transformer translates such programs into a higher-order non-quantum language, enriched  with a type and operations over so called cost-structures.
  By specializing the cost-structure, this methodology makes it possible to study several quantitative properties of quantum programs, such as average case cost (of a quantum gate), almost-sure termination, or expected values. As a show-case, we adapt a type system, capable of reasoning on upper-bounds.
  \end{abstract}

\section{Introduction}
\subsection{Motivations}
\begin{figure*}[t]
\vspace*{-0.2cm}
\includegraphics[scale=0.72]{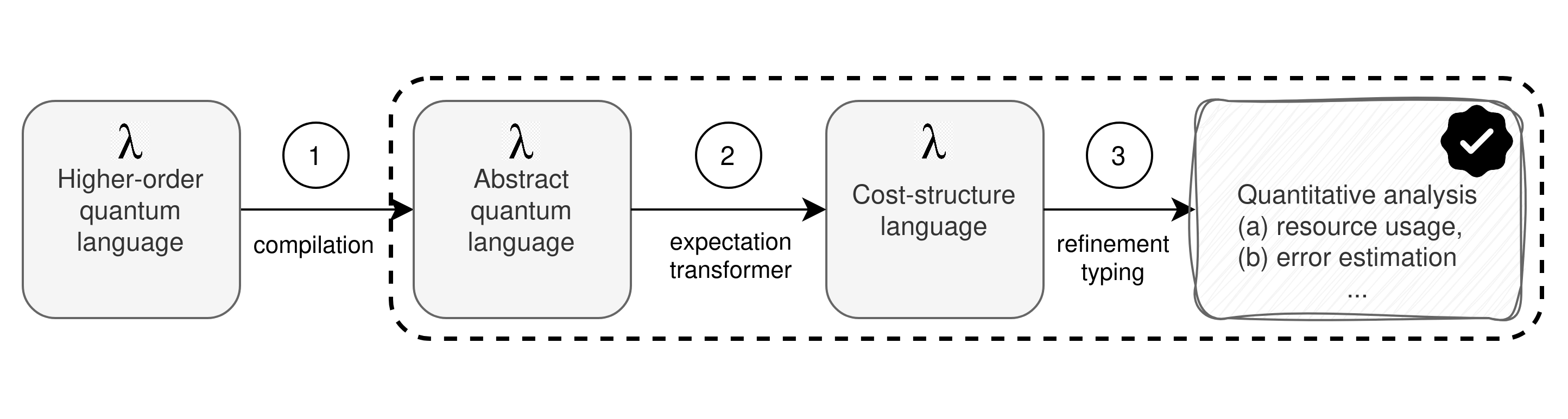}
\caption{Expectation-based Analysis for Higher-Order Quantum Programs}\label{fig:methodology}
\vspace*{-0.2cm}
\end{figure*}
Quantum computing has already demonstrated its ability to solve problems efficiently by improving on the lower bound known for their classical counterparts~\cite{HHL09,S99,G96}.
Popular quantum algorithms solving these problems are typically based on low-level models, such as QRAM~\cite{K96} and quantum circuits~\cite{D89}. Consequently,
there is a need to develop high-level languages, e.g.,  the quantum lambda-calculus~\cite{SV06}
, and to reason about their properties.
The development of such high-level languages cannot be decoupled from low-level considerations in the sense that they must be compilable to a low-level model using a reasonable amount of resources. Indeed, quantum computing is characterized by various physical constraints such as the no-cloning Theorem~\cite{WZ09}, the small number of qubits available on existing machines~\cite{I2024}, or the need to tame quantum decoherence as well as quantum noise~\cite{GKMR14}. The solution to these problems has been to develop static analysis techniques to check that the program satisfies these constraints: non-exhaustively, linear type systems to ensure no-cloning~\cite{AG05}, dependent type systems to control the width and depth of the circuits generated~\cite{CL24,CL25} or weakest preconditions and Hoare logics~\cite{Y11,ZY19} to guarantee quantitative properties of the analyzed programs. In short, the techniques developed must ensure the physicality and feasibility of the programs analyzed.

In recent years, a particular emphasis has been placed on the development of \emph{expectation transformers} techniques~\cite{MM05} for analyzing quantitative data from quantum programs, such as the expected cost, namely the average cost the program experiences along its execution, or the expected value, or the almost-sure termination.  These expectation-based techniques have been successfully exploited to analyze quantitative program  properties and costs, such as the average execution time or the average number of $T$-gates used by a program~\cite{OD20,AMPPZ22,AMPP24,LZBY22}.
 
However, these state-of-the-art techniques are limited to imperative quantum programs as the study of functional programs requires developing a more fine-grained analysis. Hence an open and difficult challenge is to extend expectation transformers to study the resources of \emph{higher-order} \emph{quantum} programs. 

\subsection{Contribution}
We solve the above issue by developing a \emph{compositional} analysis based on expectation transformers to infer expected costs and expected values for higher-order quantum programs. Figure ~\ref{fig:methodology} illustrates the corresponding methodology. 

In Step \Circled{1} of Figure~\ref{fig:methodology}, an arbitrary higher-order quantum program given as input can be compiled to our \emph{abstract quantum language}. 
This abstract language (Figure~\ref{fig:Grammar}) is designed to make resource analysis as simple and feasible as possible: it is simply-typed
with classical control and includes quantum states as first class data.

The abstract language is universal in the sense of~\cite{Barenco95} since its syntax (Figure~\ref{fig:Grammar}) encodes directly the four postulates of quantum mechanics, allowing for the representation of any state vector, the application of arbitrary unitary transformations, and measurements.
By universality, all quantum programming languages, including the ones with complex higher-order
features (e.g., quantum control or quantum pattern-matching), can be considered as input of the analysis and, hence, all the resource-relevant quantum
algorithms can be studied.

Its operational semantics is expressed in terms of probabilistic abstract reduction systems~\cite{BG05} (Figure~\ref{tab:TRS}). The no-cloning Theorem of quantum mechanics is ensured through a linear typing discipline  (Figure~\ref{fig:TS}) using Barber-Plotkin's duality~\cite{BP96}, with a linear and a non-linear arrow.

In Step \Circled{2} of Figure~\ref{fig:methodology},  we define a quantum expectation transformer (Figure~\ref{fig:ect}), that builds upon~\cite{ABL21} and is closely connected to continuation-passing style (CPS) transformation.
The source of the transformation is the abstract quantum language and its target  is a symbolic \emph{cost-structure} language (Figure~\ref{fig:cost:Grammar}), extended with quantum operations for measurements and with cost-structure operations. Cost-structures~\cite{AMPPZ22} correspond to Keimel-Plotkin's Kegelspitzen~\cite{KP17}, roughly speaking domain-theoretic abstract convex sets, extended by a sum operator to account for costs. We define a denotational semantics of the cost-structure language (Figure~\ref{fig:denot}), by giving meaning to recursive definitions in terms of a cost-structures (for example, positive real numbers with infinity).
Under this semantics, the expected cost and expected value of a quantum program can be expressed as the denotation of the cost-structure term (Corollary~\ref{cor:qet}).  As a result, the expectations of a quantum program can be analyzed by looking at the extensional behavior of its transformation (a cost-structure term).

In a last step (\Circled{3}) of Figure~\ref{fig:methodology}, we introduce a refinement type system \cite{FP91,V14} as a method for reasoning about cost-structure terms (Figure~\ref{fig:refine:sub}) similarly to~\cite{KU24}.
Refinement types are a particular form of dependent types, where first-order predicates are used to refine base types.
Soundness (Theorem~\ref{thm:rts}) can be used to derive properties on cost-structure terms: the denotational semantics of a term belongs to the interpretation of its refinement type (Figure~\ref{fig:semforms}) and, as a consequence, sound upper-bounds on the expectations of a quantum program can be derived (Corollary~\ref{cor:qet-refine}). These results crucially rely on an \emph{admissibility} restriction (Definition~\ref{def:admissible}) on the kind of predicates used when dealing with recursive functions.
Restricting to admissible predicates allows us to avoid the usual termination conditions imposed on recursive definitions.
While not all predicates are admissible, this class still includes those of interest for proving upper-bounds on expectations.

We illustrate the main features of the methodology through several examples, putting an emphasis  on the following points: 
\begin{inparaenum}[(i)]
\item The analysis is expressive enough so that the functional versions of the imperative programs of~\cite{AMPPZ22} can be treated, as illustrated by Example~\ref{ss:cointoss}. 
\item In a similar way, all probabilistic programs of ~\cite{ABL21} can be treated through their quantum equivalent. 
\item The analysis applies to higher-order quantum computations, such as the Quantum Walk, as illustrated by Example~\ref{ss:qwalk}.
  \item The analysis can deal with error probability of quantum algorithms, as illustrated by Grover's algorithm in Example~\ref{ss:grover}. 
\item The analysis can be integrated as part of a refinement type system, hence paving the way towards automation.
\end{inparaenum}

\subsection{Related work}
The research pursued in this paper is in the vein of \emph{predicate transformers}, introduced in~\cite{D76,K85} for reasoning about the semantics of
imperative programs. A predicate transformer maps each program statement to a function between two predicates on the program
state space, transforming postconditions (the output predicate) to preconditions (the input predicate).
Consequently, they can be seen as a reformulation of Floyd-Hoare logic~\cite{H69}.
They have been extended to the probabilistic setting, leading to the notion of \emph{expectation
transformers}~\cite{MM05}, where expectations are substituted to predicates.
Expectation transformers have been used to reason about expected costs~\cite{NCH18,AMS20}, expected runtime~\cite{KKMO16,BKKMV23}, as well as expected values~\cite{KK17}.
They also have been extended to deal with quantum imperative programs, leading to the notion of \emph{quantum expectation transformers}~\cite{AMPPZ22,AMPP24,LZBY22}.
In the same vein,~\cite{AMS23} has developed an automated analysis for computing the expected values of probabilistic imperative programs with first-order (recursive) procedures. The paper \cite{ABL21}, from which our paper draws much of its inspiration, has extended the methodology to analyze the expected costs of higher-order probabilistic programs using a Continuation-Passing Style transformation. This approach has been extended in~\cite{KU24} in the form of a refinement type system to deal with expectations of higher-order probabilistic programs.

Alternative logical and type-based approaches have been used in the setting of higher-order randomized programs~\cite{ABG+21,SAB+19,LG17}. By their inherent nature, these techniques contain restrictions which prevent them from computing arbitrary bounds. With regard to quantum languages, several approaches have been developed to control the resources (width, depth) of a circuit generated by a program using dependent types~\cite{CL24,CL25} or to characterize quantum polynomial time through several restrictions~\cite{HPS23,LZ10,Y20}. But these approaches cannot be transposed to compute expectations.

\subsection{A bird's eye view on our analysis}\label{ss:bev}
Consider the $\cointoss$ program below as an illustrative example of our methodology. Section~\ref{sec:examples} discusses richer examples involving genuine quantum data:
\begin{flalign*}
  ~~& \cointoss : \qbty \multimap \qbty & \\[-2mm]
  & \letrec{\cointoss}{\sx}{
    \caseof**{\tick{(\meas \sx)}}
    [\inj_0(\sx_0)]{\sx_0}
    [\inj_1(\sx_1)]{\cointoss \app \unary[H]{\sx_1}}
    }
\end{flalign*}
Applied on an input quantum state $\ket{\psi}$,
the recursively defined quantum procedure $\cointoss$ performs
a quantum measurement on the first qubit of $\ket{\psi}$, and,
depending on the outcome of the measurement encoded by the constructor $\inj_0$ or $\inj_1$, either returns
the post collapsed quantum state, or recursively iterates after applying the Hadamard gate.
Measurements and unary gates such as the Hadamard gate act on the first qubit, but
can be applied to arbitrary qubits using swap gates.
The tick construct $\tick{(-)}$ is used to explicitly impose a notion of cost; here, the number of measurements performed.
Finally, note that $\cointoss$ is attributed a linear type
$\qbty \multimap \qbty$, thereby ensuring that the no-cloning theorem
of quantum mechanics is adhered to.

Measurements are modeled as probabilistic operations. On a quantum state $\ket{\psi}$,
measurement returns $\inj_i(M_i\ket{\psi})$ with probability $p_i(\ket{\psi})$ ($i \in \{0,1\}$),
with $M_i\ket{\psi}$ and $p_i({\ket{\psi}})$ the post collapsed quantum state and associated probability of measuring outcome $i$, respectively.
For example, if applied to the quantum state $\ket{\psi} \triangleq \sfrac{1}{2}\ket{001} + \sfrac{1}{\sqrt{2}}\ket{011} + \sfrac{1}{2} \ket{100}$,
$\cointoss$ returns  $M_0\ket{\psi} = \sfrac{\sqrt{2}}{\sqrt{3}}\ket{001} + \sfrac{1}{\sqrt{3}}\ket{011}$
with probability $p_0({\ket{\psi}})= \sfrac{3}{4}$,
or recursively iterates on $H M_1\ket{\psi} = H \ket{100} = \sfrac{1}{\sqrt{2}}\ket{000}-\sfrac{1}{\sqrt{2}}\ket{100}$
with probability $\sfrac{1}{4}$.

In the worst-case, the cost of $\cointoss$---the number of measurements performed---is unbounded.
However the probability of $\cointoss$ performing infinitely many measurements, precisely the probability of an infinite sequence of measurements with outcome $1$, is zero (the program is almost-surely terminating). The expected number of measurements performed
by this program is thus a quantity more informative than a worst-case bound.
The main concern of this work is to be able to reason formally on such expectations.

To this end, we proceed in two steps.
First, the expectation transformer defined in this paper translates a term of the abstract quantum language to a cost-structure term that
precisely computes the cost of the quantum term. For instance,
on $\cointoss$ it produces
\begin{align*}
  & \cointossK : \qbty \Rightarrow \realty \\[-1mm]
  & \letrec{\cointossK}{\sx}{1 +p_{1}(\sx)\times (\cointossK \app \unary[H]{(M_1{\sx})})}
\end{align*}
where $\realty$ is the type of non-negative reals completed with $\infty$ and where $\Rightarrow$ is a non-linear arrow.
There is no linearity check on this intermediate term as it just denotes
an expression referring to the cost of $\cointoss$ and has no computational meaning.
To see how this term refers to the cost of $\cointoss$,
reconsider the state $\ket{\psi} = (\sfrac{1}{2}\ket{001} + \sfrac{1}{\sqrt{2}}\ket{011} + \sfrac{1}{2} \ket{100})$
and observe
\begin{align*}
      \cointossK \app \ket{\psi} 
      & = 1 + p_1(\ket{\psi})\times (\cointossK \app \unary[H]{(M_1{\ket{\psi}})}) \\
      & = 1 + \sfrac14 \times (\cointossK \app (\sfrac1{\sqrt{2}}\ket{000}+\sfrac1{\sqrt{2}}\ket{100}))\\
      & = 1 + \sfrac14 \times (1+ \sfrac12 \times (\cointossK \app (\sfrac1{\sqrt{2}}\ket{000}+\sfrac1{\sqrt{2}}\ket{100}))\\
      & = \dots  = 1 + \frac14 \sum_{n=0}^{\infty} {\frac1{2^n}} = \frac32
\end{align*}
Thus $\cointossK\app\ket{\psi}$ converges to $\frac32$ which is exactly to the expected number of measurements performed by program $\cointoss$ on input $\ket{\psi}$.
It can be generalized to any input as $\cointossK = \lambda \sx . 1 + 2 p_{1}(\sx)$.

While $\cointossK$ precisely reflects the expected cost of the program $\cointoss$, it is not informative as it generally requires the computation of some limits. We rather are interested in closed form expression such as $ \lambda \sx . 1 + 2 p_{1}(\sx)$. Towards that end, as a second step, we adapt a \emph{refinement type system} that allows us to derive upper bounds on such  closed form cost-structure expressions. In refinement types, the idea is to attach logical information to base types, enabling reasoning about program properties at the level of types. On the $\cointoss$ example, we can get rid of the fixpoint computation by checking a typing judgment of the kind
\[
  \cdot  \INFr[] \cointossK : (\sx: \qbty) \Rightarrow \refine{\realty}{\tz \leq  1 + 2 p_{1}(\sx)},
\]
meaning that {\cointossK} can be typed under the empty context $\cdot$ as mapping a quantum state $\sx$ to an element of the set of non-negative or infinite real numbers $\tz$ that satisfy the formula $\tz \leq  1 + 2 p_{1}(\sx)$.
Hence, by providing an approximation (upper-bound), the type system allows the programmer to formally check quantitative properties of quantum programs.

We have demonstrated the interest of our approach in the context of expected cost analysis, but as we shall see later, this approach can also be used in the context of expectation based functional analysis.

\section{Preliminaries}

In this section, we introduce a notion of cost-structure, based on Kegelspitzen~\cite{KP17}, that will be used as a semantic domain for programs.
Then we define a notion of weighted probabilistic abstract reduction systems, a probabilistic abstract model used to describe the operational semantics of quantum programs.
This model, combined with cost-structures allows us to define the notions of expected cost and expected value.

\subsection{Kegelspitzen and Cost-structures}\label{ss:k}
Let us recall some basic
notions on domain theory, using terminology
from~\cite[Section 8]{Winskel:93}.  A partial order $(\mathcal{D},\sqsubseteq)$
is called a $\omega$-\emph{complete partial order} ($\omega$-\emph{cpo}) if any \emph{$\omega$-chain} $d_0
\sqsubseteq d_1 \sqsubseteq \cdots$ has a least upper bound $\lub_{n
  \in \N} d_n$ in $\mathcal{D}$.
Whenever $\mathcal{D}$ has a least element, it will be denoted by $\bot$.
Relevant examples of $\omega$-cpos are \emph{discrete cpos} where the order is
the identity relation,
$(\Rext,\leq)$ with $\Rext \triangleq \Rpos \cup \{\infty\}$ the non-negative real numbers extended by the top element $\infty$,
and the function space $([\mathcal{D} \tocont \mathcal{E}],\sqsubseteq)$ of \emph{continuous functions}
between $\omega$-cpos $\mathcal{D}$ and $\mathcal{E}$, ordered point-wise, i.e., $f \sqsubseteq g$
iff for all $d \in \mathcal{D}$, $f(d) \sqsubseteq g(d)$.  A function $f :
\mathcal{D} \to \mathcal{E}$ is continuous if it is \emph{monotone} ($d \sqsubseteq e$
implies $f(d) \sqsubseteq f(e)$) and $\lub_{n \in \N} f(d_n) =
f(\lub_{n \in \N} d_n)$ for all $\omega$-chains $d_0 \sqsubseteq d_1
\sqsubseteq \cdots$.

\begin{definition}
A barycentric algebra $\mathcal{A}$ is a set equipped with a relation $\cdot \bary{\cdot} \cdot \subseteq \mathcal{A} \times [0, 1] \times \mathcal{A}$ satisfying the following laws for all $a, b,c \in \mathcal{A}$ and  $r,s\in[0, 1]$:
\begin{align*}
a\bary1 b &= a & a\bary{r} b &= b\bary{1-r}a\\
a\bary{r} a &= a &
(a\bary{r} b)\bary{s} c &= a \bary{rs} (b\bary{\frac{s-rs}{1-rs}}c)
\end{align*}
provided $rs\neq 1$. The relation $\bary{r}$ is called a barycentric operation.
\end{definition}

\begin{definition}
  A pointed barycentric algebra is a barycentric algebra $\mathcal{A}$ equipped with a distinguished element $\bot \in \mathcal{A}$.
\end{definition}
For a pointed barycentric algebra, we can define scalar multiplication for $r\in[0,1]$ and $a\in \mathcal{A}$ by $r\cdot a \triangleq a\bary{r}\bot$.
The barycentric operations can also be extended to define a convex sum operation for $a_i\in \mathcal{A}$ and $r_i \in [0,1]$ such that
$\sum_{i=1}^{n} r_i\cdot \leq 1$:
$$\sum_{i=1}^{n} r_i\, a_i \triangleq\begin{cases}
\bot & \text{if }n=0\\
 a_n  & \text{if }r_n=1\\
a_n \bary{r_n}\sum_{i=1}^{n-1}\frac{r_i}{1-r_n}\cdot a_i & \text{otherwise}
\end{cases}$$

An $\omega$-Kegelspitze~\cite{KP17} is a structure that combines an $\omega$-cpo with continuous barycentric operations, thus giving a framework to compute expectations.

\begin{definition}
  An $\omega$-Kegelspitze $\keg$ is a pointed barycentric algebra equipped with an $\omega$-cpo such that the scalar multiplication is continuous in both its arguments and for all $r\in[0, 1]$, the barycentric operation $\bary{r}$ is continuous in both arguments.
\end{definition}
As all the Kegelspitzen we use are $\omega$-Kegelspitzen, from now on, we will call them Kegelspitze for short.
Note that denoting the distinguished element by $\bot$ is coherent with its use as the least element since, in a Kegelspitze, by continuity of the scalar multiplication, for all $a$, we have $0\cdot a \sqsubseteq 1\cdot a$ hence $\bot \sqsubseteq a$.

Finally, a cost-structure~\cite{AMPPZ22} is a Kegelspitze equipped with an operation for adding non-negative costs.
\begin{definition}\label{def:cs}
  A cost-structure $\kost$ is defined by $\kost \triangleq (\keg, \tplus)$ for some $\omega$-Kegelspitze $\keg$
  equipped with an operator $\tplus:\Rext\times \keg \to \keg$ that is continuous and satisfies the following laws for all $a,b \in \Rext$, $c,d \in \keg$, and $r \in[0,1]$:
$$0 \tplus c  = c  \qquad a \tplus (b \tplus c)  = (a + b) \tplus c $$
 $$   (a \tplus c) \bary{r} (b \tplus d)  = (a \bary{r} b) \tplus (c \bary{r} d) $$
\end{definition}

\begin{example}
  For $r\in[0, 1]$, define $a\bary{r}b \triangleq r\cdot a+(1-r)\cdot b$ which is a barycentric operation.
  Then both, $[0,1]$ and $\Rext$ with $\bot\triangleq0$ and barycentric operation $\bary{r}$
  form an $\omega$-Kegelspitze under the standard ordering $\leq$.
  The induced scalar multiplication $\cdot$ corresponds to standard multiplication.

  The Kegelspitze $\Rext$ can be complemented with the standard addition as operator $\tplus$ to form a cost-structure $(\Rext, +)$.
  Any Kegelspitze $\keg$ can be extended to a cost-structure $(\keg, \fplus)$
  using the forgetful addition $\fplus$ defined by $a \fplus b \triangleq b$.
\end{example}

\subsection{Probabilistic Abstract Reduction System}\label{s:pars}

\emph{Probabilistic Abstract Reduction Systems} (PARSs) were introduced as a means to study reduction systems with probabilistic behavior~\cite{BG05}. We will use PARSs as a means to endow
an operational semantics to our quantum language. We quickly recap central definitions.

Discrete \emph{subdistributions} $\delta$ over a set $A$ are functions $\delta : A \to [0,1] $ with countable support $\supp(\delta) \triangleq \{ a \in A \mid \delta(a) \neq 0 \}$, mapping each element $a \in A$ to a probability $\delta(a)$ such that $|\delta| \triangleq \sum_{a \in \supp(\delta)} \delta(a) \leq 1$. In the particular case where $|\delta | = 1$, $\delta$ is simply called a \emph{distribution}. Any (sub)distribution $\delta$ can be written as $\lmulti a^{\delta(a)} \rmulti_{a \in \supp(\delta)}$. The set of subdistributions over $A$, denoted by $\Distr(A)$, is closed under denumerable convex combinations $\sum_{i \in I} p_i \cdot \delta_i \triangleq \lambda a.\sum_{i \in I} p_i\cdot \delta_i(a)$, with $p_i \in [0,1]$ and $\sum_{i} p_{i \in I} \leq 1$.

A \emph{deterministic weighted Probabilistic Abstract Reduction System} (PARS) on $A$ is a deterministic ternary relation $\cdot \rew{\cdot} \cdot \subseteq A \times \Rpos \times  \Distr(A)$ such that if $a \rew{c} \delta$ holds, then $a \in A$ reduces with cost $c \in \Rpos$ to the subdistribution $\delta \in \Distr(A)$.
When there is no rule $a \rew{c} \delta$, we write $a \not\rew{}$ and the object $a \in A$ is called \emph{terminal}.

Every deterministic weighted PARS over $A$ can be lifted to a relation over distributions: $\cdot \drew{\cdot} \cdot  \subseteq \Distr(A) \times \Rpos \times \Distr(A)$ using the following rules:
\newcommand{\TOP}{\rule{0pt}{1ex}}
\[\scalebox{0.9}{
  \Infer{
    \vphantom{\drew{{\sum}_{i \in I} p_i \cdot c_i}{}}
    \{ a^1 \} \drew{0} \{ a^1 \}
  }{
    a \not\rew
  }
  }
  \hspace*{2mm}
  \scalebox{0.9}{
  \Infer{
    \vphantom{\drew{{\sum}_{i \in I} p_i \cdot c_i}{}}
    \{a^1\} \drew{c} \delta
  }{
    a \rew{c} \delta
  }
  }
  \hspace*{2mm}
  \scalebox{0.9}{
  \Infer{
    {\sum}_{i \in I} p_i \cdot \delta_i \drew{{\sum}_{i \in I} p_i \cdot c_i}{}  {\sum}_{i \in I} p_i \cdot \delta_i'
  }{
    \forall i \in I,\ \delta_i \drew{c_i} \delta_i'
  }
  }
\]
If $\delta \drew{c} \epsilon$, then the subdistribution $\epsilon$ is obtained by rewriting all the elements in
the support of $\delta$. The weight $c$ signifies the (expected) cost of this reduction step.
Notice that since $\rew{}$ is deterministic, so is $\drew{}$, in the sense that
${\epsilon_1 \drew<{c_1} \delta \drew{c_2} \epsilon_2}$ implies $\epsilon_1 = \epsilon_2$ and $c_1 = c_2$.
Thus, in particular, for every $\delta \in \Distr(A)$ there is precisely one (infinite) reduction sequence
\[
  s: \delta = \delta_0 \drew{c_1} \delta_1 \drew{c_2} \delta_2 \drew{c_3} \cdots
  .
\]
The distributions $\delta_n$ give the $n$-step reduct of $\delta$,
the sum $\sum_{i=1}^{n} c_i$ signifies the expected cost to reach $\delta_n$ in $n$ steps.
As normal forms $a \not\rew$ remain stable under reductions $\drew{}$, the sequence $s$ can be seen to converge to a \emph{subdistribution over terminal objects},
with an \emph{overall expected cost} $\sum_{i=1}^{\infty} c_i$.

The \emph{expected cost function} $\ecost[\drew]{\cdot} : A \to \Rext$ is defined by:
\[
  \ecost[\drew]{a} \triangleq \sup \{ \sum_{i = 0}^n c_n \mid \{a^1\} = \delta_0 \drew{c_1} \cdots \drew{c_n} \delta_n \}\,.
\]
The \emph{normal form distribution} $\nf[\drew](a) \in \Distr(A)$ is defined by
\[
  \nf[\drew](a) \triangleq \sup \{ \NF[\drew]{\delta_n} \mid \{a^1\} = \delta_0 \drew{c_1} \cdots \drew{c_n} \delta_n \}\,,
\]
for $\NF[\drew]{\delta} \in \Distr(A)$ the subdistribution of normal forms in $\delta$, i.e.,
$\NF[\drew]{\delta}(a) = \delta(a)$ if $a \not \rew$, and $\NF[\drew]{\delta}(a) = 0$ otherwise.
Here, the supremum in the definition of $\nf[\drew](\delta)$ is taken with respect to the point-wise ordering.
This supremum always exists, as the sequence $(\NF[\drew]{\delta_n})_{n \in \N}$ is non-decreasing.
Finally, for a given Kegelspitze $\keg$, we can define
\emph{the expected value function} $\evalue[\drew]{\cdot}{\cdot}:A\to(A\to\keg) \to \keg$ by
\[
  \textstyle
  \evalue[\drew]{a}{f} \triangleq \sum_{b \in A} \nf[\drew](a)(b) \cdot f(b).
\]
In the particular case where $\keg = [0,1]$, $\evalue[\drew]{a}{P}$ gives the probability that
predicate $P: A \to \{0,1\}$ holds when $a$ has been fully reduced.

\section{Abstract Quantum Language}\label{ss:ts}
\subsection{Syntax}

\begin{figure}[t]
  \centering
  \columnwidth=\linewidth
  \begin{bnf}
    \Terms \ni \st
    & \sx | \lambda \sx.\st | \st_0 \app \st_1 & $\lambda$-calculus
    \\
    & \ket\psi | \unary{\st} | \meas \st | \st_0 \qtimes \st_1 & Quantum data
    \\
    & \cons(\vec{\st_0};\vec{\st_1} ) & Constructor
    \\
    & \caseof {\st}[\cons(\vec\sv_0;\vec\sv_1)]{\st_0}{\st_1} & Destr.
    \\
    & \letrec{\sf}{\sx}{\st} & Recursion
    \\
    & \tick \st  & Cost
  \end{bnf}
  \vspace*{0.2cm}
  \begin{bnf}
    \Values \ni \sv
    & \sx
    | \lambda \sx.\st
    | \ket{\psi}
    | \cons(\vec \sv_0;\vec \sv_1)
    | \letrec{f}{\sx}{\st}
  \end{bnf}
  \caption{Abstract Quantum Language}
  \label{fig:Grammar}
\end{figure}

We consider a  functional quantum programming language with classical control  in the style of~\cite{DiazcaroAPLAS17}.
It can be viewed as an adaptation of the probabilistic functional language of~\cite{ABL21} to the quantum paradigm. The corresponding sets of terms $\Terms$ and values $\Values$ are defined by the grammar of Figure~\ref{fig:Grammar},
where
variables $\sx, \sf $ belong to a fixed countable set $\Var$.
The set $\Terms$ includes:
\begin{varitemize}
  \item the standard constructs of the lambda calculus;
  \item  four \emph{quantum constructs}: \emph{quantum states} $\ket{\psi}$, \emph{unitary applications} $\unary{\st}$, applying the \emph{unitary transformation} $\unary$ to a term $\st$, 
\emph{measurement} $\meas \st$, and \emph{tensoring} $\st_0 \qtimes \st_1$;
  \item \emph{constructor applications} $\cons(\vec{\st_0};\vec{\st_1})$ and the corresponding \emph{destructors}
$\caseof{\st}[\cons(\vec{\sx}_0;\vec{\sx}_1)]{\st_0}[\sy]{\st_1}$;
  \item the \emph{letrec} construct $\letrec fxt$ to define recursive functions;
  \item the \emph{tick} construct $\tick t$ to account for costs.
\end{varitemize}
Let $\mathsf{fv}(\st)$ be the set of free variables in $\st$.
We denote by $\cTerms$ and $\cValues$ the sets of closed terms and values, respectively.

Each quantum state  $\ket \psi$ comes with a length $l(\ket \psi) \in \mathbb N$ denoting the number of qubits represented by $\psi$. A state $\ket \psi$ is a vector of norm $1$ (i.e., $\|\ket{\psi}\|=1$) in the Hilbert space $
\Comp^{2^{l(\ket \psi)}}$.

Each unitary transformation $\unary$ comes with a length $l(\unary) \in \mathbb N$  denoting the number of qubits on which $\unary$ can be applied.  For a given Hilbert space $\mathcal H$ of dimension $n$, let $\matrixspace$ be the set of complex square matrices acting on $\mathcal H$.
Given $M \in \matrixspace$, $M^\dagger$ denotes the transpose conjugate of $M$, and $I_{n}$ (or simply $I$, when $n$ is clear from context) denotes the identity matrix over $\matrixspace$.
A \emph{unitary operator} is a matrix
 $U \in \matrixspace$ such that $UU^\dagger = U^\dagger U = I$.
Each symbol $\unary$ is associated with a unitary operator in $\matrixspacearg{\Comp^{2^{l(\unary)}}}$.

In a constructor application $ \cons(\vec{\st_0};\vec{\st_1} ) $, the classical and quantum operands are separated by a semi-colon. This distinction will be used by the typing discipline described in Section~\ref{ss:ts},
to enforce the no-cloning theorem.
We suppose the language includes as constructors at least injections $\inj_0$ and $\inj_1$,
that will be used to model the outcome of a quantum measurement.

\begin{figure}[t]
  \columnwidth=\linewidth
  \begin{center}
  \begin{align*}
   (\lambda \sx.\st)\app \sv &\lra{0} \{\st[\sv/\sx]^1\}\\
    \unary\ket{\psi} &\lra{0} \{(\ext{U}\,\ket{\psi})^1\} \\
    \meas{\ket\psi} &\lra{0}\{\inj_i(;\!M_i\ket{\psi})^{p_i \ket \psi}\}_{i\in\{0,1\}}\\
   \ket{\psi_0}\qtimes\ket{\psi_1}&\lra{0}\{(\ket{\psi_0}\otimes\ket{\psi_1})^1\}  \\
    \caseof*{\sv}[\cons(\vec{\sx}_0;\vec{\sx}_1)]{\st_0}[\sy]{\st_1}\!
     & \lra{0} \!\!
       \begin{cases}
         \!\{\st_0[\vec \sv_0/\vec \sx_0,\vec \sv_1/\vec \sx_1]^1\} & \!\!\text{if $\sv=\cons(\vec \sv_0;\vec \sv_1)$} \\
         \!\{\st_1[\sv/\sy]^1\} & \!\!\text{otherwise}
       \end{cases}
    \\
    (\letrec \sf\sx\st) \app \sv &\lra{0} \{\st[\letrec \sf\sx\st/\sf, \sv/\sx]^1\}\\
    \tick \st&\lra{1} \{\st^1\}
    \\
    E[\st]&\lra{c} \{E[\sts]^{\delta(\sts)} \}_{\sts \in \supp(\delta)} \quad \text{if } \st \lra{c} \delta
  \end{align*}
  \end{center}

\raggedright
where $\EC$ $E$ are defined by:
  \begin{align*}
E \ ::=\ & []\ |\ E \app \sv\ |\ \st \app E\ |\  \unary{E} \ |\ \meas{E}\  |\ E \qtimes \sv \   |\ \st \qtimes E \ | \ \cons(\vec{\st}_0;\vec{\st}_1,E,\vec{\sv}_1) \\
  &   \cons(\vec{\st}_0,E,\vec{\sv}_0;\vec{\sv}_1)\  |\  \caseof{E}{\st_0}{\st_1} 
\end{align*}

  \caption{PARS of the Abstract Quantum Language}
  \label{tab:TRS}
\end{figure}

\subsection{Operational Semantics}\label{ss:os}
In Figure~\ref{tab:TRS}, we endow our quantum language with a call-by-value operational semantics,
given in terms of the (deterministic) PARS
\[
  \cdot \lra{\cdot} \cdot \subseteq \cTerms \times \Rpos \times  \Distr(\cTerms)
\]
on ground terms.
We briefly comment the rules of Figure~\ref{tab:TRS} for the quantum and non-standard constructs.

The semantics of quantum gate application is defined using the extension $\ext{U}$.
Let
$\ext{U}_n$ denote the extension of a unitary $U$ to a unitary of length $n \in \mathbb{N}$
and, by convention, the identity when $n \leq l(U)$.
Formally, $\ext{U}_n \defsym U \otimes I_{n - l(U)}$ for $l(U) \leq n$, and otherwise $\ext{U}_n \defsym I_n$.
We write $\ext{U}$ for $\ext{U}_{l(\ket{\psi})}$, when applied to a quantum state $\ket{\psi}$.

The rule for measurement $\meas$is the only probabilistic rule. It reduces to a probability distribution consisting of (at most) two terms, corresponding to the two possible measurement outcomes.
Measurement is designed to behave as a sampling operator for our soundness proof. This is the reason why these two injections $\inj_i$, for $i \in \{0,1\}$, are used as marks of a measurement outcome.
Here, the probabilities $p_i \ket{\psi}$ and the post-collapse quantum states $M_i \ket{\psi}$, for $i \in \{0,1\}$, are defined as usual:
\begin{align*}
  p_i  \ket{\psi} &\triangleq  tr(\ext{\ketbra{i}} \ketbra{\psi} )
  &
    M_i \ket{\psi} & \triangleq
                     \begin{cases}
                       \frac{\ext{\ketbra{i}}}{\sqrt{p_i \ket  \psi}} \ket{\psi}& \text{if } p_i \ket \psi\neq 0 \\
                       \ket{\psi}   & \text{if } p_i \ket \psi= 0.
                     \end{cases}
\end{align*}
For the sake of having a total definition, we have defined $M_i$ as an identity, when $p_i \ket \psi = 0$.
Note however that in this case $M_i \ket{\psi} = \ket{\psi}$ will not occur in the support of the reduct distribution, as it will be attributed probability $0$.

The rule for $ \tick{}$ is the only rule that accounts for a cost and the last rule closes
the reduction relation under \emph{evaluation contexts}.

We denote by $\disto{}$ the lifting of $\lra{}$ to distributions $\Distr(\cTerms)$.

\subsection{Type System}

  The set of \emph{basic types} $\Basic=\{\typ , \typ_1,\ldots \}$ is defined as the disjoint union of the set
  of \emph{classical basic types} $\Cbasic=\{\ctyp,\ctyp^1,\ldots \}$
  and \emph{quantum basic types} $\Qbasic=\{\qtyp,\qtyp^1,\ldots \}$, the latter including distinguished
  types $\qbty$ of quantum states and $\Out$ for measurement outcomes.
  This distinction will be used to delineate those classical values that do not have to adhere to the no-cloning theorem.

Each type $\typ \in \Basic$ comes with a finite set of constructor symbols $\Cons(\typ)$, each constructor symbol in $\Cons(\typ)$ having a fixed type signature of the shape $\cons :: \ctyp^1 \times \ldots \times \ctyp^n \times \qtyp^1 \times \ldots \times \qtyp^m \to \typ$, with  $m,n \in \mathbb{N}$.  We will sometimes write $\cons :: \vec \ctyp ;\! \vec\qtyp \to \typ$ as a shorthand notation, and also we may write just $\typ$ when $n=m=0$.

For each $\ctyp \in \Cbasic$, the constructor symbols in $\Cons(\ctyp)$ cannot take quantum basic types in their signature, i.e., $m=0$. Hence each $\cons$ in $\Cons(\ctyp)$ has a signature $\cons :: \vec \ctyp;\! \to \ctyp$. 
The type $\Nat$ of numerals such that $\Cons(\Nat) \triangleq \{\suc,\zero\}$, with $\suc:: \Nat; \to \Nat$ and $\zero::\Nat$ is an example of type in $\Cbasic$.

The type $\qbty \in \Qbasic$ has no constructor, i.e.,  $\Cons(\qbty)=\emptyset$.  The type of measurement outcomes $\Out\in \Qbasic$ contains the two injections $\Cons(\Out)\triangleq \{\inj_0,\inj_1\}$ of type signature $\inj_i ::\,; \qbty \to \Out$.
The type $\Qlist$ of quantum lists can be defined $\Cons(\Qlist) \triangleq \{\qnil, \qcons \} $, with $\qnil:: \Qlist$ and $\qcons::\,;\qbty \times \Qlist \to \Qlist$.

\begin{figure*}[t]
  \begin{center}
    \begin{minipage}{.45\textwidth}
      \begin{bnf}
        \CTypes \ni \cty  &    \ctyp | \cty \Rightarrow \ty | \ty \multimap \ty & Duplicable types\\
        \Types \ni \ty  & \qtyp | \cty & Types\\
      \end{bnf}
    \end{minipage}
\qquad
    \begin{minipage}{.4\textwidth}
      \begin{bnf}
        \cctx & \emptyset | \cctx, \sx: \cty & Exponential context \\
        \qctx & \emptyset | \qctx, \sx:\ty &  Affine context
      \end{bnf}
    \end{minipage}
  \end{center}
  \vspace{5mm}

  \def\inferLineSkip{1mm}
  \begin{infers}
    \Infer[sty][axq][ax_c]{\cctx,\sx:\cty;\qctx\vdash \sx:\cty}{}
    &
    \Infer[sty][ax][ax]{\cctx;\qctx,\sx:\ty\vdash \sx:\ty}{}
    &
    \Infer[sty][lini][\multimap_i]{\cctx;\qctx\vdash\lambda \sx.\st: \ty \multimap \ty'}{\cctx;\qctx,\sx:\ty \vdash \st:\ty'}
    &
    \Infer[sty][line][\multimap_e]{\cctx;\qctx_0,\qctx_1\vdash \st_0 \app \st_1:\ty'}{\cctx;\qctx_0\vdash \st_0:\ty \multimap \ty' & \cctx;\qctx_1\vdash \st_1:\ty}
    \\
    \Infer[sty][expi][\Rightarrow_i]{\cctx;\qctx\vdash\lambda \sx.\st:\cty\Rightarrow \ty}{\cctx,\sx:\cty;\qctx\vdash \st:\ty}
    &
    \Infer[sty][expe][\Rightarrow_e]{\cctx;\qctx\vdash \st_0 \app \st_1:\ty}{\cctx;\qctx\vdash \st_0:\cty \Rightarrow \ty & \cctx;\emptyset\vdash \st_1:\cty}
    &
    \Infer[sty][st]{\cctx;\qctx\vdash\ket{\psi}:\qbty}{\strut}
    &
    \Infer[sty][un]{\cctx;\qctx\vdash U\, \st:\qbty}{\cctx;\qctx\vdash \st:\qbty}
    \\
    \Infer[sty][meas]{\cctx;\qctx\vdash\meas{\st}:\Out}{\cctx;\qctx\vdash \st:\qbty}
    &
    \Infer[sty][prod][\otimes]{\cctx;\qctx_0,\qctx_1\vdash \st_0\qtimes \st_1:\qbty}{\cctx;\qctx_0\vdash \st_0:\qbty & \cctx;\qctx_1\vdash \st_1:\qbty}
    &
    \Infer[sty][cons]{\cctx; \vec \qctx_0, \vec \qctx_1 \vdash \cons(\vec\st_0;\vec \st_1):\typ}{\cctx;\vec \qctx_0 \vdash \vec \st_0:\vec \ctyp &   \cctx;\vec \qctx_1 \vdash \vec \st_1: \vec \qtyp  &\cons :: \vec \ctyp ;\! \vec \qtyp \to \typ}
    \\
    \Infer[sty][muche][case]{
      \cctx;\qctx_0,\qctx_1\vdash\caseof{\st}[\cons(\vec{\sx_0}; \vec{\sx_1})]{\st_0}[\sy]{\st_1} : \ty
    }{
      \begin{array}{@{}l@{}}
        \cctx;\qctx_0 \vdash \st:\typ\\
        \cons \!:: \!\vec \ctyp ;\! \vec \qtyp \to \typ\!\!\!\!\!\!\!
      \end{array}
      &
      \begin{array}{@{}r@{}l@{}}
        \cctx ,\vec{\sx}_0:\vec{\ctyp};\qctx_1 ,\vec{\sx_1}:\vec{\qtyp} & {} \vdash \st_0:\ty\\
        \cctx' ,\qctx_1' & {} \vdash \st_1:\ty \\
      \end{array}
      &
      \cctx';\qctx_1' =
      \begin{cases}
        \cctx', \sy\!:\!\typ;\qctx_1' & \text{if $\typ \in \Cbasic$,}\\
        \cctx';\qctx_1', \sy\!:\!\typ & \text{if $\typ \in \Qbasic$.}
      \end{cases}
    }
 &
 \Infer[sty][rec]{\cctx;\emptyset \vdash \letrec \sf\sx\st:\cty }{\cctx,\sf:\cty ;\emptyset \vdash \lambda \sx.\st: \cty }
 &
    \Infer[sty][tick][\checkmark]{\cctx;\qctx\vdash\tick \st:\ty}{\cctx;\qctx\vdash \st:\ty}
  \end{infers}
  \caption{Types and Rules for the Abstract Quantum Language.}
  \label{fig:TS}
\end{figure*}

The set of  \emph{duplicable types} $\CTypes$ contains all types $\cty$, that can be cloned, and is defined in Figure~\ref{fig:TS} as the set of types that are either  classical basic types or functional types. There are two distinct arrow types for abstractions $\ty \multimap \ty'$ and $\cty \Rightarrow \ty$, depending on whether the function takes affine data (including quantum data) of type $\ty$ as input or exponential data (only duplicable data) of type $\cty$ as input.
The set of types $\Types$ extends strictly $\CTypes$ with quantum basic types.

The type discipline of Figure~\ref{fig:TS} is presented in the form of a dual intuitionistic type system à la~Barber and Plotkin~\cite{BP96}.  Typing judgments have the shape
    $\cctx;\qctx \vdash \st : \ty$, for some term $\st \in \Terms$ and some type $\ty$.    Given a sequence of affine contexts $\vec \qctx=\qctx_1,\ldots,\qctx_n$, a sequence of terms $\vec \st=\st_1,\ldots,\st_n$, and a sequence of types $\vec \typ = \typ_1,\ldots,\typ_{n}$, we write $\cctx;\vec \qctx \vdash \vec \st : \vec\ty$ when $\cctx; \qctx_i \vdash \st_i : \ty_i$ hold for all $i \leq n$.
  The typing environment $\Gamma;\Delta$ is divided into an \emph{exponential typing context} $\Gamma$ and an \emph{affine typing context} $\Delta$.  The exponential context deals with all the clonable data, including the type $\ctyp$ for classical basic data.
The affine context aims at ensuring that variables of type $\qtyp$, corresponding to quantum data, can be used at most once, hence forbidding their cloning.
Notice that this allows discarding variables of type $\qtyp$. Since the quantum data is given within the terms and there is no construction allowing a variable to refer to a part of a quantum state, a variable of type $\qtyp$ cannot be entangled with another state and can thus be safely discarded.
The outcome of a measurement is typed by the type $\Out\in\Qbasic$ to emphasize that measuring a quantum state results in a classical outcome together with a post-collapse quantum state (hence non-clonable).
The  rule $(\Rightarrow_e)$ for classical implication and the rule $(\mathsf{rec})$ for recursion  require the affine context to be empty to avoid cloning quantum data during the reduction.

  From now on, we will only consider well-typed terms.

\section{Cost-Structure Transformers}

We now define a higher-order expectation transformer, that can be viewed as \emph{continuation-passing style} (CPS) transformation, 
  mapping terms of the abstract quantum language to \emph{cost-structure terms}.
\subsection{Cost-Structure Language}
As a first step, we introduce the \emph{cost-structure language}, an intermediate language designed for
the expectation-based quantitative analysis of quantum programs.
This intermediate language serves as a symbolic framework for reasoning about quantum programs. This viewpoint justifies including real numbers and permits quantum data duplication.

The set $\TTerms$ of \emph{cost-structure terms} and the set $\TValues$ of \emph{cost-structure values} are defined by the grammar of \Cref{fig:cost:Grammar}, where $\tx, \tf $ belong to a fixed countable set $\VAR$ of variables. 
Again, $\mathsf{fv}(\te)$ denotes the free variables of $\te$.
As terms of the cost-structure language are in essence results of a CPS translation, redexes will occur only in head position.  This is reflected in the syntax and can be viewed as a kind of  A-normal form~\cite{SF92} where only trivial arguments serve as operands of function application.

The cost-structure language still includes most of the constructs of the abstract language of Figure~\ref{fig:Grammar}: the standard constructs of the lambda calculus, the constructs for quantum data,  and the constructs for basic data (there is no more a distinction between quantum and classical data using a semi-colon) and recursion.
However, a new construct $\Meas{b}$ ($b \in \{0,1\})$ for the post-collapse quantum state and  \emph{cost-structure operators} (\emph{CS operators}, for short) have been substituted to the measurement construct and the tick construct. The  CS operators  build terms that represent cost-structure operations (see Section~\ref{ss:k}), but do not represent quantum programs: the (non-negative) real constants $\breal{r}$ ($r \in \Rpos$), $\cadd$ for adding costs in a cost-structure, $\cbary{{\cdot}}$ for computing barycentric sums in a Kegelspitze.

\begin{figure}[t]
  \centering
  \begin{bnf}
    \TTerms \ni  \te 
    & \tx | \lambda \tx. \te | \te \app \tv & $\lambda$-calculus \\
    & \ket\phi | \unary{\tv} | {\tv \qtimes \tv} | \Meas{b}{\tv}  & Quantum data \\
    & \cons(\vec{\tv}) & Constructor \\
    &  \caseof{\tv}[\cons(\vec{\tx})]{\te}[\txtwo]{\te} & Destr.\\
    & \letrec{\tf}{\tx}{\te} & Recursion \\
    & \breal{r}  | \te \cadd \te | \te \cbary{{\tv}} \te
    & CS operators\\[0.2cm]
    \TValues \ni \tv
    & \tx | \lambda \tx.\te | \ket{\psi} | \unary{\tv} | {\tv \qtimes \tv} | \Meas{b}{\tv} | \cons(\vec{\tv})  \\
    & \letrec{\tf}{\tx}{\te} | \breal{r}
  \end{bnf}
  \caption{Grammar for the Cost-Structure Language.}
  \label{fig:cost:Grammar}
\end{figure}

\begin{figure*}[!t]
  \begin{center}
    \begin{minipage}{.8\textwidth}
      \begin{bnf}
        \TTypes  \ni \tty  &  \typ | \realty | \kty | \tty \Rightarrow \tty & \TYPES \\
        \TFTypes \ni \fty &   \kty | \tty \Rightarrow \fty & \FTYPES \\
        \tctx & \emptyset | \tctx, \tx : \tty & CS contexts \\
      \end{bnf}
    \end{minipage}
  \end{center}
  \vspace{2mm}

  \begin{infers}
    \Infer[tty][ax]{\tctx, \tx:\tty \INF \tx:\tty}{\strut}
    &
    \Infer[tty][expi][\Rightarrow_i]{\tctx\INF\lambda \tx.\te: \tty\Rightarrow \tty'}{\tctx,\tx:\tty \INF \te:\tty'}
    &
    \Infer[tty][expe][\Rightarrow_e]{
      \tctx\INF \te \app \romanosmallbox{\tv}:\tty'
    }{
      \tctx\INF \te:\tty \Rightarrow \tty' & \tctx\INF \romanosmallbox{\tv}:\tty
    }
    &
    \Infer[tty][st]{\tctx\INF\ket{\psi}:\qbty}{\strut}
    \\
    \Infer[tty][un]{\tctx\INF \unary{\tv}:\qbty}{\tctx\INF \tv:\qbty}
    &
    \romanobox{ \Infer[cost][meas]{\tctx \INF \Meas{b}{\tv} : \qbty}{\tctx \INF \tv : \qbty} }
    &
    \Infer[tty][prod][\otimes]{\tctx\INF \tv_0\qtimes \tv_1:\qbty}{\tctx\INF \tv_0:\qbty & \tctx\INF \tv_1:\qbty}
    &
    \Infer[tty][cons]{
      \tctx  \INF \cons(\romanosmallbox{$\vec \tv$}):\typ
    }{
      \tctx \INF \romanosmallbox{$\vec \tv$}: \vec \typ  &\cons :: \vec \typ  \to \typ
    }
    \\
    \Infer[tty][case]{
      \tctx  \INF\caseof{\romanosmallbox{\tv}}[\cons(\vec\tx)]{\te_0}[\txtwo]{\te_1}:\tty
    }{
      \tctx \INF \romanosmallbox{\tv}\!:\!\typ
      & \tctx ,\vec \tx  \!:\! \vec \typ  \INF \te_0\!:\!\tty
      & \tctx ,\txtwo\!:\!\typ \INF \te_1\!:\!\tty
      &\cons \!::\! \vec \typ  \to \typ
    }
    &
    \Infer[cost][rec]{
      \tctx \INF \letrec{\tf}{\tx}{\te} : \romanosmallbox{$\fty$}
    }{
      \tctx, \tf: \romanosmallbox{$\fty$}\INF \lambda\tx.\te : \romanosmallbox{$\fty$}
    }
    \\
    \romanobox{
      \Infer[cost][real]{
        \tctx \INF \breal{r} : \realty
      }{
        r \in \Rpos
      }
    }
    &
    \romanobox{
      \Infer[cost][opplus][\cadd]{\tctx  \INF \te_0 \cadd \te_1 : \kty
      }{
        \tctx \INF \te_0 : \realty & \tctx \INF \te_1 : \kty
      }
    }
    &
    \romanobox{
      \Infer[cost][bary][\mathbin{\pmb{+}_{p_0}}]{
        \tctx  \INF \te_0 \cbary{{\tv}} \te_1 : \kty
      }{
        \tctx \INF \te_0 : \kty& \tctx \INF \te_1 : \kty & \tctx \INF \tv : \qbty
      }
    }
  \end{infers}
  \caption{Types and Rules for the Cost-Structure Language.}
  \label{fig:cost:ty}
\end{figure*}

\subsection{Type System}
Cost-structure terms follow a simple typing regime, see \Cref{fig:cost:ty}, where typing judgments have the usual form $\tctx \INF \te : \tty$ and are extended to sequence $\tctx \INF  \vec \te : \vec \tty$ in a standard manner (see~\Cref{ss:ts}). 
The type discipline differs from the one of the abstract quantum language (\Cref{fig:TS}) on the following points:
\begin{varitemize}
\item To model costs, we make use of a dedicated type $\kty$ to denote a generic cost-structure $\kost$ (Definition~\ref{def:cs}) and of the type $\realty$ to denote the non-negative real numbers or infinity.
\item \emph{Cost-structure types} ($\TYPES$, for short) $\tty \in \TTypes$ do not include linear arrows. In contrast to the abstract language, types of the cost-structure language neglect the separation between quantum and classical data. For example, the type $\qbty \Rightarrow \qbty$ can be derived. This is not an issue as the typing constraints imposed by the laws of quantum mechanics are already enforced on the abstract quantum language (\Cref{fig:TS}).
\item  Recursion is only permitted on terms of  \emph{Functional cost-structure type} (Functional CS Types, for short) $\fty \in \TFTypes$.  These are mostly terms whose return type is the CS Type $\kty$. Let $\Costfuns$ be the set of cost-structure terms of functional CS type. The  inclusion $\Costfuns \subsetneq \TTerms $ trivially holds.
\end{varitemize}
In \Cref{fig:cost:ty}, the new rules for cost-structure operators and main changes wrt. the type system of \Cref{fig:TS} are highlighted in \romanobox{gray}.

\subsection{Denotational Semantics}

\begin{figure*}[t]
  \columnwidth=\linewidth
\begin{minipage}{0.4\textwidth}
  \begin{align*}
    \ksem{\env}{\kost}{\tx} & \defsym \env(\tx) \\
    \ksem{\env}{\kost}{\lambda \tx. \te} & \defsym x \mapsto \ksem{\env\{\tx := x\}}{\kost}{\te} \\
    \ksem{\env}{\kost}{\te \app \tv} & \defsym \apply(\ksem{\env}{\kost}{\te},\ksem{\env}{\kost}{\tv})  \\
    \ksem{\env}{\kost}{\ket\psi} & \defsym \ket\psi \\
    \ksem{\env}{\kost}{\unary{\tv}} & \defsym \ext{U}\, \ksem{\env}{\kost}{\tv} \\
    \ksem{\env}{\kost}{\tv_0 \qtimes \tv_1} & \defsym \ksem{\env}{\kost}{\tv_0} \otimes \ksem{\env}{\kost}{\tv_1} \\
    \ksem{\env}{\kost}{\Meas{b}{\tv}} & \defsym  M_{ \ksem{\env}{\kost}{\pmb{b}}} \ksem{\env}{\kost}{\tv}
  \end{align*}
\end{minipage}
\begin{minipage}{0.58\textwidth}
  \begin{align*}
     \ksem{\env}{\kost}{\cons(\vec\tv)} & \defsym \cons(\ksem{\env}{\kost}{\vec\tv})\\
      {\ksem{\env}{\kost}{\caseof*{\tv}[\cons(\vec{\tx})]{\te_0}[\txtwo]{\te_1}}} & \defsym
    \begin{cases}
      \ksem{\env\{\vec{\tx}:=\ksem{\env}{\kost}{\vec{\tv}}\}}{\kost}{\te_0} & \text{if $\ksem{\env}{\kost}{\tv} = \cons(\ksem{\env}{\kost}{\vec{\tv}})$}\\
      \ksem{\env\{ \txtwo := {\ksem{\env}{\kost}{\tv}}\}}{\kost}{\te_1} & \text{otherwise}
    \end{cases}\\
    \ksem{\env}{\kost}{\letrec{\tf}{\tx}{\te}} & \defsym \lfp(f\mapsto (x \mapsto \ksem{\env\{\tf:= f;\tx :=  x\}}{\kost}{\te}))\\
      \ksem{\env}{\kost}{\breal{r}} & \defsym r , \quad r \in \Rpos \\
      \ksem{\env}{\kost}{\te_0 \cadd \te_1} & \defsym \ksem{\env}{\kost}{\te_0} \tplus \ksem{\env}{\kost}{\te_1}\\
    \ksem{\env}{\kost}{\te_0 \cbary{{\tv}} \te_1} & \defsym \ksem{\env}{\kost}{\te_0} \bary{{p_0 \ksem{\env}{\kost}{\tv}}} \ksem{\env}{\kost}{\te_1}
  \end{align*}
\end{minipage}
  \caption{Denotational Semantics of the Cost-Structure Language with $\kost = (\keg,\tplus)$.}
  \label{fig:denot}
\end{figure*}

We now endow the cost language
with a denotational semantics. 
For a fixed cost-structure $\kost$, types of the cost-structure language are interpreted (provided $\typ\neq\qbty$) by:
\begin{align*}
\typesem{\typ}  &\defsym \mparbox{1cm}{\bigcup_{\scriptsize \begin{array}{l}\cons ::  \vec \ctyp ; \vec\qtyp \to \typ\in \Cons(\typ)\\ (\vec V_\cbty,\vec  V_\qbt) \in \typesem{\vec \ctyp}\times \typesem{\vec \qtyp}\end{array}}\! \! \! \{ \cons( \vec V_\cbty, \vec  V_\qbt) \}} \\
\typesem{\qbty} & \defsym \cup_{n \in \N} \Comp^{2^n}\\
\typesem{\realty} & \defsym \Rext \\
\typesem{\kty}  &   \defsym \kost  \\
\typesem{\tty \Rightarrow \tty'}  &\defsym [\typesem{\tty} \tocont \typesem{\tty'}] 
\end{align*}
where $\typesem{\qbty}$ and $\typesem{\typ}$ are considered as discrete $\omega$-cpos (using identity as order).
The type $\realty$ is interpreted by $\Rext$ endowed with the \emph{vertical ordering}: $r_1 \sqsubseteq r_2$ iff $r_1 \leq_\R r_2$ or $r_2 = \infty$.

Recall that functions $f : \mathcal{D} \to \mathcal{E}$, with $\mathcal{D}$ a discrete cpo, are continuous;
composition of continuous functions is continuous $(\circ) : [\mathcal{E} \tocont \mathcal{F}] \to [\mathcal{D} \tocont \mathcal{E}]\to [\mathcal{D}\tocont \mathcal{F}]$; and application $\apply : [\mathcal{D} \tocont \mathcal{E}] \times \mathcal{D} \to \mathcal{E}$ is continuous. Whenever $\mathcal{D}$ has least element $\bot$,
the least fixed point $\lfp : [\mathcal{D} \tocont \mathcal{D}] \to \mathcal{D}$, defined by $\lfp(f) \defsym \lub_{n
  \in \N} f^n(\bot)$, where $f^n$ denotes the $n$-fold composition of $f$, is continuous. 
Hence, the interpretation of an abstraction is also continuous and the interpretation of any type is an $\omega$-cpo.

Let the \emph{valuation} $\rho$ be a partial map from the set of variables $\VAR$ to the set $\cup_{\tty \in  \TTypes} \typesem{\tty}$. 
A valuation $\rho$ is \emph{valid} for a CS context $\tctx$, denoted by $\rho \Vdash\tctx$, if for all $\tx : \tty \in \tctx$, we have $\rho(\tx) \in \typesem{\tty}$. 
Let $\emptyv$ be the defined nowhere valuation and $\rho\{\tx := \tv\}$ be the valuation extending $\rho$ by $\rho\{\tx := \tv\}(\tx)=\tv$. Let also $\rho\{\vec{\tx} := \vec{\tv}\}$ be the natural extension to sequences of the same length.

The denotational semantics of terms with respect to a cost-structure $\kost$ and a valuation $\rho$ is given in \Cref{fig:denot}.
The semantics is defined on typing judgments but to avoid cumbersome notations, it is presented only on terms.
That is, giving a typing judgment $ \tctx \INF \te : \tty$, the interpretation $\ksem{\env}{\kost}{\te}$ denotes an element of  $\typesem{\tty}$, for some  valuation $\rho$ such that $\rho \Vdash\tctx$ (see Theorem~\ref{thm:sound}).

In Figure~\ref{fig:denot}, $p_0 \ksem{\env}{\kost}{\tv}$ denotes the probability  of measuring the outcome $0$ on the first qubit of the quantum state $\ksem{\env}{\kost}{\tv}$ ($p_0$ is defined in Section~\ref{ss:os}) and $\ksem{\env}{\kost}{\Meas{b}{\tv}}$ denotes the collapse of the quantum state $\tv$ after a measurement with outcome $b \in \{0,1\}$. For example, for $\ket \psi = \sfrac{1}{\sqrt3} \ket{00} + \sfrac{1}{\sqrt3} \ket{10} + \sfrac{1}{\sqrt3} \ket{11}$, $p_1 \ket \psi$ evaluates to $\sfrac{2}{3}$ and $\ksem{\env}{\kost}{\Meas{1}{\ket \psi}}$ evaluates to  $\sfrac1{\sqrt{2}}\ket{10}+\sfrac{1}{\sqrt{2}}\ket{11}$ (the normalized state obtained after measuring the outcome $1$).

\begin{theorem}\label{thm:sound}
If $\tctx \INF \te : \tty$ then
$\ksem{\env}{\kost}{\te } \in \typesem{\tty}$ for all $\env \Vdash \tctx$.
\end{theorem}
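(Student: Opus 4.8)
The plan is to proceed by structural induction on the typing derivation of $\tctx \INF \te : \tty$, showing simultaneously that $\ksem{\env}{\kost}{\te} \in \typesem{\tty}$ whenever $\env \Vdash \tctx$, and (implicitly) that each interpretation clause in \Cref{fig:denot} is well-defined, i.e.\ the auxiliary operations $\apply$, $\lfp$, the barycentric operation $\bary{r}$, the cost addition $\tplus$, the measurement collapse $M_b$, etc., are applied to arguments in the correct domains. Most cases are routine once the right induction hypothesis is in place, so I would only spell out the clauses that carry some content.

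First I would dispatch the easy cases. For $(\mathtt{ax})$ validity of $\env$ gives $\env(\tx) \in \typesem{\tty}$ directly. For $(\Rightarrow_i)$, given $\env \Vdash \tctx$ and any $x \in \typesem{\tty}$, the extended valuation $\env\{\tx := x\}$ is valid for $\tctx, \tx:\tty$, so the induction hypothesis yields $\ksem{\env\{\tx := x\}}{\kost}{\te} \in \typesem{\tty'}$; continuity of this map in $x$ (needed for membership in $[\typesem{\tty} \tocont \typesem{\tty'}]$) follows from the general facts recalled just before the theorem — composition and application of continuous functions are continuous, and a function out of a discrete cpo is automatically continuous — combined with a sub-induction that the denotation is continuous in each valuation entry. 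For $(\Rightarrow_e)$, the induction hypotheses place $\ksem{\env}{\kost}{\te}$ in $[\typesem{\tty} \tocont \typesem{\tty'}]$ and $\ksem{\env}{\kost}{\tv}$ in $\typesem{\tty}$, so $\apply$ lands in $\typesem{\tty'}$. The quantum-data rules $(\mathtt{st})$, $(\mathtt{un})$, $(\otimes)$, $(\mathtt{meas})$ hold because $\typesem{\qbty} = \bigcup_n \Comp^{2^n}$ is closed under the extended unitary action $\ext{U}$, tensoring, and the collapse maps $M_b$ (all of which preserve being a complex vector of the appropriate dimension; norm-$1$-ness is immaterial to membership as defined). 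The constructor and $\mathsf{case}$ rules follow by unfolding the definition of $\typesem{\typ}$ as the set of constructor terms over $\typesem{\vec\ctyp} \times \typesem{\vec\qtyp}$, using the induction hypothesis on the subterms and, for $\mathsf{case}$, a case split mirroring the one in \Cref{fig:denot}. The CS-operator rules $(\breal{r})$, $(\cadd)$, $(\mathbin{\pmb{+}_{p_0}})$ are immediate from the signatures in \Cref{def:cs} and the definition of a Kegelspitze: $r \in \Rext = \typesem{\realty}$, $\tplus : \Rext \times \keg \to \keg$, and $\bary{r} : \keg \times \keg \to \keg$ for $r = p_0\ksem{\env}{\kost}{\tv} \in [0,1]$.

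The main obstacle — and the only case requiring real care — is $(\mathtt{rec})$, the $\mathsf{letrec}$ rule, whose target type is a \emph{functional} CS type $\fty \in \TFTypes$. Here I would argue that $\typesem{\fty}$ has a least element $\bot$: by the grammar $\fty ::= \kty \mid \tty \Rightarrow \fty$, so $\typesem{\fty}$ is either the cost-structure $\kost$ (whose underlying Kegelspitze is pointed, with $\bot$ the distinguished element, shown to be least in \Cref{ss:k}) or an iterated continuous-function space into such a pointed cpo, which is pointed with the pointwise $\bot$. This is precisely why recursion is syntactically restricted to $\TFTypes$ — it guarantees the existence of the least fixed point. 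Then, writing $\Phi \defsym f \mapsto (x \mapsto \ksem{\env\{\tf := f; \tx := x\}}{\kost}{\te})$, the induction hypothesis applied to the premise $\tctx, \tf:\fty \INF \lambda\tx.\te : \fty$ (unfolding the $(\Rightarrow_i)$ step inside it) shows $\Phi : \typesem{\fty} \to \typesem{\fty}$ is well-defined, and the continuity sub-induction shows $\Phi$ is continuous; hence $\lfp(\Phi) = \lub_n \Phi^n(\bot) \in \typesem{\fty}$ as required. I would therefore strengthen the statement actually proved by induction to: ``if $\tctx \INF \te : \tty$ then for all $\env \Vdash \tctx$, $\ksem{\env}{\kost}{\te} \in \typesem{\tty}$, and moreover $\env \mapsto \ksem{\env}{\kost}{\te}$ is continuous in each argument'' — the continuity clause is what makes the $(\Rightarrow_i)$ and $(\mathtt{rec})$ cases go through, and it is proved by the same structural induction using the stated closure properties of continuous functions under composition, application, and least fixed points (and continuity of $\tplus$, $\bary{r}$, and scalar multiplication, which is built into the definitions of cost-structure and Kegelspitze).
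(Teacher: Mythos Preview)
The paper does not actually give a proof of this theorem; it is stated without proof, presumably as a routine soundness result for the simply-typed denotational semantics. Your proposal is the standard argument and is correct: the strengthening of the induction hypothesis to carry continuity in each valuation entry is exactly what is needed to close the $(\Rightarrow_i)$ and $(\mathtt{rec})$ cases, and you correctly identify that the syntactic restriction of recursion to functional CS types $\fty \in \TFTypes$ is precisely what guarantees pointedness of $\typesem{\fty}$ so that $\lfp$ is defined. Nothing to add.
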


\subsection{Quantum Expectation Transformers}

\begin{figure*}[t]
  \columnwidth=\linewidth
Mapping of Types $\ttpe{(\cdot)} : \Types \to \TTypes$
    \begin{align*}
      \ttpe{\typ} & \defsym \typ &
       \ttpe{(\ty_1 \multimap \ty_2)} & \defsym \ttpe{\ty_1} \Rightarrow (\ttpe{\ty_2} \Rightarrow \kty) \Rightarrow \kty
      &
       \ttpe{(\cty \Rightarrow \ty)}  &\defsym \ttpe{\cty} \Rightarrow (\ttpe{\ty} \Rightarrow \kty) \Rightarrow \kty
\\
    \end{align*}

Transformation of Values $\qetv{(\cdot)} : \Values \to \TValues$
    \begin{align*}
      \qetv{\sx} & \defsym  \tx &
      \qetv{(\lambda \sx. \st)} & \defsym \lambda \tx.\lambda \tk. \qet{\st}{\tk} &
      \qetv{\ket{\psi}} & \defsym \ket{\psi} \\
      \qetv{\cons(\vec\sv_0;\vec \sv_1)} & \defsym \cons(\qetv{\vec \sv_0},\qetv{\vec \sv_1}) &
      \qetv{(\letrec{\sf}{\sx}{\st})} & \defsym \letrec{\tf}{\tx}{\lambda \tk. \qet{\st}{\tk}} \\
    \end{align*}

Transformation of Terms $\qet{\cdot}{\cdot} : \Terms \to \Costfuns \to \TTerms$
    \begin{align*}
      \qet{\sv}{\cont} & \defsym \cont \app \qetv{\sv}
      \\
      \qet{\st_0 \app \st_1}{\cont} & \defsym \qet{\st_1}{\lambda \tx_1. \qet{\st_0}{\lambda \tx_0. \tx_0 \app \tx_1 \app \cont}}
      \\
      \qet{\unary{\st}}{\cont} & \defsym \qet{\st}{\lambda \tx. \cont \app \unary{\tx}}
      \\
      \qet{\meas{\st}}{\cont} & \defsym \qet{\st}{\lambda \tx. (\cont \app \inj_0(\Meas{0}{\tx}) \cbary{{\tx}} \cont \app \inj_1(\Meas{1}{\tx}))}
      \\
      \qet{\st_0 \qtimes \st_1}{\cont} & \defsym \qet{\st_1}{\lambda \tx_1. \qet{\st_0}{\lambda \tx_0. \cont\! \app\! (\tx_0 \qtimes \tx_1)}}
      \\
      \qet{\cons(\vec\st_0; \vec \st_1)}{\cont} & \defsym \qet{\vec \st_1}{\lambda \vec \tx_1. \qet{\vec \st_0}{\lambda \vec \tx_0. \cont \app \cons(\vec \tx_0; \vec \tx_1)}}
      \\
      \qet{\caseof*{\st}[\cons(\vec{\sx}_0;\vec{\sx}_1)]{\st_0}[\sy]{\st_1}
      }{\cont} & \defsym \qet{\st}{\lambda \tx. \caseof*{\tx}[{\cons(\vec{\tx_0},\vec{\tx_1})}]{\qet{\st_1}{\cont}}[{\txtwo}]{\qet{\st_2}{\cont}}}
      \\
      \qet{\tick{\st}}{\cont} & \defsym \bun \cadd \qet{\st}{\cont}
    \end{align*}
  \caption{Expectation Transformer.}
  \label{fig:ect}
\end{figure*}

Conceptually, our notion of higher-order quantum expectation transformer is an adaptation of the one-pass call-by-value CPS transformation by \cite{DanvyN03}. The transformation is depicted in \Cref{fig:ect}. The type translation $\ttpe{(\cdot)}$ maps types of the source quantum language to those of the cost-structure language. While this translation leaves (first-order) base types unchanged, function types are mapped to their CPS counterparts, which route return values through an explicit continuation.

On the pure, non-quantum fragment of our language, our transformation closely resembles that of Danvy and Nielsen. It is defined by two mutually recursive translations: the translation $\qetv{(\cdot)}$ on values and a translation $\qet{\cdot}{\cont}$ on terms, which is parameterized in a continuation $\cont$. The distinguishing feature of our setting is that ticks and quantum measurements are translated away to cost-structure expressions, with measurements abstracted by their expected outcomes. In this way, in $\qet{\st}{\cont}$, the parameter $\cont$ is to be understood as representing the continuation’s expected quantitative behavior.

To present the definition formally, we assume a one-to-one correspondence between source variables $\sx \in \Var$ and target variables $\tx \in \VAR$, using lower- and upper-case letters to distinguish them. To simplify notation, we extend the type and value translations homomorphically to sequences, and we extend $\qet{\cdot}{\cont}$ to sequences by stipulating
\begin{align*}
  \qet{{\varepsilon}}{\cont} &\defsym \cont \\  \qet{{\vec \st,\st }}{\cont} &\defsym \qet{\st}{\lambda \tx. \qet{{\vec \st\,}}{\cont \app \tx}},
\end{align*}
for $\varepsilon$ the empty sequence. Let us extend the type translation $\ttpe{(\cdot)}$ to homomorphically
to typing contexts $\cctx;\qctx$. The
translations preserve typing in the following way:
\begin{proposition}
  Let $\cctx;\qctx\vdash \st : \ty$. 
  \begin{enumerateenv}
    \item If $\st \in \Values$, then $\ttpe{\cctx};\ttpe{\qctx} \INF \qetv{\st} : \ttpe{\ty}$ ; and 
    \item If $\ttpe{\cctx};\ttpe{\qctx} \INF \cont : \ttpe{\ty} \to \kty$, 
    then $\ttpe{\cctx};\ttpe{\qctx} \INF \qet{\st}{\cont} : \kty$. 
  \end{enumerateenv}
\end{proposition}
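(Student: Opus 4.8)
The plan is to prove (1) and (2) together by induction on the typing derivation of $\cctx;\qctx\vdash\st:\ty$. The only subtlety in organising this induction is that the argument for (2) on a term $\st$ appeals to (1) on the \emph{same} $\st$ --- this occurs exactly when $\st$ is a value, in which case $\qet{\st}{\cont}=\cont\app\qetv{\st}$ by definition (\Cref{fig:ect}) --- whereas every other appeal (of (1) to (2), and of (2) to itself) is to a \emph{proper} subterm. So I would order pairs $(\st,i)$ with $i\in\{1,2\}$ by the size of $\st$, breaking ties in favour of $i=1$; this makes the recursion well-founded, and the whole argument may equivalently be read as a structural induction on the derivation with the two clauses interleaved. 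Throughout, $\ttpe{\cctx};\ttpe{\qctx}$ denotes the single cost-structure context translating the two source contexts, and I freely use that the type system of \Cref{fig:cost:ty}, having no linearity restriction, admits weakening.

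Before the case analysis I would record two further facts. First, whenever $\cty$ is the type of a $\lambda$-abstraction --- which is exactly when rule $(\mathsf{rec})$ of \Cref{fig:TS} can fire --- it is an arrow type, and then $\ttpe{\cty}\in\TFTypes$, because both $\ttpe{(\ty_1\multimap\ty_2)}=\ttpe{\ty_1}\Rightarrow(\ttpe{\ty_2}\Rightarrow\kty)\Rightarrow\kty$ and $\ttpe{(\cty'\Rightarrow\ty)}$ end in $\kty$; this legitimises applying $(\mathsf{rec})$ of \Cref{fig:cost:ty} in the \letrec-case. Second, a routine induction on $|\vec\st|$ shows that if $\cctx;\vec\qctx\vdash\vec\st:\vec\ty$ and $\cont$ has type $\ttpe{\ty_1}\Rightarrow\cdots\Rightarrow\ttpe{\ty_n}\Rightarrow\kty$, then $\qet{\vec\st}{\cont}:\kty$; this disposes of the constructor clause of $\qet{\cdot}{\cdot}$ and of the defining equation for $\qet{\vec\st,\st}{\cont}$. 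With these in hand, clause (1) is a short pass over the value forms: a variable is handled by $(\mathtt{ax})$ (both source axioms translate to it); $\ket{\psi}$ by $(\mathtt{st})$ with $\ttpe{\qbty}=\qbty$; $\cons(\vec\sv_0;\vec\sv_1)$ by the inductive hypothesis for (1) on each component followed by $(\mathtt{cons})$, using $\ttpe{\typ}=\typ$ and the merged signature $\cons::\vec\ctyp,\vec\qtyp\to\typ$; $\lambda\sx.\st$ of type $\ty_1\multimap\ty_2$ (the case $\cty\Rightarrow\ty$ is analogous) by the hypothesis for (2) on $\st$ with a fresh continuation variable $\tk:\ttpe{\ty_2}\Rightarrow\kty$ and then abstracting over $\tk$ and $\tx$; and $\letrec{\sf}{\sx}{\st}$ of type $\cty$ by the hypothesis for (1) on the abstraction $\lambda\sx.\st$ furnished by $(\mathsf{rec})$ of \Cref{fig:TS} (a proper subterm) followed by $(\mathsf{rec})$ of \Cref{fig:cost:ty}, legitimate since $\ttpe{\cty}\in\TFTypes$.

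For clause (2), the value case is immediate: $\qet{\sv}{\cont}=\cont\app\qetv{\sv}$ with $\qetv{\sv}$ a value of type $\ttpe{\ty}$ by (1), whence $(\Rightarrow_e)$ gives type $\kty$. Each remaining case follows the uniform recipe ``translate the immediate subterms left-to-right into fresh continuation variables, then emit a cost-structure term built from cost-language operations'', and I would treat each by typing the innermost emitted term first and then peeling off the enclosing abstractions, applying the hypothesis for (2) to each subterm (and weakening to align the sub-contexts). The points worth spelling out: in $\st_0\app\st_1$ the bound $\tx_0$ gets the CPS type $\ttpe{(\ty_1\multimap\ty_2)}=\ttpe{\ty_1}\Rightarrow(\ttpe{\ty_2}\Rightarrow\kty)\Rightarrow\kty$, so $\tx_0\app\tx_1\app\cont:\kty$ (similarly under $(\Rightarrow_e)$); $\unary{\st}$, $\st_0\qtimes\st_1$ and $\cons(\vec\st_0;\vec\st_1)$ use $(\mathtt{un})$, $(\otimes)$ and $(\mathtt{cons})$ of \Cref{fig:cost:ty}, the last together with the sequence lemma; $\meas{\st}$ uses the cost-language measurement rule for $\Meas{b}{\cdot}$, $(\mathtt{cons})$ for the injections $\inj_0,\inj_1$, and the barycentric rule at type $\qbty$; $\caseof{\st}{\st_0}{\st_1}$ applies the hypothesis for (2) to the scrutinee and to both branches (the return type $\ty$ is unchanged and the pattern variables enter the context, with $\ttpe{\ctyp}=\ctyp$ and $\ttpe{\qtyp}=\qtyp$) and then the cost-language $(\mathtt{case})$ rule; and $\qet{\tick{\st}}{\cont}=\bun\cadd\qet{\st}{\cont}$ is typed by $(\mathtt{real})$ (as $1\in\Rpos$) together with $(\cadd)$ over the hypothesis for (2).

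I expect the individual verifications to be routine; the genuine obstacle is organisational. The main point requiring care is the context bookkeeping: since a source rule may split $\qctx$ and since the emitted continuations bind fresh variables, the inductive hypothesis on a subterm must be instantiated with a continuation living in an \emph{enlarged} target context, so I must make sure that admissibility of weakening really covers this and that $\ttpe{(\cdot)}$ on contexts behaves homomorphically. A secondary subtlety is verifying $\ttpe{\cty}\in\TFTypes$, needed for $(\mathsf{rec})$ in the \letrec-case. Among the individual cases, $\meas{\st}$ is the most laborious, as it invokes the injection constructors, the $\Meas{b}{\cdot}$ construct and the barycentric operator at once, but it raises no conceptual difficulty.
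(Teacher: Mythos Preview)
Your proposal is correct and follows essentially the same approach as the paper, which dismisses both clauses as ``straightforward induction'' (on the typing derivation for (1), on the term structure for (2)) without further detail. Your careful organisation of the mutual dependency via a well-founded order on pairs $(\st,i)$, and your explicit verification that $\ttpe{\cty}\in\TFTypes$ for arrow types so that the cost-language $(\mathsf{rec})$ rule applies, are exactly the points one would need to fill in, and the paper leaves them implicit.
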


\begin{example}[Quantum cointoss translation]\label{ex:cointoss:qet}
  Reconsider the illustrating quantum cointossing example from the introduction.
  The value translation $\qetv{\cointoss}$ yields a
  cost-structure term $\COINTOSS : \qbty \Rightarrow (\qbty \Rightarrow \kty) \Rightarrow \kty$.
  By unfolding definitions, we have:
  \begin{align*}
  \mparbox{3mm}{\COINTOSS \defsym
    \qetv{\left(\block[c]{\letrec{ct}{\sx}{
                \caseof**{\tick{(\meas \sx)}}
                    [\inj_0(;\sx_0)]{\sx_0}
    [\inj_1(;\sx_1)]{ct \app \unary[H]{\sx_1}}}}\right)}} \\
    & = \letrec{CT}{X, \cont}{
            \qet{
            \caseof**{\tick{(\meas \sx)}}
                    [\inj_0(;\sx_0)]{\sx_0}
                    [\inj_1(;\sx_1)]{ct \app \unary[H]{\sx_1}}
      }{\cont} }
      \\
    & = 
      \letrec{CT}{X, \cont}{
       \bun \cadd (\cont' \app \inj_0(\Meas{0}{X}) \cbary{X} \cont' \app \inj_1(\Meas{1}{X}))
      }
    \\[1mm]
    & \qquad \textit{where }
      \cont' = \lam{X}{
        \caseof**{\tx}
            [\inj_0(X_0)]{\cont \app X_0}
            [\inj_1(X_1)]{CT \app \unary[H]{X_1} \app \cont}
      }
    \\
  & \equiv
      \letrec{CT}{X, \cont}{
        \bun \cadd (\cont \app \Meas{0}{X} \cbary{X} CT \app \unary[H]{(\Meas{1}{X})} \app \cont)
      }
  \end{align*}
  In the last step, we have inlined the auxiliary continuation $\cont'$.
  Albeit not identical, the two terms are semantically equivalent. 
  Note how by inlining further the constant zero continuation $\lambda \tz.\ \bzero$, 
  this term is identical, up to renaming, to the cost function given in \Cref{ss:bev}.
\end{example}

Given a substitution $\sigma : \Var \to \cValues$ of the form $\sigma =[\sv_1 / x_1,\dots, \sv_n / x_n ]$, we use $\env_{\sigma}$ as a shorthand for the valuation $\{\tv_1 :=\ksem{}{\kost}{\qetv{\sv_1}}, \dots, \tv_n := \ksem{}{\kost}{\qetv{\sv_n}} \}$.
The following constitutes the technical soundness theorem.

\begin{theorem}\label{thm:qet-sound}
  Let $\kost = (\keg,\tplus)$ be a cost structure.
  For every term $\cctx;\qctx \vdash \st \ofty \typ$, 
  substitution $\sigma : \Var \to \cValues$ such that $FV(\st) \subseteq dom(\sigma)$, 
  and cost-structure term $\qetv{\cctx},\qetv{\qctx}\INF \te : \typ \to \kty$,
  we have
  \[
    \ksem{\env_{\sigma}}{\kost}{\qet{\st}{\te}}
    = \ecost[\disto]{\st\sigma} \tplus \evalue[\disto]{\st\sigma}{\ksem{\env_{\sigma}}{\kost}{\te}}. 
  \]
\end{theorem}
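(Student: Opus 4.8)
The statement relates a purely syntactic/denotational quantity, $\ksem{\env_\sigma}{\kost}{\qet{\st}{\te}}$, with an operational quantity obtained by running the PARS $\disto{}$ starting from $\st\sigma$. The natural strategy is induction, but a plain structural induction on $\st$ will not close, because the reduction relation on closed terms unfolds $\letrec$ and substitutes into redexes, producing terms that are not subterms of $\st$. The plan is therefore to proceed by a \emph{well-founded} or \emph{step-indexed} argument: define, for each closed term $\st\sigma$, the $n$-step truncation of the reduction sequence $s : \{(\st\sigma)^1\} = \delta_0 \disto{} \delta_1 \disto{} \cdots$, and prove the identity with the operational side replaced by its $n$-step approximants, then pass to the supremum using continuity of $\ksem{\cdot}{\kost}{\cdot}$, of $\tplus$, and of the scalar multiplications/barycentric operations in $\keg$ (available from \Cref{def:cs} and the Kegelspitze axioms), together with the fact that $\ecost[\disto]{\cdot}$ and $\nf[\disto]{\cdot}$ are themselves defined as suprema of these approximants.

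Concretely, I would first establish a \emph{substitution lemma}: $\ksem{\env_\sigma}{\kost}{\qet{\st}{\te}}$ only depends on $\sigma$ through the denotations $\ksem{}{\kost}{\qetv{\sv_i}}$, and the CPS translation commutes with substitution, i.e.\ $\qet{\st\sigma}{\te} $ and $\qet{\st}{\te}$ under $\env_\sigma$ have the same denotation (modulo the obvious renaming of $\Var$ to $\VAR$). This reduces the problem to closed terms with $\sigma = \mathrm{id}$. Next I would prove a \emph{one-step lemma}: if $\st \lra{c} \delta$ with $\st$ closed, then for every continuation denotation $k = \ksem{}{\kost}{\te}$,
\[
  \ksem{}{\kost}{\qet{\st}{\te}}
   = c \tplus \sum_{\sts \in \supp(\delta)} \delta(\sts)\cdot \ksem{}{\kost}{\qet{\sts}{\te}}.
\]
This is checked by a case analysis on the reduction rules of \Cref{tab:TRS}. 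The $\beta$-, $\iota$- (case), $\letrec$-, unitary- and tensor-rules are routine equational reasoning using the definitions in \Cref{fig:ect,fig:denot} and the substitution lemma; the $\tick$ rule uses $\bun \cadd (-) = 1 \tplus (-)$; the measurement rule uses the barycentric law, the definitions $\ksem{}{\kost}{\Meas{i}{\tx}} = M_i(\cdots)$ and $\ksem{}{\kost}{\te_0 \cbary{\tv} \te_1} = \ksem{}{\kost}{\te_0}\bary{p_0(\cdots)}\ksem{}{\kost}{\te_1}$, plus the identity $p_1\ket\psi = 1 - p_0\ket\psi$ to match the two-point distribution $\{\inj_i(M_i\ket\psi)^{p_i\ket\psi}\}$; and the evaluation-context rule reduces to the congruence property $\ksem{}{\kost}{\qet{E[\st]}{\te}} = \ksem{}{\kost}{\qet{\st}{\te_E}}$ for the continuation $\te_E$ induced by $E$, which itself follows from the compositional definition of $\qet{\cdot}{\cdot}$ on each context former, proved by induction on $E$. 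When $\st$ is a value, $\qet{\sv}{\te} = \te \app \qetv{\sv}$, its denotation is $k(\ksem{}{\kost}{\qetv{\sv}})$, $\ecost[\disto]{\sv} = 0$, $\nf[\disto]{\sv} = \{\sv^1\}$, and $0 \tplus k(\cdots) = k(\cdots)$, giving the base case.

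Finally I would assemble these into the theorem. Iterating the one-step lemma $n$ times along $s$ gives
\[
  \ksem{}{\kost}{\qet{\st}{\te}}
   = \Big(\textstyle\sum_{i=1}^{n} c_i\Big) \tplus \sum_{\sts} \delta_n(\sts)\cdot \ksem{}{\kost}{\qet{\sts}{\te}},
\]
where I split $\delta_n$ into its normal-form part $\NF[\disto]{\delta_n}$, on which $\qet{\sts}{\te}$ denotes $k(\ksem{}{\kost}{\qetv{\sts}})$ exactly as in the RHS of the theorem, and its non-normal part, which I must show is ``negligible in the limit''. This last point is the main obstacle: one needs that the contribution of the non-normal part vanishes (or is dominated) as $n\to\infty$, so that the supremum of the RHS approximants equals $\ecost[\disto]{\st} \tplus \evalue[\disto]{\st}{k}$. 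For $\kost = (\Rext,+)$ this is a monotone-convergence / continuity argument using that $\NF[\disto]{\delta_n}$ is non-decreasing with supremum $\nf[\disto]{\st}$ and that $\sum_{i=1}^n c_i$ is non-decreasing with supremum $\ecost[\disto]{\st}$; for a general cost-structure $\keg$ one must argue that the whole sequence of approximants forms an $\omega$-chain and invoke $\omega$-continuity of $\tplus$ and of the convex-sum operation, with the least element $\bot$ absorbing the vanishing non-normal tail. Getting the monotonicity of the approximants right — i.e.\ that iterating the one-step lemma genuinely produces an $\omega$-chain whose least upper bound is the claimed RHS — is the delicate part; everything else is bookkeeping over the rules of \Cref{tab:TRS} and the clauses of \Cref{fig:ect,fig:denot}.
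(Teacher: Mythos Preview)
The paper does not actually prove \Cref{thm:qet-sound}: the appendix section titled ``Soundness of Quantum Expectation Transformer'' contains only the typing-preservation lemma, and the preceding PARS appendix develops auxiliary machinery (multi-distributions, the relation $\drew{}{+}$, permutation lemmas, coincidence of $\drew{}{+}^\infty$ with $\drew{}^\infty$) without assembling it into a proof of the theorem. So there is no paper argument to compare against; I evaluate your plan on its own merits.

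Your substitution lemma, one-step lemma, and their iteration are correct and yield, for every $n$, the \emph{equality}
\[
  \ksem{}{\kost}{\qet{\st}{\te}}
  \;=\;\Big(\sum_{i\leq n} c_i\Big) \tplus \sum_{\sts} \delta_n(\sts)\cdot \ksem{}{\kost}{\qet{\sts}{\te}} .
\]
Dropping the non-normal summands (each $\sqsupseteq \bot$) and passing to the supremum gives $\text{LHS} \sqsupseteq \text{RHS}$ by continuity of $\tplus$ and of the barycentric operations. The gap is the converse inequality. Your phrase ``the non-normal tail vanishes, absorbed by $\bot$'' is false whenever the program diverges with positive probability: for $(\letrec{f}{x}{f\,x})\app\sv$ the iterated identity degenerates to $\text{LHS} = 0 \tplus 1\cdot\text{LHS}$, a tautology that says nothing about $\text{LHS} \sqsubseteq \bot = \text{RHS}$. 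More generally, the right-hand sides of your iterated equations are \emph{constant in $n$} (they all equal the fixed value $\text{LHS}$); they are not an $\omega$-chain converging to the theorem's RHS. What is missing is an approximation of the \emph{denotational} side: use $\lfp(G) = \bigsqcup_m G^m(\bot)$ to replace each recursive definition in $\qet{\st}{\te}$ by its $m$-th unfolding, obtaining $L_0 \sqsubseteq L_1 \sqsubseteq \cdots$ with $\bigsqcup_m L_m = \text{LHS}$, and prove $L_m \sqsubseteq \text{RHS}$ by induction on $m$ (now a genuine well-founded argument, since the $m$-th unfolding only refers to the $(m{-}1)$-st). That is the standard adequacy half of such correspondence results, and it is the idea your plan lacks; the PARS machinery the paper sets up is precisely what one then uses to align the per-branch reductions arising in that induction with the global sequences defining $\ecost[\disto]{\cdot}$ and $\nf[\disto]$.
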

As corollary, we obtain a sound and complete methodology for reasoning about
expected costs and values on quantum programs.
In the following, we abbreviate the
  cost-structure $(\Rext,+)$ by 
  $\Rext$, and denote a cost-free cost-structure $(\keg,\fplus)$
  by $\kostf$.
\begin{corollary}\label{cor:qet}
  Let $\st$, $\sigma$ and $\te$ be as in \Cref{thm:qet-sound}. The following both hold:
  \begin{enumerateenv}
    \item $\ecost[\disto]{\st\sigma} = \ksem{\env_{\sigma}}{\Rext}{\qet{\st}{\lambda X.0}}$, and
    \item 
    $\evalue[\disto]{\st\sigma}{\ksem{\env_{\sigma}}{\kostf}{\te}} = \ksem{\env_{\sigma}}{\kostf}{\qet{\st}{\te}}$.
  \end{enumerateenv}
\end{corollary}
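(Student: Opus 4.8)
The plan is to obtain both identities as immediate specializations of \Cref{thm:qet-sound}, instantiating the generic cost-structure $\kost$ in two ways that make the right-hand side collapse, and then simplifying with the defining laws of the corresponding additive operator $\tplus$.

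For item (1) I would take $\kost$ to be the cost-structure $(\Rext,+)$ of the extended non-negative reals with ordinary addition as $\tplus$ (the first cost-structure in the Example after \Cref{def:cs}), and choose the continuation $\te = \lambda X.0$, reading the $0$ as the cost-structure constant $\breal{0}$ — legitimate under this $\kost$ since $\realty$ and $\kty$ then receive the same interpretation $\Rext$. The first step is to observe that $\ksem{\env_{\sigma}}{\Rext}{\lambda X.0}$ is the everywhere-$0$ function on the relevant function space. The second step is the computation $\evalue[\disto]{\st\sigma}{\ksem{\env_{\sigma}}{\Rext}{\lambda X.0}} = \sum_{b} \nf[\disto](\st\sigma)(b)\cdot 0 = 0$, using that scalar multiplication on $\Rext$ is ordinary multiplication and that a countable sum of zeros is $0$. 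Plugging these into \Cref{thm:qet-sound} and simplifying via $r + 0 = r$ yields $\ksem{\env_{\sigma}}{\Rext}{\qet{\st}{\lambda X.0}} = \ecost[\disto]{\st\sigma}$, which is item (1).

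For item (2) I would take $\kost = \kostf = (\keg,\fplus)$, the cost-free cost-structure over $\keg$ with forgetful addition $a \fplus b \triangleq b$ (the second cost-structure in the same Example), so that the underlying Kegelspitze is $\keg$ itself. With this choice \Cref{thm:qet-sound} reads $\ksem{\env_{\sigma}}{\kostf}{\qet{\st}{\te}} = \ecost[\disto]{\st\sigma} \fplus \evalue[\disto]{\st\sigma}{\ksem{\env_{\sigma}}{\kostf}{\te}}$, and the defining law $a \fplus b = b$ of $\fplus$ immediately discards the cost summand, leaving exactly $\evalue[\disto]{\st\sigma}{\ksem{\env_{\sigma}}{\kostf}{\te}}$. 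That completes item (2), and hence the corollary.

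I do not anticipate a genuine obstacle here: the corollary is \Cref{thm:qet-sound} read off under two carefully chosen cost-structures, and all the substance lies in that theorem and in the fact, recorded in the Example after \Cref{def:cs}, that $(\Rext,+)$ and $(\keg,\fplus)$ are indeed cost-structures. The only points deserving a sentence of justification are the evaluation of the constant continuation $\lambda X.0$ to the zero function together with the vanishing of $\mathsf{evalue}$ applied to it (needed for item (1)), and the harmless identification of $\realty$ with $\kty$ when $\kost = (\Rext,+)$, which is sound precisely because the two interpretations coincide there.
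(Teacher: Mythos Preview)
Your proposal is correct and matches the paper's intended argument: the corollary is stated immediately after \Cref{thm:qet-sound} with the abbreviations $\Rext$ for $(\Rext,+)$ and $\kostf$ for $(\keg,\fplus)$ introduced just before it, precisely so that the two items fall out by the instantiations you describe. Your side remarks on the vanishing of $\mathsf{evalue}$ at the zero continuation and on the identification of $\realty$ with $\kty$ under $\kost=(\Rext,+)$ are the right points to flag, and neither presents any real difficulty.
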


\newcommand{\prpred}{P}

\section{Analysis Through Refinement Types}
\Cref{cor:qet} shows how the kind of extensional properties that we
are interested in are expressible as cost-structure terms. This eliminates the need to reason
directly via quantum semantics, and enables classical analysis tools. To demonstrate this,
we adapt one such technique, a \emph{refinement type system} \cite{FP91}, to the cost-structure language.

\subsection{Refinement Types}

\begin{figure*}
  \begin{minipage}[t]{.8\textwidth}
    \begin{bnf}
      \Forms \ni \form
      & \prpred(\tv_1,\ldots,\tv_n) | \neg \form | \form\land\form | \form\lor \form | \form\Rightarrow \form| \forall \tz^\tty.\ \form | \exists \tz^\tty.\ \form
      & Formulae \\
      \RTypes  \ni \rty
      & \refine{\rtb}{\form} | (\tx \ofty \rty) \Rightarrow \rty | \forall \tx \ofty \rty.\ \rty
      & Refinement Types \\
      \rctx
      & \cdot | \rctx,\tx : \rty | \rctx, \form
      & Refinement Contexts
    \end{bnf}
  \end{minipage}

  \begin{spec}{Well-formedness of Types and Contexts}[\begin{array}{r}\INFwf \rctx\quad \rctx \INFwf \rty\end{array}]
    \begin{infers}
      \scalebox{1.0}{ \Infer{\phantom{\rty}\INFwf \cdot}{}}
      &
      \scalebox{1.0}{\Infer{ \INFwf \rctx,\tx : \rty }{\INFwf \rctx & \tx \notin \mathsf{dom}(\rctx) & \rctx \INFwf \rty}}
      &
      \scalebox{1.0}{\Infer{\INFwf \rctx, \form}{\INFwf \rctx& \skel{\rctx} \INFb \form : \Bool}}
      \\
      \scalebox{1.0}{\Infer{ \rctx \INFwf \refine{\rtb}{\form} }{
          \INFwf \rctx
          & \skel{\rctx},\tz : \rtb  \INFb \form : \Bool
        }}
      &
        \scalebox{1.0}{\Infer{ \rctx \INFwf (\tx \ofty \rty) \Rightarrow \rty' }{
          \INFwf \rctx
          & \rctx \INFwf \rty
          & \rctx,\tx : \rty \INFwf \rty'
        }}
      &
        \scalebox{1.0}{ \Infer{ \rctx \INFwf \forall \tx \ofty \rty.\ \rty' }{
          \INFwf \rctx
          & \rctx \INFwf \rty
          & \rctx,\tx : \rty \INFwf \rty'
        }}
    \end{infers}
  \end{spec}
  
  \begin{spec}{Typing Relation}[\rctx \INFr \te : \rty]
    \begin{infers}
      \scalebox{1.0}{\Infer[rty][ax][ax]{\rctx  \INFr \tx:\rty}{ \tx:\rty \in \rctx }}
      &
      \scalebox{1.0}{\Infer[rty][expi][\Rightarrow_i]{\rctx,\rctx' \INFr \lambda \tx.\te: (\tx \ofty \rty) \Rightarrow \rty'}{
          \rctx,\tx:\rty,\rctx' \INFr \te:\rty'
          & \tx\notin \mathsf{fv}(\rctx')
        }}
      &
      \scalebox{1.0}{\Infer[rty][expe][\Rightarrow_e]{\rctx \INFr \te \app \tv:\rty'[\tv / \tx]}{
          \rctx \INFr \te:(\tx \ofty \rty) \Rightarrow \rty'
          & \rctx \INFr \tv:\rty
        }}
      &
       \scalebox{1.0}{ \Infer[rty][st]{\rctx \INFr\ket{\psi}:\refine{\qbty}{\tz = \ket\psi}}{\strut}}
      \\
      \scalebox{1.0}{\Infer[rty][un]{
          \rctx\INFr \unary{\tv}:\refine{\qbty}{\tz = \unary{\tv} \land \form[\tv]}
        }{
          \rctx\INFr \tv:\refine{\qbty}{\form}
        }}
	&
	\scalebox{1.0}{\Infer[rty][prod][\otimes]{
          \rctx\INFr \tv_0\qtimes \tv_1:\refine{\qbty}{\tz = \tv_0 \qtimes \tv_1 \land \form_0[ \tv_0]  \land \form_1[\tv_1]}
        }{
          \rctx\INFr \tv_0:\refine{\qbty}{\form_0}
         & \rctx\INFr \tv_1:\refine{\qbty}{\form_1}
        }}
      \\
      \scalebox{1.0}{\Infer[rty][meas]{
          \tctx \INFr \Meas{b}{\tv} : \refine{\qbty}{\tz = \Meas{b}{\tv} \land \form[\tv]}
        }{
          \tctx \INFr \tv : \refine{\qbty}{\form}
        }}
      &
      \scalebox{1.0}{\Infer[rty][cons]{
          \rctx  \INFr \cons(\vec{\tv}):
          \refine[\tz]{\typ}{ \tz = \cons(\vec{\tv}) \land \overrightarrow{\land\form}[\vec{\tv}/\vec{\tz}]}
        }{
          \cons :: \vec{\typ} \to \typ
          & \rctx \INFr \overrightarrow{\tv : \refine{\typ}{\form}}
        }}
      \\
      \scalebox{0.92}{\Infer[rty][case]
        {\rctx  \INFr\caseof{\tv}[\cons(\vec{\tx})]{\te_0}[\txtwo]{\te_1}: \rty}
        {
          \cons :: \vec \typ  \to \typ
          & \rctx \INFr \tv : \refine{\typ}{\form}
          & \rctx, \txtwo : \refine{\typ}{\form}, \txtwo = \tv, \forall \vec{X}^{\vec{\typ}}.\ \txtwo \neq \cons(\vec{\tx}) \INFr \te_1 : \rty
          & \rctx, \overrightarrow{\tx : \typ}, \tv = \cons(\vec{\tx}) \INFr \te_0 : \rty
        }}
      \\
      \scalebox{1.0}{\Infer[rty][rec]{
          \rctx,\rctx' \INFr \letrec{\tf}{\tx}{\te} : \rrfty
        }{
            \rctx, \tf:  \rrfty,\rctx' \INFr  \lambda \tx .\te : \rrfty
            & \text{Admis}(\rctx \INFwf  \rrfty)
            & \tf \notin \mathsf{fv}(\rctx')
        }}
      &&
      \scalebox{1.0}{\Infer[rty][real]{\rctx \INFr \breal{r} : \refine{\realty}{\tz = \breal{r}}}{
          r \in \Rpos
        }}
      \\
      \scalebox{1.0}{\Infer[rty][opplus][\cadd]{
          \rctx  \INFr \te_0 \cadd \te_1 : \refine{\kty}{\exists \tz_0^{\realty}, \tz_1^{\kty}.\ \tz = \tz_0 \cadd \tz_1 \land \form_0[\tz_0] \land \form_1[\tz_1]}
        }{
          \rctx \INFr \te_0 : \refine{\realty}{\form_0} & \rctx \INFr \te_1 : \refine{\kty}{\form_1}
        }}
      \\
      \scalebox{1.0}{\Infer[rty][bary][\mathbin{\pmb{+}_{p_0}}]{
          \rctx \INFr \te_0\cbary{\tv} \te_1  : \{ \tz : \kty \mid {\exists \tz_0^{\kty}, \tz_1^{\kty}.\ \tz = \tz_0 \cbary{\tv} \tz_1 \land \form_0[\tz_0 ] \land\form_1[\tz_1 ]\land \form_2[\tv ]}\}
        }{
          \rctx \INFr \te_0 : \refine{\kty}{\form_0} & \rctx \INFr \te_1 : \refine{\kty}{\form_1}
          & \rctx \INFr \tv : \refine{\qbty}{\form_2}
        }}
      \\
      \scalebox{1.00}{\Infer[rty][Gen][\forall_i]{ \rctx, \rctx' \INFr \te : \forall \tx : \rty'.\ \rty }{
          \rctx, \tx : \rty', \rctx' \INFr \te : \rty
          & \tx \notin \mathsf{fv}(\rctx'), \mathsf{fv}(\te)
        }}
      & 
      \scalebox{1.00}{\Infer[rty][Inst][\forall_e]{ \rctx \INFr \te : \rty[\tv / \tx] }{
          \rctx \INFr \te : \forall \tx{:}\rty'.\ \rty
          &
          \rctx \INFr \tv : \rty'
        }}
      &
      \scalebox{1.00}{\Infer[rty][RTsub][\rsubtype]{ \rctx \INFr \te : \rty }{
          \rctx \INFr \te : \rty' & \rctx \INFs \rty' \rsubtype \rty
        }} \!\!
    \end{infers}
  \end{spec}

  \begin{spec}{Subtyping Relation}[\rctx \INFs \rty \rsubtype \rty']
    \begin{infers}
      \scalebox{1.0}{\Infer[rsubtype][refl][re]{\rctx \INFs \rty \rsubtype \rty}{}}
      &
      \scalebox{1.0}{\Infer[rsubtype][trans][tr]{\rctx \INFs \rty_0 \rsubtype \rty_2}{
          \rctx \INFs \rty_0 \rsubtype \rty_1
          & \rctx \INFs \rty_1 \rsubtype \rty_2
        }}
      &
      \scalebox{1.0}{\Infer[rsubtype][basic][ba]{
          \rctx \INFs \refine[\tz]{\rtb}{\form} \rsubtype \refine[\tz]{\rtb}{\formtwo}
        }{
          \rctx \VALID \forall \tz^\rtb.\ \form \Rightarrow \formtwo
        }}
      \\
      \scalebox{1.0}{\Infer[rsubtype][arr][ar]{
          \rctx \INFs \tx : \rty_0 \Rightarrow \rty_0' \rsubtype \tx : \rty_1 \Rightarrow \rty_1'
        }{
          \rctx \INFs \rty_1 \rsubtype \rty_0
          & \rctx, \tx : \rty_1 \INFs \rty_0' \rsubtype \rty_1'
        }}
    \end{infers}
  \end{spec}

\caption{Refinement Types, Typing Rules, Well-Formedness and Subtyping}
\label{fig:refine:sub}
\end{figure*}

Refinement types can be seen as a specific form of dependent type system where
logical formulae are used to constrain base types. Our system, depicted in Figure~\ref{fig:refine:sub}
extends upon the simply typed system imposed on the cost-structure language in 3 essential ways:
\begin{varitemize}
  \item base types $\refine{\rtb}{\form}$ permit refining base types $\rtb \in \{\kty,\realty\} \cup \Basic$
  of $CS$ to values $\tz$ satisfying the formula $\form$;
  \item function types $(\tx \ofty \rty) \Rightarrow \rty'$ become dependent, where the term variable $\tx$ of type $\rty$ may occur in $\rty'$;
  \item universal quantification $\forall \tx\! \ofty\! \rty.\ \rty'$ allows the bounded quantification
  of $\tx$, constrained by refinement type $\rty$, within $\rty'$.
\end{varitemize}
For instance, $\Nat^{> 0} = \refine{\Nat}{\tz \neq 0}$ defines the type of strictly positive natural numbers,
or
\[
  (X : \Nat^{> 0}) \Rightarrow (Y : \Nat^{>0}) \Rightarrow \refine{\Nat}{Z \leq X \land Z \leq Y}
\]
refers to a function that takes two strictly positive naturals, and produces as output a natural smaller than
both inputs.

As indicated in Figure~\ref{fig:refine:sub},
formulae $\form$ are first-order formulae over Boolean predicates $P$ (e.g., $P(X_1,X_2,X_3) \defsym X_1 = X_2 \otimes X_3$,
stating that $X_1$ is the result of tensoring $X_2$ and $X_3$).
While we keep the set of predicates open, we do assume that formulae $\form$ follow a simple typing regime, and write
$\tctx \INFb \form : \Bool$ to say that $\form$ is a well-typed formula, with variables taking types
according to the context $\tctx$. This way, we ensure that refinements can always be interpreted as Booleans.
We write $\form[\tv / \tx]$ for the formula obtained by replacing all occurrences of $\tx$ by the cost-structure term $\te$ in $\form$, or simply $\form[\tv]$ when $\tx$ is clear from context. This notation extends to
refinement types. Note however that, in $\refine{\rtb}{\form}$, $(\tx \ofty \rty) \Rightarrow \rty'$
and $\forall \tx : \rty.\ \rty'$
the variables $\tz$ and $\tx$ are bound in $\form$ and in $\rty'$, respectively.
To avoid notational overhead, we write the unrefined type $\refine{\rtb}{\top}$ simply as $\rtb$, and
$(\tx \ofty \rty) \Rightarrow \rty'$ as $\rty \Rightarrow \rty'$ when $\tx$ does not occur free in $\rty'$.

Typing judgments of the refinement type system take the form
$
  \rctx \INFr \te : \rty
$
where $\rctx$ is a refinement context, $\te$ a term in the cost-structure language, and $\rty$ a refinement type. Context $\rctx$ are lists of variable bindings $\tx : \rty$ and formulae $\form$.
The symbol $\cdot$ is used to denote the empty context. In a context $\rctx, \tx : \rty, \rctx'$,
the variable binding $\tx : \rty$ binds the variable $\tx$ in $\rctx'$ (but not in $\rctx$).
The set of variables occurring free in $\rctx$ are denoted by $\mathsf{fv}(\rctx)$.
Formulae in typing contexts will be used to incorporate a path-based analysis into the typing regime,
they restrict the set of environments $\rho$ that a context will be interpreted as.
For instance, the context
\[
  X : \Nat, Y : \refine{\Nat}{Z \leq X}, X \neq Y
\]
refers to all environments $\rho$ where $\rho(X)$ is strictly larger than $\rho(Y)$.

Not every judgment is well-formed. For instance, in $\rctx \INFr \te : \rty$, we require that each variable in $\rctx$ is bound exactly once in a typing
context, and that $\rty$ is closed with respect to $\rctx$. Well-formedness of contexts and types (under a context)
is made precise in
Figure~\ref{fig:refine:sub} using the judgments $\INFwf \rctx\text{ and }\rctx \INFwf \rty$.
In the rules, $\skel{\rty}$ ($\skel{\rctx}$) refers to simple type (context) obtained by dropping all refinements:
\begin{align*}
 \skel{ \refine{\rtb}{\form}} &\triangleq \rtb \\
 \skel{(\tx \ofty \rty) \Rightarrow \rty'} &\triangleq \skel{\rty} \Rightarrow \skel{\rty'}\\
  \skel{\forall \tx \ofty \rty.\ \rty'} &\triangleq \skel{\rty'}
  \end{align*}
 We extend this definition to RT contexts in the expected way.
The well-formedness will be necessary to the well-definedness of the interpretation of typing judgments.
From now on, whenever we write a judgment $\rctx \INFr \te : \rty$, we implicitly assume it is well-formed,
i.e., $\rctx \INFwf \rty$.

\subsection{Typing Rules}
They are presented in Figure~\ref{fig:refine:sub}.
Rules~\ref{rty:ax}, \ref{rty:expi}, and \ref{rty:expe} are the usual rules for variables, $\lambda$-abstraction
and application. Note that in the latter, the value $V$ is substituted for the variable $X$ in the return-type $\rty'$.
Rules~\ref{rty:un}--\ref{rty:meas} directly reflect the semantics of the considered term within the type.
Rules~\ref{rty:cons} and \ref{rty:case} deal with constructors and destructors.
In the rules, the notation $\rctx \INFr \overrightarrow{\tv : \refine{\typ}{\form}}$
is shorthand for $\rctx \INFr \tv_i:  \refin{\typ_i}{\form_i}{\tz_i}$, for $1 \leq i \leq n$.
Similarly, $\overrightarrow{\land \form}$ is a shorthand notation for $\form_1 \land \ldots \land \form_n$ and $[\vec{\tv}/\vec{\tz}]$ is the natural extension of substitution to sequences.

Rule~\ref{rty:rec}, for recursive definitions, is the most interesting rule, because it deals
with the non-standard semantics given to recursive functions in the CS language. Soundness of the rule
hinges on the admissibility condition $\text{Admis}(\rctx \INFwf \rrfty)$, which ensures that the fixpoint generated by the recursive definition preserves the refinement type.
Admissibility allows inductive reasoning, whilst not imposing a termination condition.
This notion, defined semantically for brevity, will be formally set up in Definition~\ref{def:admissible}.

Rules~\ref{rty:real}, \ref{rty:opplus} and \ref{rty:bary} follow again the semantics of CS terms. In the latter two, the existentially quantified variables refer to the values that arguments of the corresponding operator evaluate to.
Rules~\ref{rty:Gen} and \ref{rty:Inst} permit generalization and instantiation.
While in dependently typed languages, universal type quantification is usually interpreted as a dependent function with the corresponding abstraction and type application, in our system,
universal quantification will be interpreted as an intersection. This way, we avoid explicit type applications in the language.
Finally, as customary for refinement type systems, Rule~\ref{rty:RTsub} introduces a subtyping relation into the type system.
Subtyping essentially permits weakening on refinements.

Formally, subtyping is stated as a judgment of the form $\rctx \INFs \rty \rsubtype \rty'$,
stating that $\rty$ is a subtype of $\rty'$ under RT context $\rctx$. Again, we tacitly assume well-formedness, i.e., $\rctx \INFwf \rty$ and $\rctx \INFwf \rty'$. The rules for subtyping
are given in \Cref{fig:refine:sub}.
All rules are standard. As usual, the case for comparing base types, Rule~\ref{rsubtype:basic},
hinges on a semantic condition. 
  Formally, given a formula $\form$ and refinement context $\rctx$, the judgment $\rctx \VALID \form$ asserts
  that every environment $\env$ adhering to the refinement context (i.e., $\env \VALID \rctx$) also satisfies the formula $\form$ (i.e., $\env \VALID \form$). We make these satisfaction relations, and the type semantics underlying them, precise below.

\subsection{Interpretation}
We endow the refinement types with an interpretation. 
To this end, we assume that each predicate $P$ comes with a fixed semantic $\rtsem{P}$, a relation in  $\typesem{\tty_1} \times \ldots \times \typesem{\tty_n}$, with $\tty_1, \ldots \tty_n \in \TTypes$.
  This way, we can define satisfaction of a formula $\form$ w.r.t. to an environment $\env$ 
  in a standard way, see \Cref{fig:semforms}. 
We can now interpret refinement types in the usual way.
This interpretation is extended to refinement types, and then to RT contexts, in the expected way.
An environment $\env$ adheres to a context $\rctx$, in notation $\env \VALID \rctx$, if $\env \in \rtsem{\rctx}$. 
Notice that well-formedness $\rctx \INFwf \rty$ allows us to interpret
the type $\rty$ as a subset of the interpretation of the underlying simple type $\skel{\rty}$, indexed
by valuations $\env \VALID \rctx$, i.e., $\rtsem[\env]{\rty} \subseteq \typesem{\skel{\rty}}$.

Using the interpretation of refinement types, it is now possible to formally define the notion of admissibility used in Figure~\ref{fig:refine:sub}.
\begin{definition}[Admissibility]\label{def:admissible}
  Given a cost-structure $\kost$, we say that $\mathbb{S} \subseteq \kost$ is an \emph{admissible subset} of $\kost$ if $\bot\in \mathbb{S}$ and, for any $\omega$-chain $x_0\sqsubseteq x_1 \sqsubseteq \cdots$ in $\mathbb{S}$, $\lub_{n\in\N}x_n \in \mathbb{S}$. I.e., $\mathbb{S}$ is a pointed $\omega$-cpo.

  The judgment $ \rctx \INFwf \rty $ is \emph{admissible}, noted  $\text{Admis}(\rctx \INFwf \rty)$,  if
  \begin{itemize}
    \item $\rty = \refine{\kty}{\form}$ and
      $\rtsem[\env]{\rty}$ is an admissible subset of $\kost$ for every $\env \VALID \rctx$.
    \item $\rty = (\tx \ofty \rty') \Rightarrow \rty''$ and $\text{Admis}(\rctx, \tx : \rty' \INFwf \rty'')$ holds.
    \item $\rty =  \forall \tx \ofty \rty'.\ \rty''$ and $\text{Admis}(\rctx, \tx : \rty' \INFwf \rty'')$ holds.
  \end{itemize}
\end{definition}
The refinement type $(\tx \ofty \rty) \Rightarrow \refine{\kty}{\tz \sqsubseteq e}$ will be admissible when $\tz$ does not appear free in $e$.
For example, the type $(\tx \ofty \rty) \Rightarrow \refine{\realty}{\tz \leq X^2+X+1}$ is admissible,
whereas $\refine{\kty}{1 \sqsubseteq \tz}$ is not (as its interpretation does not contain $\bot$).
Thus, Rule~\ref{rty:rec} allows to derive upper bounds, but not lower bounds.

\begin{figure}[t]
  \columnwidth=\linewidth
   \begin{spec}{Satisfaction of Formulae (excerpt)}
     \vspace*{-0.1cm}
     \begin{alignat*}{2}
       \env & \VALID \prpred(\te_1,\ldots,\te_n)&&\quad\Leftrightarrow\quad  (\ksem{\env}{\kost}{\te_1},\ldots,\ksem{\env}{\kost}{\te_n})\in\rtsem{\prpred}\\
       \env & \VALID \form_1\land\form_2&&\quad\Leftrightarrow\quad  \env \VALID \form_1 \mbox{ and } \env \VALID \form_1 \\
       \env & \VALID \exists \tz^{\tty}.\form &&\quad\Leftrightarrow\quad \mbox{ Exists $v \in\ksem{\env}{\kost}{\tty}$, }\env\{\tz :=v\} \VALID \form \\
       &  && \quad \ldots
     \end{alignat*}
   \end{spec}
  \vspace*{-0.1cm}
  \begin{spec}{Interpretation of Refinement Types}
     \vspace*{-0.1cm}
    \begin{align*}
      \rtsem[\env]{\refine{\rtb}{\form}}
      & \defsym \{ v \in \typesem{\rtb} \mid \rtsem[{\env\{\tz := v\}}]{\form} \}\\
      \rtsem[\env]{(\tx \ofty \rty) \Rightarrow \rty'}
      & \defsym
        \begin{lalign}
          \{\ f \in \typesem{\skel{\rty} \Rightarrow \skel{\rty'}} \\
          \mid \forall v \in \rtsem[\env]{\rty},\ \apply(f,v) \in \rtsem[\env\{\tx := v\}]{\rty'} \!\}
        \end{lalign} \\
      \rtsem[\env]{\forall \tx \ofty \rty.\ \rty'} & \defsym
              \begin{lalign}
      \{ v' \in \typesem{\skel{\rty'}}
      \!\mid  \!\forall v \in \rtsem[\env]{\rty}\!,\,v' \in \rtsem[\env\{\tx := v\}]{\rty'} \}
        \end{lalign}
    \end{align*}
  \end{spec}
   \vspace*{-0.1cm}
  \begin{spec}{Interpretation of Refinement Contexts}
    \vspace*{-0.1cm}
    \begin{align*}
      \rtsem{\cdot} & \defsym \{\emptyset\} \\
      \rtsem{\rctx,\tx : \rty} & \defsym \{ \env\{\tx := v\} \mid \env \in \rtsem{\rctx}, v \in \rtsem[\env]{\rty} \} \\
      \rtsem{\rctx,\form} & \defsym \{\rho \in \rtsem{\rctx} \mid \rtsem[\rho]{\form}\}
    \end{align*}
  \end{spec}
\caption{Interpretation of Formulae and Refinement Types.}
  \label{fig:semforms}
\end{figure}

\subsection{Soundness}
We now present the main technical result of this section, \emph{soundness of the
refinement type system}, informally, stating that typable terms adhere to their type specification.
Formally, the semantic interpretation of a typable term lies within the interpretation of its corresponding refinement type, as stated below.
\begin{theorem}[Type soundness]\label{thm:rts}
  If $\rctx \INFr \te : \rty$ then
  $\ksem{\env}{\kost}{\te} \in \rtsem[\env]{\rty}$
  for every $\env \VALID \rctx$.
\end{theorem}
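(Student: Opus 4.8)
�The plan is to prove Theorem~\ref{thm:rts} by induction on the derivation of $\rctx \INFr \te : \rty$, strengthening the statement as needed so that the induction goes through smoothly. Before starting, I would establish a couple of routine auxiliary facts: (i) a \emph{substitution lemma} for the interpretation of refinement types and formulae, namely that $\rtsem[\env\{\tx := \ksem{\env}{\kost}{\tv}\}]{\rty}$ coincides with $\rtsem[\env]{\rty[\tv/\tx]}$ (and similarly for formulae), which is needed for the application rules $(\Rightarrow_e)$ and $(\forall_e)$; and (ii) a \emph{weakening lemma} allowing the valuation $\env$ to be extended on fresh variables without affecting the interpretation, which handles the side conditions $\tx \notin \mathsf{fv}(\rctx')$ in $(\Rightarrow_i)$, $(\forall_i)$, and $(\mathsf{rec})$. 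Both follow by straightforward structural inductions using the definitions in \Cref{fig:semforms} and \Cref{fig:denot}.

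With those in hand, most cases of the main induction are direct. For the axiom rule, $\env \VALID \rctx$ and $\tx : \rty \in \rctx$ immediately give $\env(\tx) = \ksem{\env}{\kost}{\tx} \in \rtsem[\env]{\rty}$ by definition of $\rtsem{\cdot}$ on contexts. The rules $(\Rightarrow_i)$, $(\Rightarrow_e)$, $(\forall_i)$, $(\forall_e)$ follow from the inductive hypotheses combined with the substitution/weakening lemmas and the defining clauses of $\rtsem[\env]{(\tx \ofty \rty) \Rightarrow \rty'}$ and $\rtsem[\env]{\forall \tx \ofty \rty.\ \rty'}$. The ``semantics-reflecting'' rules $(\mathsf{st})$, $(\mathsf{un})$, $(\otimes)$, $(\mathsf{meas})$, $(\mathsf{real})$, $(\cadd)$, $(\mathsf{+}_{p_0})$, $(\mathsf{cons})$ amount to checking that the denotation computed by \Cref{fig:denot} satisfies the stated refinement formula, which it does essentially by construction (the formula just records the syntactic shape of the term together with the refinements of its subterms, transported through the inductive hypotheses). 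For $(\mathsf{case})$, one splits on whether $\ksem{\env}{\kost}{\tv}$ is of the form $\cons(\vec{W})$: in the first branch the extended valuation validates the context $\rctx, \overrightarrow{\tx : \typ}, \tv = \cons(\vec{\tx})$ so the IH for $\te_0$ applies; in the second branch the valuation validates $\rctx, \txtwo : \refine{\typ}{\form}, \txtwo = \tv, \forall \vec{X}^{\vec{\typ}}.\ \txtwo \neq \cons(\vec{\tx})$, so the IH for $\te_1$ applies. The subtyping rule $(\rsubtype)$ is handled by a separate, easy lemma stating that $\rctx \INFs \rty' \rsubtype \rty$ implies $\rtsem[\env]{\rty'} \subseteq \rtsem[\env]{\rty}$ for all $\env \VALID \rctx$, proved by induction on the subtyping derivation (the base case uses soundness of the validity judgment $\rctx \VALID \forall \tz^\rtb.\ \form \Rightarrow \formtwo$, and the arrow case is the usual contravariant/covariant argument).

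The main obstacle is the recursion rule $(\mathsf{rec})$, which is exactly where the admissibility hypothesis earns its keep. Here $\ksem{\env}{\kost}{\letrec{\tf}{\tx}{\te}}$ is $\lfp(\Phi)$ for the functional $\Phi(f) \defsym (x \mapsto \ksem{\env\{\tf := f; \tx := x\}}{\kost}{\te})$, i.e. $\lub_{n} \Phi^n(\bot)$. I would argue that $\rtsem[\env]{\rrfty}$ is closed under this construction: first, using $\text{Admis}(\rctx \INFwf \rrfty)$ unfolded through its arrow/$\forall$ layers down to a $\refine{\kty}{\form}$ core, observe that $\rtsem[\env]{\rrfty}$ is a pointed $\omega$-cpo (pointwise at the $\kty$ level), so $\bot \in \rtsem[\env]{\rrfty}$; second, show $\Phi$ preserves membership in $\rtsem[\env]{\rrfty}$ — this is precisely the inductive hypothesis applied to the premise $\rctx, \tf : \rrfty, \rctx' \INFr \lambda\tx.\te : \rrfty$, since any $f \in \rtsem[\env]{\rrfty}$ yields a valuation $\env\{\tf := f\} \VALID \rctx, \tf : \rrfty$ (modulo the weakening lemma for $\rctx'$); third, conclude by induction that $\Phi^n(\bot) \in \rtsem[\env]{\rrfty}$ for all $n$, and then that the least upper bound $\lub_n \Phi^n(\bot)$ lies in $\rtsem[\env]{\rrfty}$ by the closure-under-$\omega$-chains part of admissibility. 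The subtlety to get right is that admissibility must be threaded correctly through the nested $(\tx \ofty \rty') \Rightarrow \rty''$ and $\forall \tx \ofty \rty'.\ \rty''$ layers of $\rrfty$ so that pointedness and chain-closure hold ``uniformly in the arguments,'' which is exactly what Definition~\ref{def:admissible} is set up to provide; I would phrase this as a small lemma: if $\text{Admis}(\rctx \INFwf \rrfty)$ then for every $\env \VALID \rctx$, $\rtsem[\env]{\rrfty}$ is an admissible subset of $\typesem{\skel{\rrfty}}$, and continuous functions preserving it are closed under $\lfp$.
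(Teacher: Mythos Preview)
Your proposal is correct and follows essentially the same approach as the paper: induction on the typing derivation, supported by substitution lemmas for terms, formulae, and refinement types, together with a separate subtyping-soundness lemma. Your treatment of the recursion case is in fact more explicit than the paper's (which simply invokes admissibility without spelling out the Kleene-iteration argument), and your additional weakening lemma and the small lemma lifting admissibility through arrow/$\forall$ layers are useful refinements that the paper leaves implicit.
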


  Coupled with soundness of the pre-expectation transformer (Corollary~\ref{cor:qet}) we obtain
  a sound methodology for reasoning about quantitative properties of quantum programs:

  \begin{corollary}\label{cor:qet-refine}
    Let $\cctx;\qctx \vdash \st \ofty \typ$ be a term.
    The following both hold:
    \begin{enumerateenv}
      \item If $\rctx \INFrr \qet{\st}{\lam{\tx}{0}} : \rty$ then
      $\ecost[\disto]{\st\sigma} \in \rtsem[{\env_\sigma}][\kost]{\rty}$
      for any substitution $\sigma$ with $\env_\sigma \VALID[\Rext] \rctx$,
      and 
      \item
        if $\rctx \INFr[\kostf] \qet{\st}{\te} : \rty$ then $\evalue[\disto]{\st\sigma}{\ksem{\env_{\sigma}}{\kostf}{\te}} \in \rtsem[{\env_\sigma}][\kostf]{\rty}$ for any
        cost-structure term $\qetv{\cctx},\qetv{\qctx}\INF \te : \typ \to \kty$
        and substitution $\sigma$ with $\env_\sigma \VALID[\Rext] \rctx$. 
     \end{enumerateenv}
  \end{corollary}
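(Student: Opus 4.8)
The plan is to derive both items directly by composing the soundness of the refinement type system (Theorem~\ref{thm:rts}) with the appropriate half of the soundness of the expectation transformer (Corollary~\ref{cor:qet}); no genuinely new argument is required, only a careful instantiation of the generic cost-structure $\kost$ in each case. For item~(1), I would fix a substitution $\sigma$ with $\env_\sigma \VALID[\Rext] \rctx$. From $\cctx;\qctx \vdash \st : \typ$ the type-preservation property of the transformer yields that $\qet{\st}{\lam{\tx}{0}}$ is a well-typed cost-structure term of type $\kty$, so $\rty$ refines $\kty$ and, over the cost-structure $(\Rext,+)$, $\typesem{\kty} = \Rext$. Instantiating Theorem~\ref{thm:rts} at $\kost = (\Rext,+)$ and feeding it the hypothesis $\rctx \INFrr \qet{\st}{\lam{\tx}{0}} : \rty$ gives $\ksem{\env_\sigma}{\Rext}{\qet{\st}{\lam{\tx}{0}}} \in \rtsem[{\env_\sigma}][\Rext]{\rty}$. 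Corollary~\ref{cor:qet}(1) identifies the left-hand side with $\ecost[\disto]{\st\sigma}$ — its side conditions $\cctx;\qctx \vdash \st : \typ$ and $FV(\st) \subseteq dom(\sigma)$ are exactly the standing hypotheses — so rewriting along that equality produces $\ecost[\disto]{\st\sigma} \in \rtsem[{\env_\sigma}][\Rext]{\rty}$, as claimed.

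Item~(2) follows by the identical two-step composition, now with $\kost$ instantiated to the cost-free cost-structure $\kostf = (\keg,\fplus)$, for which $\typesem{\kty} = \keg$. Given in addition a cost-structure term $\qetv{\cctx},\qetv{\qctx} \INF \te : \typ \to \kty$ and $\sigma$ with $\env_\sigma \VALID[\Rext] \rctx$, Theorem~\ref{thm:rts} applied to $\rctx \INFr[\kostf] \qet{\st}{\te} : \rty$ gives $\ksem{\env_\sigma}{\kostf}{\qet{\st}{\te}} \in \rtsem[{\env_\sigma}][\kostf]{\rty}$, and Corollary~\ref{cor:qet}(2) rewrites the left-hand side as $\evalue[\disto]{\st\sigma}{\ksem{\env_\sigma}{\kostf}{\te}}$, which is the asserted membership.

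I do not expect a real obstacle here: the proof is a direct chaining of two already-established soundness results. The only point demanding attention will be threading the cost-structure consistently through the chain — Corollary~\ref{cor:qet} already pins $\kost$ to $(\Rext,+)$ in its first item and to $\kostf$ in its second, so Theorem~\ref{thm:rts} must be read at exactly that same $\kost$, and the refinement type $\rty$, the context validity $\env_\sigma \VALID \rctx$, and the interpretation $\rtsem{\cdot}$ all have to be understood in that same model. Beyond this, one only needs the implicit well-formedness side conditions, namely $\rctx \INFwf \rty$ and well-definedness of the denotations $\ksem{\env_\sigma}{\kost}{\qet{\st}{\cdot}}$; both are supplied by the type-preservation property of the transformer and the standing well-formedness convention for refinement judgments.
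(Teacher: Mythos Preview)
Your proposal is correct and matches the paper's approach exactly: the paper presents this corollary without proof, merely noting that it is obtained by coupling Theorem~\ref{thm:rts} with Corollary~\ref{cor:qet}, and your argument spells out precisely that composition with the right instantiation of the cost-structure in each part.
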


\section{Examples}
\label{sec:examples}

In this section we demonstrate our methodology on three examples.

\begin{example}[Quantum cointoss]\label{ss:cointoss}

We start by re-visiting the simple cointossing example from the introduction and show
how our methodology can be used to reason about its expected cost.
Reconsider the cost-structure function
\begin{align*}
  \COINTOSS \defsym {}
  & \letrec*{CT}{X, \cont}{
    \bun \cadd (\cont \app \Meas{0}{X} \cbary{X} CT \app \unary[H]{(\Meas{1}{X})} \app \cont)
    }
\end{align*}
computed in \Cref{ex:cointoss:qet}.
We will attribute $\COINTOSS$ the refinement type 
\[
  \rty \defsym (X:\qty) \Rightarrow \refine{\realty}{\tz \leq c(X)},
\]
where the  function $c$ is yet to be determined.
Let $\rctx \defsym CT:\rty,X:\qty$.

Proceeding inside-out, using
Rules~\ref{rty:ax}, \ref{rty:meas}, \ref{rty:un}, respectively,
and subtyping, one can derive
$
  \rctx
  \INFrr \unary[H]{(\Meas{1}{X})}
  : \qty
,
$
\noindent thus $$
  \rctx \INFrr CT \app \unary[H]{(\Meas{1}{X})} :
  \refine{\realty}{\tz \leq c(\unary[H]{(\Meas{1}{X})})}
  \,,$$
with Rules~\ref{rty:ax} and \ref{rty:expe}. Using Rules~\ref{rty:bary} and \ref{rty:real}
it follows:
\begin{multline*}
\rctx \INFrr \bzero \cbary{X} CT \app \unary[H]{(\Meas{1}{X})} :\\
  \refine{\realty}{\exists \tz_0, \tz_1.\,
  \tz = \tz_0 \cbary{X} \tz_1 \land \tz_0 = \bzero \land \tz_1 \leq c(\unary[H]{(\Meas{1}{X})})}
\end{multline*}
thus, by Rule~\ref{rty:RTsub},
\[
  \rctx
  \INFrr\!\!\!
  \bzero \cbary{X} CT \app \unary[H]{(\Meas{1}{X})} :
  \refine{\realty}{\tz \leq p_1(X) \times c(\unary[H]{(\Meas{1}{X})})}
  .
\]
Reasoning similarly using Rules~\ref{rty:opplus}
and \ref{rty:RTsub}, we derive
\begin{multline*}
  \rctx
  \INFrr
  \bun \cadd  (\bzero \cbary{X} CT \app \unary[H]{(\Meas{1}{X})}) :\\
  \refine{\realty}{\tz \leq 1 + p_1(X) \times c(\unary[H]{(\Meas{1}{X})})}
\end{multline*}
and thus via another application of Rule~\ref{rty:RTsub},
\[
  \rctx
  \INFrr
  \bun \cadd  (\bzero \cbary{X} CT \app \unary[H]{(\Meas{1}{X})}) :
  \refine{\realty}{\tz \leq c(X)}
  \,,
\]
provided
\begin{equation}
 \label{eq:ct:constr}
\tag{\dag}
  \rctx \VALID[\Rext] \forall Z. \tz \leq 1 + p_1(X) \times c(\unary[H]{(\Meas{1}{X})}) \Rightarrow \tz \leq c(X) \,.
\end{equation}
By admissibility of $\rty = (X:\qty) \Rightarrow \refine{\realty}{\tz \leq c(X)}$, this will permit us to finally derive
$$\INFrr \COINTOSS : (X:\qty) \Rightarrow \refine{\realty}{\tz \leq c(X)}$$ by Rules~\ref{rty:rec} and \ref{rty:expi}. Observe that the
constraint \eqref{eq:ct:constr} can be shown to hold taking
$c(X) \defsym 1 + 2 \times p_1(X)$, since then
\begin{align*}
  1 + p_1(X) \times c(\unary[H]{(\Meas{1}{X})})
  & = 1 + p_1(X) \times (1 + 2 \times p_1(\unary[H]{(\Meas{1}{X})}) \\
  & = 1 + p_1(X) \times (1 + 2 \times \sfrac{1}{2}) = c(X).
\end{align*}
In conclusion
\[
  \cdot \INFrr \COINTOSS : (X:\qty) \Rightarrow \refine{\realty}{\tz \leq 1 + 2 \times p_1(X)},
\]
and hence also $  Y : \qty \INFrr \COINTOSS \app Y : \{\tz \leq 1 + 2 \times p_1(Y)\}.$

Putting things together, Corollary~\ref{cor:qet-refine} certifies that
the expected cost of $\COINTOSS \app \ket{\psi}$ is bounded from above by $1 + 2 \times p_1(\ket{\psi})$.
\end{example}

\vspace*{-0.1cm}
\begin{example}[Higher-order quantum cointoss]\label{ss:qwalk}
\newcommand{\qwalk}{\ensuremath{\mathsf{qwalk}}}
\newcommand{\QWALK}{\ensuremath{\mathsf{QWALK}}}
To demonstrate that our methodology also encompasses a modular analysis of higher-order functions,
we consider a higher-order variation of $\cointoss$, $\qwalk :  \qbty \multimap  (\qbty \multimap \qbty)  \Rightarrow \qbty $, generalizing the Hadamard
gate applied at each iteration to an arbitrary quantum state transformation $f : \qbty \multimap \qbty$.
\begin{flalign*}
  & \qwalk \defsym
    \letrec{qw}{\sx\, f}{
    \caseof**{\tick{(\meas \sx)}}
    [\inj_0(;\sx_0)]{\sx_0}
    [\inj_1(;\sx_1)]{qw \app (f \app \sx_1) \app f}
    }
\end{flalign*}
Notice that $\cointoss \app \sy \equiv \qwalk \app \sy \app (\lambda X. \unary[H]{X})$.

Proceeding in a similar way to the previous example, we obtain
\[
  \qet{\qwalk \app \sy \app g}{\lambda Z.\ \bzero}
  \equiv \QWALK \app Y \app G ,
\]
for
$
  \QWALK \defsym
  \letrec*{QW}{X\,F}{
    \bun \cadd (\bzero \cbary{X} F \app (\Meas{1}{X}) \app (\lambda X'. QW \app X' \app F)).
  }
$

Notice how the recursive call defines the continuation of
$F : \qbty \Rightarrow (\qbty \Rightarrow \realty) \Rightarrow \realty$.
As in the previous example, the central step lies now in attributing
a suitable refinement type to the recursive function. Specifically,
one can prove
\[
 \cdot \INFrr \QWALK :
  \begin{lalign}
    \forall c : (\qbty \Rightarrow \realty). ( X : \qbty)\\
    \Rightarrow
    \begin{lalign}
      (\refine{\qbty}{\tz = \Meas{1}{X}} \\
      \Rightarrow (X' : \qbty \Rightarrow \refine{\realty}{\tz \leq c(X')}) \\
      \Rightarrow \refine{\realty}{1 + (0 \bary{p_0(X)} \tz) \leq c(X)})
    \end{lalign} \\
    \Rightarrow \refine{\realty}{\tz \leq c(X)}\,.
  \end{lalign}
\]
The above type makes use of bounded quantification, with $c: \qbty \Rightarrow \realty$
referring to the expected cost of an invocation of $\QWALK$. This generality is essential,
the overall cost of $\qwalk$ essentially depends on the state transformation $f$, represented
by $F$ in $\QWALK$. The refinement type associated to this argument reflects the minimal
constraint necessary to derive an overall cost of $c(X)$.

To recover the bound inferred for $$\cointoss \app y \equiv \qwalk \app y \app (\lambda X. \unary[H]{X}),$$ we specify the type of $\QWALK$
taking $c(X) \defsym 1 + 2 \times p_1(X)$, noting that
$\qetv{(\lambda X.\ \unary[H]{X})} = \lambda X.\lambda K.\ K \app (\unary[H]{X})$ can be typed as:
\[
  \cdot \INFrr \qetv{(\lambda X.\ \unary[H]{X})} :
  \begin{lalign}
      \refine{\qbty}{\tz = \Meas{1}{X}} \\
      \hspace{-4mm}\Rightarrow (X' : \qbty \Rightarrow \refine{\realty}{\tz \leq 1 + 2 \times p_1(X')}) \\
      \hspace{-4mm}\Rightarrow \refine{\realty}{1 + (0 \bary{p_0(X)} \tz) \leq 1 + 2 \times p_1(X)}
    \end{lalign}
\]
\end{example}

\vspace*{-0.1cm}
\begin{example}[Grover search]\label{ss:grover}
\newcommand{\grover}{\ensuremath{\mathsf{grover}_{f}}}
\newcommand{\GROVER}{\ensuremath{\mathsf{GROVER}_{f}}}
\cite{G96}'s quantum search algorithm finds a marked item in an
unsorted database of $2^n$ items with a quadratic speedup over classical search.
It relies on two key operators: $U_f$ and $G$. The function $f : \mathbb{N} \to
\mathbb{B}$ determines whether an item is marked or not. The oracle operator $U_f$ flips the sign of the amplitude of the
marked state (encoding the solution to the search problem), while the
Grover operator $G$ amplifies the probability of the marked state.
Repeated application of $U_f$ to
flip the sign and $G$ to amplify the amplitude of the marked state continues
until a measurement is performed, revealing the solution with high probability
in $O(\sqrt{2^n})$ iterations.
The oracle $U_f$ for the function $f$ is defined by
$
  U_f\ket{x}\ket{y} \defsym \ket{x}\ket{y \oplus f(x)}\textrm{, for }x\in\{0,1\}^n.
$ Thus, 
$$U_f\ket{x}\ket{-} = (-1)^{f(x)}\ket{x}\ket{-}.$$

Grovers algorithm, for fixed $f$ on a data-set of $n$ entries, is given by the term $\grover : \Nat \Rightarrow \qbty$ defined as:
\begin{flalign*}
  & \grover \defsym \letrec{grov}{m}{
    \caseof**{m}
     [\zero]{\ket{\psi_0} \ } 
     [\suc(m';\!)]{\unary[G]{(\unary[U_f]{(grov \app m'))}}}
  }
\end{flalign*}
where $
  G \triangleq 2\ket{\phi}\bra{\phi} - I
$ and $\ket{\psi_0}  \triangleq (H \ket{0})^{\otimes n} \otimes \ket{-}$.

We are interested in binding the error probability,
described by $\qet{\grover \app \ite}{\mathsf{err}}$ with the
term $\emptyset \INF \mathsf{err} : \qbty \Rightarrow [0,1]$
measuring the error probability $1 - |\alpha_b|^2$, given 
a quantum state $\sum_{b_i \in \{0,1\}^n} \alpha_i \ket{b_i}$,
Specifically, we seek an upper-bound
depending on the number of iterations $\ite$.
Observe that $$\qet{\grover \app \ite}{\mathsf{err}} \equiv \GROVER \app \ITE \app \mathsf{err},$$ where
\begin{flalign*}
& \GROVER \defsym
  \letrec*{GROV}{M\,K}{
     \caseof**{M}
        [\zero]{K \app \ket{\psi_0}}
        [\suc(M';\!)]{GROV \app M' \app (\lambda X.\ K \app \unary[G]{(\unary[U_f]{X})})}
  }
\end{flalign*}
Noting $\mathbb{P} \triangleq ([0,1],\fplus)$, we can ascribe $\GROVER$ the following type:
\[
  \cdot \INFr[\mathbb{P}] \GROVER :
  \begin{lalign}
    \forall k: \qbty \Rightarrow [0,1].\\
    \ITE : \Nat \\
    \Rightarrow (X : \qbty \Rightarrow \refine{[0,1]}{Z \leq k(X)}) \\
    \Rightarrow \refine{[0,1]}{Z \leq k((\mathsf{G}\,\mathsf{U_f})^{\ITE} \,\ket{\psi_0})}
  \end{lalign}
\]
Thus, in particular, 
\[
  \ITE : \Nat \INFr[] \GROVER \app \ITE \app \mathsf{err} :
  \{
  \begin{lalign}
    Z : [0,1] \mid  Z \leq \mathsf{err}((\mathsf{G}\,\mathsf{U_f})^{\ITE} \,\ket{\psi_0}) \} \,.
  \end{lalign}
\]
As $\ksem{\{\ITE := i\}}{\mathbb{P}}{\mathsf{err}((\mathsf{G}\,\mathsf{U_f})^{\ITE} \,\ket{\psi_0})}=
\cos^2((2i+1)\sin^{-1}(\frac{1}{\sqrt{2^n}}))$ holds,
the above typing thus binds the error probability of $\GROVER$ in terms
of the iteration count $i$.
\end{example}

\section{Conclusion}
We have developed an expectation-transformer analysis for a universal higher-order quantum programming language and shown that it can be instantiated as a refinement type system.
Future research will focus on sound implementations of these techniques as well as extending them to extended quantum languages with quantum control.

\bibliography{biblio}

\newpage

\appendix

\section{Extended PARS Preliminaries}

Here, we extend notions and notation of Section~\ref{s:pars}, as a preliminary to our soundness
proofs. Noteworthy, we will \emph{not} assume that the PARS $\cdot \rew{\cdot} \cdot \subseteq A \times \Rpos \times  \Distr(A)$ is deterministic.
In this setting, a PARS induces a (weighted) \emph{reduction relation} over \emph{multi-distributions}~\cite{AvanziniLY20}.
A multi-distribution $\mdist$
is a \emph{multiset} $\{ (a_i,p_i) \}_{i \in I}$ over pairs $(a,p)$, denoted $a^p$,
with \emph{support} $\supp(\mdist) = \{ a_i \mid i \in I\} \subseteq A$ and probabilities $0 \leq p_i \leq 1$ sum up to at most one,
i.e., $\sum_{i \in I} p_i \leq 1$.
Multi-distributions generalise (sub)distributions in the way multiset generalise sets. With $\mdist(a) \defsym \sum_{a^p \in \mdist} p$
we denote total probability attributed to $a$ in $\mdist$.
The set of multi-distributions over $A$ is denoted by $\MDistr(A)$.
By identifying distributions $\delta \in \Distr(A)$ with the set $\{ a^{\delta(a)} \}_{a \in \supp(\delta)}$ we have $\Distr(A) \subseteq \MDistr(A)$.
$\MDistr(A)$ is closed under scaling $p \cdot \{ a_i^{q_i} \}_{i \in I} \defsym \{ a_i^{p \cdot q_i} \}_{i \in I}$, and
convolution sums $\sum_{i \in I} p_i \cdot \mu_i \defsym \biguplus_{i \in I} p_i \cdot \mu_i$, for probabilities $\sum_{i \in I} p_i \leq 1$.

The reduction relation ${\cdot \drew{\cdot} \cdot} \subseteq \MDistr(A) \times \Rext \times \MDistr(A)$, is generated by the following rules:
\[
  \Infer[drew][T]{ \{ a^1 \} \drew{0} \{ a^1 \} }{ a \not\rew }
  \qquad
  \Infer[drew][R]{ \{a^1\} \drew{k} \delta }{ a \rew{k} \delta }
\]
\[
  \Infer[drew][C]{
    \sum_{i \in I} p_i \cdot \mdist_i \drew{\sum_i p_i \cdot k_i} \sum_{i \in I} p_i \cdot \mdist_i'
  }{
    \forall i \in I, \mdist_i \drew{k_i} \mdist_i'
  }
\]
When $\rew{}$ is deterministic, $\drew{}$ can be equally stated in terms of distributions,
as we have done in Section~\ref{s:pars}.
If $\mdist \drew{k} \mdisttwo$, then $\mdisttwo$ is obtained by rewriting all of the elements in
the support of $\mdist$. The weight $k$ signifies the (expected) cost of this reduction step. In a reduction sequence
\[
  \rho: \mdist = \mdist_0 \drew{k_0} \mdist_1 \drew{k_1} \mdist_2 \drew{k_2} \cdots
  ,
\]
the multi-distributions $\mdist_i$ give the $i$-step reducts of $\mdist$. The sum $\sum_{j=0}^{i-1} k_j$ yields
the expected cost to reach $\mdist_i$.
Let $\NF[\rew]{\mdist} : \Distr(A)$ denote the sub-distribution of normal forms in $\mdist$, i.e.,
$\NF[\rew]{\mdist}(a) = \mdist(a)$ if $a \not \rew$, and $\NF[\rew]{\mdist}(a) = 0$ otherwise.
We write $\mdist \drew{k}^\infty \beta$ if there is a reduction sequence converging to normal form distribution $\beta = \sup_{i \in \mathbb{N}} \NF[\rew]{\mdist_i}$ with cost $k = \sum_{i=0}^\infty k_i \in \Rext$.
Following \cite{Faggian22},
we denote by $\LDistr[\drew](\mdist) \defsym \{ \beta \mid \mdist \drew{c}^\infty \beta\}$ the set of all such normal form distributions of $\mdist$.
We denote by $\ecost[\drew](\mdist) \defsym \sup \{ k \mid \mdist \drew{k}^\infty \beta\} \in \Rext$ the expected reduction cost of $k$.

The \emph{reflexive closure} of a PARS $\rew$ is its least extension $\rew{}{=}$ such that $a \rew{0}{=} \{a^1\}$.
In contrast to $\mdist \drew{k} \mdisttwo$, in $\mdist \drew{k}{=} \mdisttwo$ only some of the elements in the support of $\mdist$ are rewritten.
By definition, ${\drew{k}{=}} \subseteq {\drew{k}{}}$.

Given two relations $\mathcal{R} \subseteq A \times B$ and $\mathcal{S} \subseteq B \times C$, we write $\mathcal{R} \circ \mathcal{S}$ the relation defined by $a \mathrel{(\mathcal{R} \circ \mathcal{S})} c$ iff $\exists b \in B$ such that $a \mathrel{\mathcal{R}} b$ and  $b \mathrel{\mathcal{S}} c$ both hold.

\begin{lemma}\label{l:nf-nondec}
  If $\mdist \drew{k}{=} \mdisttwo$ ($\mdist \drew{k} \mdisttwo$) then $\NF[\rew]{\mdist} \leq \NF[\rew]{\mdisttwo}$
\end{lemma}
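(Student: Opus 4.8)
The plan is to proceed by induction on the derivation of the reduction step $\mdist \drew{k}{=} \mdisttwo$ (the unmarked case $\mdist \drew{k} \mdisttwo$ being subsumed, since ${\drew{k}{=}} \subseteq {\drew{k}{}}$ and the three generating rules \ref{drew:T}, \ref{drew:R}, \ref{drew:C} are shared — only \ref{drew:T} is weakened to allow skipping redexes). Concretely, since the reflexive closure $\rew{}{=}$ adds the axiom $a \rew{0}{=} \{a^1\}$, every single-step derivation $\mdist \drew{k}{=} \mdisttwo$ decomposes via rule \ref{drew:C} as $\mdist = \sum_{i \in I} p_i \cdot \mdist_i$ and $\mdisttwo = \sum_{i \in I} p_i \cdot \mdisttwo_i$ with each $\mdist_i \drew{k_i}{=} \mdisttwo_i$ obtained by one of the base cases. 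So the core of the argument is the two base cases, followed by a closure step showing $\NF[\rew]{-}$ is monotone under the convolution sum $\sum_{i} p_i \cdot (-)$.

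First I would handle the base cases. If $\mdist = \{a^1\}$ with $a \not\rew$ and $\mdisttwo = \{a^1\}$ (rule \ref{drew:T}), then $\NF[\rew]{\mdist} = \NF[\rew]{\mdisttwo} = \{a^1\}$, so equality (hence $\leq$) holds trivially; the same holds for the skip case $a \rew{0}{=} \{a^1\}$ when $a$ is itself a redex, since then $\NF[\rew]{\mdist} = \emptyset = \NF[\rew]{\mdisttwo}$. If $\mdist = \{a^1\}$ with $a \rew{k} \delta$ and $\mdisttwo = \delta$ (rule \ref{drew:R}), then $a$ is not a normal form, so $\NF[\rew]{\mdist} = \emptyset$, which is pointwise below any sub-distribution, in particular below $\NF[\rew]{\delta}$. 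Thus in all base cases $\NF[\rew]{\mdist} \leq \NF[\rew]{\mdisttwo}$.

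For the inductive/closure step, I would verify that $\NF[\rew]{-}$ is additive over convolution sums in the sense that $\NF[\rew]{\sum_{i \in I} p_i \cdot \mdist_i} = \sum_{i \in I} p_i \cdot \NF[\rew]{\mdist_i}$: this is immediate from the definitions, since for a fixed normal form $a$ the total mass $(\sum_i p_i \cdot \mdist_i)(a) = \sum_i p_i \cdot \mdist_i(a)$ and the predicate ``$a \not\rew$'' does not depend on $i$. Given the base-case inequalities $\NF[\rew]{\mdist_i} \leq \NF[\rew]{\mdisttwo_i}$ for all $i \in I$, scaling by $p_i \geq 0$ and summing preserves the pointwise order, yielding $\NF[\rew]{\mdist} = \sum_i p_i \cdot \NF[\rew]{\mdist_i} \leq \sum_i p_i \cdot \NF[\rew]{\mdisttwo_i} = \NF[\rew]{\mdisttwo}$, as required. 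An entirely parallel argument, replacing $\NF[\rew]{-}$ reasoning only in the base case \ref{drew:T} (which no longer permits skipping), gives the parenthetical claim for $\drew{k}$.

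The argument is routine; the only point demanding a little care is making sure the convolution sum is handled as a \emph{multiset} union (so masses genuinely add, rather than being idempotently merged) and that the pointwise supremum/order on $\Distr(A)$ interacts correctly with scaling — but both are built into the definitions of $\MDistr(A)$ recalled in this appendix, so there is no real obstacle. I expect the ``hard'' part, if any, to be purely bookkeeping: confirming that the decomposition of an arbitrary $\drew{}{=}$-step into base cases under rule \ref{drew:C} is exhaustive, i.e., that every derivation bottoms out in \ref{drew:T} (possibly in skip form) or \ref{drew:R}.
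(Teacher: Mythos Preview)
Your approach is essentially identical to the paper's: induction on the derivation of $\mdist \drew{k}{=} \mdisttwo$, with the crux being that in the base case either $a$ is not a $\rew$-normal form (so $\NF[\rew]{\{a^1\}}$ is empty) or the step is the identity. One minor slip: you wrote ${\drew{k}{=}} \subseteq {\drew{k}{}}$, but the inclusion goes the other way (${\drew{k}{}} \subseteq {\drew{k}{=}}$, since $\rew{}{=}$ extends $\rew$), which is precisely what makes the unmarked case a special case of the marked one.
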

\begin{proof}
  The claim can be proven by induction on the derivation of $\mdist \drew{k}{=} \mdisttwo$. Crucial,
  in Rule~\ref{drew:R}, if $a \rew{k}{=} \delta$ then eiter $a$ is not a normal form, or $\delta = \{a^1\}$
  and thus it is preserved in the reduction.
\end{proof}

We define \emph{multistep reduction relations}
$\mdist \drew{k}{*} \mdisttwo$ if $\mdist = \mdist_0 \drew{k_0}{=} \cdots \drew{k_{n-1}}{=} \mdist_n = \mdisttwo$ and $k = \sum_{i=0}^{n-1} k_i$,
similar, $\mdist \drew{k}{+} \mdisttwo$ if $\mdist \drew{k_1} \circ \drew{k_2}{=} \mdisttwo$ and $k = k_1 + k_2$.
By definition, ${\drew{k}{=}} \subseteq {\drew{k}{*}} \subseteq {\drew{k}{+}}$. All relations are closed
under convex combinations, the former by definition, for the latter two this fact is proven by the following lemma.
\begin{lemma}\label{l:closure-conv-closed}
  Suppose for all $i \in I$, $\mdist_i \drew{k_i}{\star} \mdisttwo_i$, where $\star \in \{{*},{+}\}$.
  Then $\sum_{i \in I} p_i \cdot \mdist_i \drew{\sum_{i \in I} p_i \cdot k_i}{\star} \sum_{i \in I} p_i \cdot \mdisttwo_i$ for probabilities $\sum_{i \in I} p_i \leq 1$.
\end{lemma}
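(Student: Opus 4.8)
The plan is to reduce both cases $\star \in \{{*},{+}\}$ to a single fact: the one-step relations $\drew{}$ and $\drew{}{=}$ are closed under convex combinations. For $\drew{}$ this is precisely Rule~\ref{drew:C}, and for $\drew{}{=}$ it is the ``former'' case recorded just before the statement of the lemma (it holds because $\drew{}{=}$ is obtained from the same three rules defining $\drew{}$ after replacing $\rew{}$ by its reflexive closure $\rew{}{=}$). So the whole argument is really a matter of composing one-step closures.

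For $\star = {+}$ I would unfold the hypotheses: for each $i \in I$ there is a multidistribution $\mdistthree_i$ with $\mdist_i \drew{a_i} \mdistthree_i$ and $\mdistthree_i \drew{b_i}{=} \mdisttwo_i$, where $k_i = a_i + b_i$. Applying Rule~\ref{drew:C} to the full steps gives $\sum_{i \in I} p_i \cdot \mdist_i \drew{\sum_{i\in I} p_i \cdot a_i} \sum_{i \in I} p_i \cdot \mdistthree_i$, and convex-combination closure of $\drew{}{=}$ gives $\sum_{i \in I} p_i \cdot \mdistthree_i \drew{\sum_{i\in I} p_i \cdot b_i}{=} \sum_{i \in I} p_i \cdot \mdisttwo_i$. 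Composing, $\sum_{i} p_i \cdot \mdist_i \drew{\sum_i p_i a_i} \circ \drew{\sum_i p_i b_i}{=} \sum_{i} p_i \cdot \mdisttwo_i$, i.e.\ a $\drew{}{+}$-reduction of cost $\sum_i p_i a_i + \sum_i p_i b_i = \sum_i p_i (a_i+b_i) = \sum_i p_i k_i$, using that sums of non-negative extended reals may be regrouped freely.

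For $\star = {*}$ each hypothesis unfolds as a finite chain $\mdist_i = \mdist_i^0 \drew{k_i^0}{=} \cdots \drew{k_i^{n_i-1}}{=} \mdist_i^{n_i} = \mdisttwo_i$ with $k_i = \sum_{j < n_i} k_i^j$. Since the chains need not share a length, I would pad each of them on the right, up to the common length $N \defsym \max_{i \in I} n_i$, using trivial steps $\mdisttwo_i \drew{0}{=} \mdisttwo_i$; these are available because $\drew{}{=}$ is reflexive (it lifts the reflexive closure $\rew{}{=}$, whence $\mdist \drew{0}{=} \mdist$ always holds). Writing $\mdist_i^j, k_i^j$ for the padded data, with $k_i^j \defsym 0$ for $j \ge n_i$, we have $\mdist_i^j \drew{k_i^j}{=} \mdist_i^{j+1}$ for every $i$ at each fixed stage $j < N$, so convex-combination closure of $\drew{}{=}$ gives $\sum_i p_i \cdot \mdist_i^j \drew{\sum_i p_i \cdot k_i^j}{=} \sum_i p_i \cdot \mdist_i^{j+1}$. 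Concatenating these $N$ stages yields $\sum_i p_i \cdot \mdist_i \drew{k}{*} \sum_i p_i \cdot \mdisttwo_i$ with $k = \sum_{j < N}\sum_i p_i k_i^j = \sum_i p_i \sum_{j < N} k_i^j = \sum_i p_i k_i$, the padded steps contributing $0$.

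I expect the only genuine obstacle to be the length mismatch in the $\star = {*}$ case, which the padding argument settles cleanly as long as $I$ is finite — the situation in which the lemma is invoked. If one allowed a countably infinite $I$ with unbounded $n_i$, then $N$ would not be finite and the combined reduction would degenerate into a limit reduction, so the $\star = {*}$ statement should be read with $I$ finite (or with uniformly bounded chain lengths); for $\star = {+}$ no restriction is needed, the number of steps being fixed at two. The remaining points — reflexivity of $\drew{}{=}$ and the regrouping of (possibly infinite) sums over $\Rext$ — are routine.
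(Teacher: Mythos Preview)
Your proof is correct and matches the paper's argument: for $\star = {*}$ both pad the individual chains to a common length and combine them stagewise via convex closure of $\drew{}{=}$, and for $\star = {+}$ both decompose into $\drew{} \circ \drew{}{=}$ and apply convex closure to each factor. Your remark about needing $I$ finite (or the chain lengths uniformly bounded) for the padding to yield a finite $N$ is a valid caveat that the paper leaves implicit.
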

\begin{proof}
  Consider first the case $\star = *$.
  Patching short reductions with $\mdistthree \drew{0}{*} \mdistthree$,
  we can write all $\mdist_i \drew{k_i}{*} \mdisttwo_i$ as
  \[
    \mdist_i
    = \mdistthree_{i,0}
    \drew{k_{i,0}}{=} \mdistthree_{i,1}
    \drew{k_{i,1}}{=} \cdots
    \drew{k_{i,n}}{=} \mdistthree_{i,n}
    = \mdisttwo_i,
  \]
  for a fixed $n \in \N$, where $\sum_{j=1}^n k_{i,j} = k_i$. As we have
  \[
    \sum_{i \in I} p_i \cdot \mdistthree_{i,j} \drew{\sum_{i \in I} p_i \cdot k_{i,j+1}}{=} \sum_{i \in I} p_i \cdot \mdistthree_{i,j+1},
  \]
  by definition, for all $0 \leq j < n$,
  using the identity
  $\sum_{i \in I} p_i \cdot k_i
  = \sum_{i \in I} p_i \cdot \sum_{j=1}^n k_{i,j}
  = \sum_{j=1}^n \sum_{i \in I} p_i \cdot k_{i,j}$,
  the lemma follows by reasoning inductively.

  Concerning $\star = +$, using ${\drew{k}{+}} = {\drew{k_1} \circ \drew{k_2}{=}}$ where $k = k_1 + k_2$,
  convex closure of $\drew{}{+}$ follows by that of $\drew{}$ and $\drew{}{=}$.
\end{proof}

By \Cref{l:nf-nondec}, the notation $\mdist \drew{k}{}^\infty \beta$ extends to $\drew{\cdot}{\star}$ ($\star \in \{{=},{*},{+}\}$). While by the above inclusions it is trivial to see that $\mdist \drew{k}{}^\infty \beta$
implies $\mdist \drew{k}{\star}^\infty \beta$, for instance for $\star = ({=})$ the converse does not hold.
For instance, $\mdist \drew{0}{=}^{\infty} \NF[\rew]{\mdist}$ for arbitrary $\mdist$, as witnessed by the infinite reduction
$\mdist \drew{0}{=} \mdist \drew{0}{=} \mdist \drew{0}{=} \cdots$. In contrast, $\mdist \drew{0}^{\infty} \NF[\rew]{\mdist}$ only holds when $\supp(\mdist)$ is in normal form.
In the following, we prove that $\drew{k}^\infty$ coincides with $\drew{k}{+}^\infty$. To this end, we
employ the following permutation lemma.

\begin{lemma}\label{l:drew-permute}
  If $\mdist \drew{k}{=} \mdisttwo \drew{l} \mdistthree$ then
  there exists $\mdisttwo'$ and $l_1 + l_2 = l$ such that
  (i)~$\mdist \drew{k+l_1} \mdisttwo' \drew{l_2}{=} \mdistthree$; and
  (i)~$\mdisttwo \drew{l_1}{=} \mdisttwo'$.
\end{lemma}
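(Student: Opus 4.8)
The plan is to rearrange the partial parallel step $\mdist \drew{k}{=}\mdisttwo$ and the ensuing full step $\mdisttwo\drew{l}\mdistthree$ into a full step followed by a partial step, splitting the weight $l$. The guiding picture is to separate, inside the partial step, the submultiset of objects that are actually rewritten from the submultiset carried over unchanged, and then let the subsequent full step act on these two parts independently. First I would record two elementary facts, both proved by easy inductions on derivations: (a)~the lifted relations $\drew{}{}$ and $\drew{}{=}$ are compatible with multiset union and normalised scaling, so that Rule~\ref{drew:C}, read forwards, combines steps, and, read backwards, splits a step whose source is $\nu\uplus\nu'$ into a step of source $\nu$ and a step of source $\nu'$ whose weights sum to the original one; (b)~every full step is in particular a reflexive/partial step, ${\drew{l}{}} \subseteq {\drew{l}{=}}$, since Rule~\ref{drew:R} for $\drew{}{}$ uses $\rew{}$, which is contained in $\rew{}{=}$ (and Rule~\ref{drew:T} is subsumed).

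The main argument I would run by induction on the derivation of $\mdist \drew{k}{=}\mdisttwo$. In the base case (Rule~\ref{drew:R} applied to $\rew{}{=}$) we have $\mdist = \{a^1\}$ and either $a$ is kept ($\mdisttwo = \{a^1\}$, $k = 0$) or $a$ is rewritten ($a\rew{k}\delta$, $\mdisttwo = \delta$). If $a$ is rewritten, take $\mdisttwo'\defsym\mdisttwo$, $l_1\defsym 0$, $l_2\defsym l$: then $\mdist \drew{k}{}\mdisttwo'$ by Rule~\ref{drew:R}, $\mdisttwo\drew{0}{=}\mdisttwo'$ is trivial, and $\mdisttwo'\drew{l}{=}\mdistthree$ by~(b). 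If $a$ is kept, inspect $\{a^1\}\drew{l}\mdistthree$: either $a\not\rew$ (then $\mdistthree = \{a^1\}$, $l=0$, and all three requirements hold trivially with $\mdisttwo'\defsym\{a^1\}$), or $a\rew{l}\mdistthree$ (then $\mdisttwo'\defsym\mdistthree$, $l_1\defsym l$, $l_2\defsym 0$ works via Rule~\ref{drew:R} and~(b)). In the inductive case (Rule~\ref{drew:C}), $\mdist = \sum_i p_i\mdist_i$, $\mdisttwo = \sum_i p_i\mdisttwo_i$, $k = \sum_i p_i k_i$ with each $\mdist_i\drew{k_i}{=}\mdisttwo_i$; using the backward direction of~(a) I split $\mdisttwo\drew{l}\mdistthree$ as $\mdistthree = \sum_i p_i\mdistthree_i$ with $\mdisttwo_i\drew{l_i}\mdistthree_i$ and $l = \sum_i p_i l_i$, apply the induction hypothesis to each $\mdist_i\drew{k_i}{=}\mdisttwo_i\drew{l_i}\mdistthree_i$ to get $\mdisttwo_i'$ and $l_{1,i}+l_{2,i}=l_i$, and recombine with Rule~\ref{drew:C}: $\mdisttwo'\defsym\sum_i p_i\mdisttwo_i'$, $l_1\defsym\sum_i p_i l_{1,i}$, $l_2\defsym\sum_i p_i l_{2,i}$ then satisfy the three requirements, using $\sum_i p_i(k_i+l_{1,i}) = k+l_1$.

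Equivalently, and perhaps more transparently, one can bypass the induction: invert the derivation of $\mdist\drew{k}{=}\mdisttwo$ to write $\mdist = \mdist_\circ\uplus\mdist_\bullet$ with $\mdist_\circ$ the kept point masses and $\mdist_\bullet\drew{k}\mdisttwo_\bullet$ a full step (every element of $\mdist_\bullet$ is reducible, so no Rule~\ref{drew:T} obligation arises), whence $\mdisttwo = \mdist_\circ\uplus\mdisttwo_\bullet$; split $\mdisttwo\drew{l}\mdistthree$ along this union into $\mdist_\circ\drew{l_1}\mdistthree_\circ$ and $\mdisttwo_\bullet\drew{l_2}\mdistthree_\bullet$; then $\mdisttwo'\defsym\mdistthree_\circ\uplus\mdisttwo_\bullet$ does the job, the three verifications amounting to recombining along $\uplus$ via~(a) and promoting full steps to partial steps via~(b) as needed. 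Either way, the one genuinely delicate point is the backward direction of fact~(a), i.e.\ inverting Rule~\ref{drew:C} to split a full step along a multiset decomposition of its source while keeping track of objects occurring with multiplicity in both summands and of the induced reweighting; threading the full-versus-reflexive distinction through the weight bookkeeping needs care too, but is otherwise routine.
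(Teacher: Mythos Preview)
Your proposal is correct, and your second (``more transparent'') argument is exactly the paper's proof: decompose $\mdist = \mdist_1 \uplus \mdist_2$ into the kept part $\mdist_1$ and the properly rewritten part $\mdist_2 \drew{k} \mdisttwo_2$, split the subsequent full step along $\mdisttwo = \mdist_1 \uplus \mdisttwo_2$ into $\mdist_1 \drew{l_1} \mdistthree_1$ and $\mdisttwo_2 \drew{l_2} \mdistthree_2$, and take $\mdisttwo' \defsym \mdistthree_1 \uplus \mdisttwo_2$. Your inductive argument is a valid alternative, but the paper forgoes it in favor of the direct decomposition you describe at the end.
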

\begin{proof}
  Fix $\mdist \drew{k}{=} \mdisttwo \drew{l} \mdistthree$. This reduction can be written as
  \[
    \mdist = \mdist_1 \uplus \mdist_2 \drew{k}{=}
    \mdist_1 \uplus \mdisttwo_2 \drew{l}
    \mdistthree_1 \uplus \mdistthree_2
    = \mdistthree
    ,
  \]
  where
  (i)~$\mdist_2 \drew{k} \mdisttwo_2$,
  (ii)~$\mdist_1 \drew{l_1} \mdistthree_1$ and
  (iii)~$\mdisttwo_2 \drew{l_2} \mdistthree_2$.
  Eagerly using~(ii) results in a sequence
  \[
    \mdist = \mdist_1 \uplus \mdist_2 \drew{k+l_1}
    \mdistthree_1 \uplus \mdisttwo_2 \drew{l_2}{=}
    \mdistthree_1 \uplus \mdistthree_2
    = \mdistthree
    .
  \]
  Letting $\mdisttwo' = \mdistthree_1 \uplus \mdisttwo_2$ now proves the lemma.
\end{proof}

\begin{lemma}\label{l:drew+=>drew}
  $\mdist \drew{k}{+}^\infty \beta$ implies $\mdist \drew{k}^\infty \beta$.
\end{lemma}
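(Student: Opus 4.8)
The plan is to turn a given infinite reduction witnessing $\mdist \drew{k}{+}^\infty \beta$ into one built purely from full steps, with the same overall cost $k$ and limit $\beta$, by repeatedly commuting partial steps rightwards using \Cref{l:drew-permute}. First I would fix a witnessing sequence $\mdist = \mdist_0 \drew{k_0}{+} \mdist_1 \drew{k_1}{+} \cdots$ with $\beta = \sup_i \NF[\rew]{\mdist_i}$ and $k = \sum_i k_i$, and unfold each $\drew{}{+}$-step into $\mdist_i \drew{a_i}{} \nu_i \drew{b_i}{=} \mdist_{i+1}$, $k_i = a_i + b_i$, so that the reduction becomes an infinite alternation of full and partial steps. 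Then I would rewrite it left-to-right in stages, maintaining the invariant that after stage $n-1$ the sequence reads $\mdist \drew{l_0}{} \theta_1 \drew{l_1}{} \cdots \drew{l_{n-1}}{} \theta_n \drew{g_n}{*} \mdist_n \drew{a_n}{} \nu_n \drew{b_n}{=} \cdots$: an honest prefix of $n$ full steps, a finite run of partial steps of total cost $g_n$, and the unprocessed tail. At stage $n$, iterating \Cref{l:drew-permute} drags the full step $\mdist_n \drew{a_n}{} \nu_n$ leftwards past the whole run, yielding one further honest full step $\theta_n \drew{l_n}{} \theta_{n+1}$; absorbing $\nu_n \drew{b_n}{=} \mdist_{n+1}$ into the right-shifted run restores the invariant. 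Since each $\theta_j$ is frozen from stage $j$ onward, the stages converge to an infinite full-step reduction $\mdist = \theta_0 \drew{l_0}{} \theta_1 \drew{l_1}{} \cdots$, the candidate witness.

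Both its limit and its cost are governed by a single structural observation: the permutations of \Cref{l:drew-permute} never create or destroy the atomic rewrites $a \rew{c} \delta$ into which rule~\ref{drew:C} decomposes each step — they only re-bracket them — and the $j$-th honest full step of the constructed sequence performs exactly the generation-$(j{-}1)$ rewrites, i.e., advances every still-active element history from generation $j-1$ to generation $j$. Hence the multiset of atomic rewrites is invariant throughout the construction, with total weighted cost $k$; and after stage $n$ the honest prefix contains precisely the generation-$\le n$ rewrites, the pending run $\theta_{n+1} \drew{g_{n+1}}{*} \mdist_{n+1}$ contains only generation-$\ge n+1$ rewrites, and the tail the remainder.

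Limit-preservation then follows: $\theta_{n+1} \drew{g_{n+1}}{*} \mdist_{n+1}$ gives $\NF[\rew]{\theta_{n+1}} \le \NF[\rew]{\mdist_{n+1}} \le \beta$ by \Cref{l:nf-nondec}, so $\sup_j \NF[\rew]{\theta_j} \le \beta$, while conversely every atomic rewrite eventually occurs in the constructed sequence (the generation-$g$ ones in its $(g{+}1)$-th full step), so its accumulated normal-form mass is at least that of the original, giving $\sup_j \NF[\rew]{\theta_j} \ge \sup_i \NF[\rew]{\mdist_i} = \beta$. For the cost, weight-invariance at stage $n$ reads $\sum_{j \le n} l_j = S_{n+1} - g_{n+1}$ with $S_{n+1} = \sum_{i \le n} k_i$; since $g_{n+1}$ is the total cost of generation-$\ge n+1$ rewrites only, it is bounded by $\sum_{g \ge n+1}(\text{weighted cost of generation-}g\text{ rewrites})$, a tail of a series summing to $k$. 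When $k < \infty$ this tail tends to $0$, so $g_{n+1} \to 0$ and $\sum_j l_j = k$; when $k = \infty$ the bound keeps each finite-stage $g_{n+1}$ finite while $S_{n+1} \to \infty$, again giving $\sum_j l_j = \infty$. Either way the candidate witnesses $\mdist \drew{k}{}^\infty \beta$.

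The step I expect to be most delicate is making the generation accounting rigorous, in particular verifying that the iterated applications of \Cref{l:drew-permute} really maintain the invariant that the honest prefix captures exactly the low-generation rewrites and the pending run only the high-generation ones — the permutation lemma is stated at the level of multi-distributions rather than individual rewrite events, so one must either refine it to track which events move where, or argue the invariant directly by induction on the stages.
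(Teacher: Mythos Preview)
Your construction is the same as the paper's: unfold the $\drew{}{+}$-reduction into an alternation of full and partial steps, then iterate \Cref{l:drew-permute} to pull the full steps to the front, stabilising a prefix $\mdist=\mdisttwo_0 \drew{l_0}{} \mdisttwo_1 \drew{l_1}{} \cdots$. Where you diverge is in the bookkeeping, and the paper's is considerably lighter than your generation accounting. For the limit lower bound you argue that every atomic rewrite event eventually occurs in the constructed sequence; the paper instead uses clause~(ii) of \Cref{l:drew-permute} --- the side relation $\mdisttwo \drew{l_1}{=} \mdisttwo'$ --- which, iterated over all the permutations performed up to stage~$\ell$, yields directly $\mdist_\ell \drew{\cdot}{=} \mdisttwo_\ell$ (the $\ell$-th element of the \emph{original} sequence partially reduces to the $\ell$-th element of the constructed one). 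Then \Cref{l:nf-nondec} gives $\NF[\rew]{\mdist_\ell} \leq \NF[\rew]{\mdisttwo_\ell}$, and taking suprema yields $\beta \leq \sup_\ell \NF[\rew]{\mdisttwo_\ell}$ with no reference to individual rewrite events. Combined with your own upper-bound argument $\mdisttwo_\ell \drew{}{*} \mdist_{n_\ell+1}$ (which the paper also uses), the limit is squeezed. For the cost, the paper runs the same squeeze idea: the final permutation of each stage has the form $\mdist_0 \drew{k_0}{=} \cdot \drew{\cdot}{} \cdot$ turning into $\mdist_0 \drew{k_0 + \cdot}{} \cdot$, so $l_0 \geq k_0$, and more generally $\sum_{i<\ell} k_i \leq \sum_{i<\ell} l_i \leq \sum_i k_i$, the upper bound coming from total-cost preservation. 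This sidesteps your tail-sum argument for $g_{n+1}\to 0$ entirely. Your route is not wrong, but the delicate step you flag --- refining the permutation lemma to track which events move where --- is precisely what clause~(ii) is there to avoid.
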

\begin{proof}
  Consider an infinite reduction
  \begin{equation}
    \label{drewplus:red}
    \tag{\dag}
    \mdist = \mdist_0 \drew{k_0}{=}
    \mdist_1 \drew{k_1}{=}
    \mdist_2 \drew{k_2}{=}
    \cdots
    ,
  \end{equation}
  with \emph{infinitely many proper} steps $\mdist_{n_i} \drew{k_{n_i}} \mdist_{n_i+1}$ ($n_0 < n_1 < \cdots$),
  converging a normal form distribution $\beta$.
  We turn this into an infinite reduction over $\drew{\cdot}$ converging to $\beta$.
  As $\mdist \drew{k}{+}^\infty \beta$ is witnessed by such a reduction with $k = \sum_i k_i$ ---tacitly employing
  \Cref{l:nf-nondec} to equate the the limit normal form distribution induced by the sequence $(\mdist_{n_i})_{i \in \N})$ with $\beta$---
  this concludes the Lemma.

  To obtain the desired sequence,
  repeatedly apply \Cref{l:drew-permute} to permute the first proper step
  $\mdist_{n_0} \drew{n_0} \mdist_{n_0+1}$ to front of the reduction \eqref{drewplus:red},
  yields a sequence
  \begin{equation*}
    \mdist =
    \mdisttwo_0 \drew{l_0}{}
    \mdisttwo_1 \drew{m_1}{*}
    \mdist_{n_0 + 1} \drew{k_{n_0+1}}{=}
    \cdots
    ,
  \end{equation*}
  of equal cost ($l_0 + m_1 = \sum_{i=0}^{n_0} k_i$) with $k_0 \leq l_0$,
  and where $\mdist_1 \drew{\cdot}{=} \mdisttwo_1$.
  Repeating such permutations $\ell$ times in total, thereby permuting the first $\ell$ proper steps to the front,
  yields
  a sequence
  \begin{equation*}
    \mdist =
    \mdisttwo_0 \drew{l_0}{}
    \cdots \drew{l_{\ell - 1}}
    \mdisttwo_{\ell} \drew{m_\ell}{*}
    \mdist_{n_\ell + 1} \drew{k_{n_\ell+1}}{=}
    \cdots
    ,
  \end{equation*}
  of cost equal to that of \eqref{drewplus:red}, and where
  $\sum_{i=0}^{\ell - 1 } k_i \leq \sum_{i=0}^{\ell - 1} l_i$ and
  $\mdist_\ell \drew{\cdot}{=} \mdisttwo_\ell$.

  Repeating this construction ad infinitum yields finally a reduction
  \begin{equation*}
    \mdist =
    \mdisttwo_0 \drew{l_0}{}
    \mdisttwo_1 \drew{l_1}{}
    \mdisttwo_2 \drew{l_2}{}
    \cdots
    ,
  \end{equation*}
  approaching $\beta$,
  where for each $\ell \in \N$,
  \[
    \sum_{n=0}^\ell k_i \leq \sum_{n=0}^\ell l_i \leq \sum_{n=0}^\infty k_i
    ,
  \]
  and
  $\mdist_\ell \drew{\cdot}{=} \mdisttwo_\ell$.
  Notice how the former proves $\sum_{n=0}^\infty l_i = \sum_{n=0}^\infty k_i$,
  the latter proves $\NF[\rew]{\mdist_\ell} \leq \NF[\rew]{\mdisttwo_\ell} \leq \beta$ for all $\ell \in \N$,
  via \Cref{l:nf-nondec}.
  In total $\mdist \drew{\sum_i k_i}^\infty \beta$ as desired.\qed
 \end{proof}

 \begin{lemma}
   For all multidistributions $\mu$,
   (i)~$\ecost[\drew{}{+}](\mdist) = \ecost[\drew](\mdist)$ and
   (ii)~$\LDistr[\drew{}{+}](\mdist) = \LDistr[\drew](\mdist)$.
 \end{lemma}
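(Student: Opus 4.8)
The plan is to reduce both identities to a single equivalence between the two limiting reduction relations: for every cost $k \in \Rext$ and every subdistribution $\beta$,
\[
  \mdist \drew{k}{+}^\infty \beta \quad\Longleftrightarrow\quad \mdist \drew{k}^\infty \beta .
\]
Granting this, both parts of the lemma follow by merely unfolding definitions. For (i), $\ecost[\drew]{\mdist} = \sup\{\, k \mid \exists \beta.\ \mdist \drew{k}^\infty \beta \,\}$ and $\ecost[\drew{}{+}]{\mdist} = \sup\{\, k \mid \exists \beta.\ \mdist \drew{k}{+}^\infty \beta \,\}$ are suprema of the same set of reals, hence equal. For (ii), $\LDistr[\drew](\mdist) = \{\, \beta \mid \exists k.\ \mdist \drew{k}^\infty \beta \,\}$ and $\LDistr[\drew{}{+}](\mdist) = \{\, \beta \mid \exists k.\ \mdist \drew{k}{+}^\infty \beta \,\}$ are literally the same set.

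It then remains to prove the equivalence, and I would do this by two short implications. The direction $\mdist \drew{k}^\infty \beta \Rightarrow \mdist \drew{k}{+}^\infty \beta$ is the trivial one already flagged after \Cref{l:nf-nondec}: since ${\drew{k}} \subseteq {\drew{k}{+}}$ (a $\drew{k}$-step composed with the trivial reflexive step $\drew{0}{=}$), any infinite $\drew{}{}$-reduction $\mdist = \mdist_0 \drew{k_0} \mdist_1 \drew{k_1} \cdots$ witnessing $\mdist \drew{k}^\infty \beta$ is verbatim an infinite $\drew{}{+}$-reduction, with the same step costs and the same partial normal-form subdistributions $\NF[\rew]{\mdist_i}$; it therefore converges to the same $\beta = \sup_i \NF[\rew]{\mdist_i}$ with the same total cost $k = \sum_i k_i$. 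The converse, $\mdist \drew{k}{+}^\infty \beta \Rightarrow \mdist \drew{k}^\infty \beta$, is exactly \Cref{l:drew+=>drew}, whose proof has already carried out the substantive work: permuting the infinitely many proper steps to the front via the permutation \Cref{l:drew-permute}, and invoking \Cref{l:nf-nondec} to identify the resulting limit with $\beta$.

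I do not anticipate a genuine obstacle: the hard content is entirely packaged in \Cref{l:drew-permute} and \Cref{l:drew+=>drew}, and the present lemma is a short corollary that only combines these with the easy inclusions. The single point requiring a little care is the bookkeeping in the trivial direction — checking that reinterpreting a $\drew{}{}$-step as a $\drew{}{+}$-step alters neither the accumulated cost nor the partial normal-form subdistributions, so that the limiting pair $(\beta,k)$ is preserved on the nose; this is immediate from the definitions and is consistent with the remark following \Cref{l:nf-nondec} that the notation $\drew{\cdot}{}^\infty$ extends unambiguously to $\drew{\cdot}{\star}$.
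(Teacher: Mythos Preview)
Your proposal is correct and follows essentially the same approach as the paper: both directions of the equivalence $\mdist \drew{k}{+}^\infty \beta \Leftrightarrow \mdist \drew{k}^\infty \beta$ are established via \Cref{l:drew+=>drew} and the trivial inclusion ${\drew{}} \subseteq {\drew{}{+}}$, after which (i) and (ii) follow by unfolding definitions. The paper's proof is simply a one-line statement of this, whereas you have spelled out the bookkeeping more explicitly.
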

 \begin{proof}
   Using Lemma~\ref{l:drew+=>drew}, and the fact that its inverse implication trivially holds since
   ${\drew{}} \subseteq {\drew{}{+}}$.
 \end{proof}

\section{Soundness of Quantum Expectation Transformer}
The following straightforward lemma states that the quantum expectation transformer is sound with respect to typing.
\begin{lemma}The following results hold.
  \begin{enumerateenv}
    \item If $\cctx;\qctx \vdash \sv : \ty$ then $\ttpe{\cctx},\ttpe{\qctx} \INF \qetv{\sv} : \ttpe{\ty}$.
    \item If $\cctx;\qctx \vdash \st : \ty$
    and $\ttpe{\cctx},\ttpe{\qctx} \INF \cont : \ttpe{\ty} \Rightarrow \kty$
    then $\ttpe{\cctx},\ttpe{\qctx} \INF \qet{\st}{\cont} : \kty$.
  \end{enumerateenv}
\end{lemma}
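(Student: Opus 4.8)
The plan is to prove the two clauses \emph{simultaneously}, by induction on the structure of the source term (equivalently, on the height of the typing derivation of \Cref{fig:TS}). The mutual dependency is genuine: clause~(1) applied to an abstraction or a recursive definition appeals to clause~(2) on the strictly smaller body, while clause~(2) on a compound term appeals to clause~(1) on its immediate subvalues and to clause~(2) on its strictly smaller subterms; since the term strictly decreases in every such appeal, the induction is well founded. It is convenient to strengthen clause~(2) slightly, requiring the conclusion to hold in \emph{every} cost-structure context extending $\ttpe{\cctx},\ttpe{\qctx}$; this is harmless because the cost-structure type system of \Cref{fig:cost:ty} admits weakening, and it makes the translation of the source system's linear bookkeeping — splitting the affine context in the rules for application, tensoring and constructors, and insisting on an empty affine context in $(\Rightarrow_e)$ and $(\mathsf{rec})$ — completely transparent, since one simply works in the common context $\ttpe{\cctx},\ttpe{\qctx}$ throughout. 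I would also record one purely syntactic observation: the type translation of \Cref{fig:ect} sends every duplicable source type to a functional cost-structure type, i.e.\ $\ttpe{\cty} \in \TFTypes$ for all $\cty \in \CTypes$, because $\ttpe{(\ty_1 \multimap \ty_2)} = \ttpe{\ty_1} \Rightarrow (\ttpe{\ty_2} \Rightarrow \kty) \Rightarrow \kty$ and $\ttpe{(\cty \Rightarrow \ty)} = \ttpe{\cty} \Rightarrow (\ttpe{\ty} \Rightarrow \kty) \Rightarrow \kty$ both have $\kty$ as ultimate codomain. This is precisely the side condition that licenses the cost-structure recursion rule in the $\letrec$ case.

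For clause~(1) I would case on the shape of the value $\sv$. A variable $\sx$, whether bound in the exponential or the affine source context, translates to $\tx$ and is handled by the variable axiom. An abstraction $\lambda\sx.\st$ of source type $\ty \multimap \ty'$ or $\cty \Rightarrow \ty'$ translates to $\lambda\tx.\lambda\tk.\qet{\st}{\tk}$; two uses of $(\Rightarrow_i)$ introduce $\tx : \ttpe{\ty}$ and $\tk : \ttpe{\ty'} \Rightarrow \kty$, and clause~(2) applied to $\st$ with continuation $\tk$ closes the case. A quantum state $\ket{\psi}$ is the state axiom. A constructor value $\cons(\vec{\sv}_0;\vec{\sv}_1)$ translates to $\cons(\qetv{\vec{\sv}_0},\qetv{\vec{\sv}_1})$ and is handled by clause~(1) on each component together with the constructor rule, using that the source signature $\cons :: \vec{\ctyp};\vec{\qtyp} \to \typ$ becomes $\cons :: \ttpe{\vec{\ctyp}},\ttpe{\vec{\qtyp}} \to \typ$. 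Finally $\letrec{\sf}{\sx}{\st}$, of some duplicable source type $\cty$, translates to $\letrec{\tf}{\tx}{\lambda\tk.\qet{\st}{\tk}}$: by the observation above $\ttpe{\cty} \in \TFTypes$, so the recursion rule applies, reducing the goal to typing $\lambda\tx.\lambda\tk.\qet{\st}{\tk}$ in the context extended by $\tf : \ttpe{\cty}$, which is exactly the abstraction case again.

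For clause~(2) I would case on the shape of $\st$. If $\st$ is a value $\sv$, then $\qet{\sv}{\cont} = \cont \app \qetv{\sv}$: clause~(1) gives $\qetv{\sv} : \ttpe{\ty}$, and the application rule with the hypothesis $\cont : \ttpe{\ty} \Rightarrow \kty$ concludes. For $\unary{\st}$, $\tick{\st}$ and $\meas{\st}$, the wrapping continuation $\lambda\tx.(\cdots)$ is checked by introducing $\tx : \qbty$ and using, respectively, the unitary rule; the $\cadd$ rule together with $\bun : \realty$ from the real-constant rule; and the barycentric-sum rule together with the post-collapse rules for $\Meas{0}{\tx}$ and $\Meas{1}{\tx}$ and the constructor rule for $\inj_0(\cdot)$ and $\inj_1(\cdot)$ to form the two branches; clause~(2) on $\st$ then finishes. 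The cases $\st_0 \app \st_1$, $\st_0 \qtimes \st_1$, $\cons(\vec{\st}_0;\vec{\st}_1)$ and the case-expression all follow the same scheme: peel off the nested continuations one layer at a time, checking that each $\lambda\tx_i.(\cdots)$ has the arrow type demanded by the next layer — for application, $\tx_0$ gets the CPS-translated function type $\ttpe{\ty}\Rightarrow(\ttpe{\ty'}\Rightarrow\kty)\Rightarrow\kty$, so that $\tx_0 \app \tx_1 \app \cont$ type-checks to $\kty$ — and finish with clause~(2) on the scrutinee or first subterm. The $n$-ary constructs (constructors, tensor) additionally require a routine sub-induction on sequences: if $\cctx;\vec{\qctx} \vdash \vec{\st} : \vec{\ty}$ and $\tctx \INF \cont : \ttpe{\ty_n} \Rightarrow \cdots \Rightarrow \ttpe{\ty_1} \Rightarrow \kty$ for some context $\tctx$ extending $\ttpe{\cctx},\ttpe{\vec{\qctx}}$, then $\tctx \INF \qet{\vec{\st}}{\cont} : \kty$, proved by induction on the length from $\qet{\varepsilon}{\cont} = \cont$ and $\qet{\vec{\st},\st}{\cont} = \qet{\st}{\lambda\tx.\qet{\vec{\st}}{\cont \app \tx}}$ using clause~(2) and keeping track of the order in which the arguments are consumed. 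In the case-expression, the source places the default-branch variable $\sy$ in the exponential context when its type is classical and in the affine context when it is quantum, but in both situations this becomes the single binding $\txtwo : \typ$ in the cost-structure case rule, so the two source sub-derivations for the branches translate uniformly.

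I do not anticipate any genuine obstacle: this is essentially a type-preservation statement for a CPS-style translation and carries no surprises. The only points that require a little care are (a)~organising the two clauses into a single well-founded induction and, with it, the mild strengthening of clause~(2) via weakening; (b)~the preliminary observation that $\ttpe{(\cdot)}$ lands in $\TFTypes$ on function types, without which the recursion rule could not fire in the $\letrec$ case; and (c)~the sequenced-continuation bookkeeping for the $n$-ary constructs, which is the most error-prone part but remains entirely mechanical.
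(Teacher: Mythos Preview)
Your proposal is correct and follows the same approach as the paper, which records the proof merely as ``straightforward induction on the structure of the typing derivation'' for clause~(1) and ``straightforward induction on the structure of the term'' for clause~(2); your write-up is a careful expansion of exactly that mutual induction, and your identification of the weakening strengthening and the $n$-ary sub-induction is on point. One small imprecision: your claim that $\ttpe{\cty} \in \TFTypes$ for \emph{all} $\cty \in \CTypes$ is false as stated, since $\ctyp \in \CTypes$ but $\ttpe{\ctyp} = \ctyp \notin \TFTypes$; however, in the $\letrec$ case the premise types $\lambda\sx.\st$, which forces $\cty$ to be an arrow type, and on arrow types your observation is correct and suffices.
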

\begin{proof}
  \begin{enumerateenv}
  \item Straightforward induction on the structure of the typing derivation $\cctx;\qctx \vdash \sv : \ty$.
  \item Straightforward induction on the structure of the term $t$.
    \qedhere
  \end{enumerateenv}
\end{proof}

\section{Proof of Theorem~\ref{thm:rts} (Soundness of the Refinement Type System)}
In this section, we proof the Soundness of the refinement type system, that is, Theorem~\ref{thm:rts}.
We start by some technical lemmas.

\begin{lemma}[Term substitution]
  \label{lem:substCS}
  $\ksem{\env}{\kost}{\te[\tv/\tx]}=\ksem{\env\{\tx:=\ksem{\env}{\kost}{\tv}\}}{\kost}{\te}$.
\end{lemma}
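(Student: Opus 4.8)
The plan is to prove the substitution lemma
$\ksem{\env}{\kost}{\te[\tv/\tx]}=\ksem{\env\{\tx:=\ksem{\env}{\kost}{\tv}\}}{\kost}{\te}$
by structural induction on the cost-structure term $\te$, following the grammar of \Cref{fig:cost:Grammar}. The single genuinely useful auxiliary fact is that substitution commutes with the denotation on values, which is just the special case of the statement restricted to $\tv$; since values are a subset of terms, this comes for free from the induction, so no separate lemma is needed. Throughout I would write $w \defsym \ksem{\env}{\kost}{\tv}$ and $\env' \defsym \env\{\tx := w\}$ to keep the displays compact.

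First I would dispatch the base cases. For $\te = \tx$, the left side is $\ksem{\env}{\kost}{\tv} = w$ and the right side is $\env'(\tx) = w$. For a variable $\ty \neq \tx$, both sides reduce to $\env(\ty)$ since $\ty[\tv/\tx] = \ty$ and $\env'(\ty) = \env(\ty)$. For $\ket\psi$ and $\breal r$ the term is unchanged by substitution and its denotation is independent of the environment. Then I would handle the congruence cases where substitution simply pushes through a constructor whose denotation is a continuous operation applied componentwise to the denotations of subterms: application ($\apply$), $\unary{(-)}$ ($\ext U \cdot (-)$), tensoring ($\otimes$), $\Meas{b}{(-)}$ ($M_{(-)}(-)$), $\cons(\vec\tv)$, $\cadd$ ($\tplus$) and $\cbary\tv$ ($\bary{p_0(-)}$). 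In each of these, $(C[\vec\te])[\tv/\tx] = C[\vec{\te}[\tv/\tx]]$ by the definition of substitution, the denotation distributes over the operation by \Cref{fig:denot}, and one applies the induction hypothesis to each immediate subterm. The barycentric case also uses the IH on the scale argument $\tv$ inside $p_0(-)$.

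The cases that require care are the binders: $\lambda \ty.\,\te_0$, $\caseof{\tv_0}[\cons(\vec\ty)]{\te_0}[\txtwo]{\te_1}$, and $\letrec{\tf}{\ty}{\te_0}$. Here I would invoke the standard Barendregt convention, assuming the bound variables are chosen fresh, in particular distinct from $\tx$ and not free in $\tv$, so that $(\lambda \ty.\,\te_0)[\tv/\tx] = \lambda \ty.\,(\te_0[\tv/\tx])$ and similarly for the others. For the abstraction, the denotation of both sides is the function $y \mapsto \ksem{\rho''}{\kost}{\te_0[\tv/\tx]}$ versus $y \mapsto \ksem{\rho'''}{\kost}{\te_0}$ where, after unfolding \Cref{fig:denot}, the environments are $\rho'' = \env\{\ty := y\}$ and $\rho''' = \env\{\tx := w\}\{\ty := y\} = \env\{\ty := y\}\{\tx := w'\}$ with $w' = \ksem{\env\{\ty:=y\}}{\kost}{\tv} = w$; the last equality is the point where freshness of $\ty$ w.r.t.\ $\tv$ matters, and it lets the IH applied to $\te_0$ (at environment $\env\{\ty := y\}$) close the case. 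The $\mathsf{case}$ construct splits on whether $\ksem{\env}{\kost}{\tv_0}$ matches $\cons$; in either branch one updates the environment with the (fresh) pattern variables and appeals to the IH on the relevant subterm plus the IH on $\tv_0$ for the scrutinee. For $\letrec$ the denotation is $\lfp$ of a functional built from $\ksem{\env\{\tf:=f;\ty:=y\}}{\kost}{\te_0}$; since $\lfp$ is a genuine function of that functional (no quantification over approximants is needed at this level), it suffices to show the two functionals are equal pointwise, which again follows from the IH on $\te_0$ after absorbing the fresh $\tf,\ty$ into the environment. I expect this bookkeeping around binders — keeping the environment updates in the right order and justifying each reordering by freshness — to be the only real obstacle; everything else is a mechanical congruence argument.
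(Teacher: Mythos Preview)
Your proposal is correct and follows exactly the approach the paper takes: the paper's proof reads in its entirety ``Straightforward induction on the structure of $\te$,'' and your write-up is a faithful expansion of that induction, with the expected care around binders via the Barendregt convention.
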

\begin{proof}
  Straightforward induction on the structure of $\te$.
\end{proof}

\begin{lemma}[Formulae Substitution]
  \label{lem:fsubst}
    $$\rtsem[\env]{\form[\tv/\tx]}=\rtsem[\env\{\tx:=\ksem{\env}{\kost}{\tv}\}]{\form}.$$
\end{lemma}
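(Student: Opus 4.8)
The plan is to prove Lemma~\ref{lem:fsubst} by structural induction on the formula $\form$, exactly mirroring the definition of the satisfaction relation $\env \VALID \form$ given (in excerpt form) in Figure~\ref{fig:semforms}. The statement is a formula-level analogue of the term-substitution Lemma~\ref{lem:substCS}, and that lemma will supply the base case.

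\textbf{Base case: atomic predicates.} For $\form = \prpred(\te_1,\ldots,\te_n)$, by definition of satisfaction, $\env \VALID \prpred(\te_1,\ldots,\te_n)[\tv/\tx]$ iff $\env \VALID \prpred(\te_1[\tv/\tx],\ldots,\te_n[\tv/\tx])$ iff $(\ksem{\env}{\kost}{\te_1[\tv/\tx]},\ldots,\ksem{\env}{\kost}{\te_n[\tv/\tx]}) \in \rtsem{\prpred}$. By Lemma~\ref{lem:substCS} applied to each $\te_i$, this equals $(\ksem{\env\{\tx:=\ksem{\env}{\kost}{\tv}\}}{\kost}{\te_1},\ldots,\ksem{\env\{\tx:=\ksem{\env}{\kost}{\tv}\}}{\kost}{\te_n}) \in \rtsem{\prpred}$, which is exactly $\env\{\tx:=\ksem{\env}{\kost}{\tv}\} \VALID \prpred(\te_1,\ldots,\te_n)$.

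\textbf{Inductive cases.} For the propositional connectives $\neg, \land, \lor, \Rightarrow$, substitution commutes with the connective and satisfaction is defined compositionally, so the result follows immediately from the induction hypotheses applied to the subformulae. For the quantifier cases $\forall \tz^{\tty}.\ \form$ and $\exists \tz^{\tty}.\ \form$, we may assume by $\alpha$-renaming that the bound variable $\tz$ is distinct from $\tx$ and does not occur free in $\tv$; then $(\forall \tz^{\tty}.\ \form)[\tv/\tx] = \forall \tz^{\tty}.\ (\form[\tv/\tx])$. Unfolding satisfaction, $\env \VALID \forall \tz^{\tty}.\ (\form[\tv/\tx])$ iff for every $v \in \typesem{\tty}$, $\env\{\tz := v\} \VALID \form[\tv/\tx]$. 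By the induction hypothesis (applied with environment $\env\{\tz := v\}$) this is equivalent to $\env\{\tz := v\}\{\tx := \ksem{\env\{\tz:=v\}}{\kost}{\tv}\} \VALID \form$; since $\tz \notin \mathsf{fv}(\tv)$ we have $\ksem{\env\{\tz:=v\}}{\kost}{\tv} = \ksem{\env}{\kost}{\tv}$, and since $\tz \neq \tx$ the two extensions commute, giving $\env\{\tx := \ksem{\env}{\kost}{\tv}\}\{\tz := v\} \VALID \form$. Quantifying over all $v$ yields $\env\{\tx := \ksem{\env}{\kost}{\tv}\} \VALID \forall \tz^{\tty}.\ \form$, as required; the existential case is identical with "for every" replaced by "there exists".

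\textbf{Main obstacle.} The only subtlety is the handling of the bound variables in the quantifier cases: one must be careful to perform $\alpha$-conversion so that the substituted value $\tv$ does not capture $\tz$, and to check that the two environment updates $\{\tx := \cdot\}$ and $\{\tz := v\}$ commute and that $\ksem{\cdot}{\kost}{\tv}$ is insensitive to the binding of $\tz$. These are standard side-conditions on capture-avoiding substitution and present no real difficulty, but they are the place where a careless argument would go wrong, so I would state them explicitly. Everything else is a mechanical unfolding of the definitions.
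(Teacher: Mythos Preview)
Your proposal is correct and follows exactly the paper's approach: structural induction on $\form$, with the atomic predicate case discharged via Lemma~\ref{lem:substCS} and the remaining connectives handled compositionally. The paper in fact dismisses all non-atomic cases as ``straightforward'', so your treatment of the quantifier cases (with the explicit $\alpha$-renaming and environment-commutation remarks) is more detailed than the original.
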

\begin{proof}
  By induction on the structure of $\form$.
  \begin{itemize}
    \item Case $\rtsem[\env]{\prpred(\te_1,\ldots,\te_n)[\tv/\tx]}$: Using Lemma~\ref{lem:substCS}, we have
      \begin{align*}
	\rtsem[\env]{\prpred(\te_1,\ldots,\te_n)[\tv/\tx]}
	&=\rtsem[\env]{\prpred(\te_1[\tv/\tx],\ldots,\te_n[\tv/\tx])}\\
	&\Leftrightarrow (\ksem{\env}{\kost}{\te_1[\tv/\tx]},\ldots,\ksem{\env}{\kost}{\te_n[\tv/\tx]})\in\rtsem{\prpred}\\
	&\Leftrightarrow (\ksem{\env\{\tx:=\ksem{\env}{\kost}{\tv}\}}{\kost}{\te_1},\ldots,\ksem{\env\{\tx:=\ksem{\env}{\kost}{\tv}\}}{\kost}{\te_n})\in\rtsem{\prpred}\\
	&\Leftrightarrow \rtsem[\env\{\tx:=\ksem{\env}{\kost}{\tv}\}]{\prpred(\te_1,\ldots,\te_n)}
      \end{align*}
    \item All the other cases are straightforward.
      \qedhere
  \end{itemize}
\end{proof}

\begin{lemma}[Refinement Types Substitution]
  \label{lem:subst}
    $$\rtsem[\env]{\rty[\tv/\tx]}=\rtsem[\env\{\tx:=\ksem{\env}{\kost}{\tv}\}]{\rty}.$$
\end{lemma}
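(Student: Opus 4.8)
\emph{Proof plan.} The plan is to prove the statement by structural induction on the refinement type $\rty$, following the three productions of $\RTypes$ in \Cref{fig:refine:sub}. Throughout I adopt the usual renaming convention, assuming that every bound variable occurring in $\rty$ is chosen distinct from $\tx$ and from the free variables of $\tv$; this guarantees that $\rty[\tv/\tx]$ substitutes only inside the relevant subformulae and subtypes, without capture, and leaves the skeleton $\skel{\rty}$ unchanged. I will also use a small auxiliary \emph{weakening} fact about the denotational semantics, namely that $\ksem{\env\{\txtwo := v\}}{\kost}{\tv} = \ksem{\env}{\kost}{\tv}$ whenever $\txtwo \notin \mathsf{fv}(\tv)$, which follows by a straightforward induction on $\tv$ (the interpretation of a term depends only on the valuation restricted to its free variables), together with the evident fact that $\env\{\txtwo := v\}\{\tx := c\} = \env\{\tx := c\}\{\txtwo := v\}$ as partial maps when $\tx \neq \txtwo$.

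\emph{Base case $\rty = \refine{\rtb}{\form}$.} Since $\rtb$ is variable-free, $\rty[\tv/\tx] = \refine{\rtb}{\form[\tv/\tx]}$. Unfolding the interpretation of refinement base types (\Cref{fig:semforms}) and applying \Cref{lem:fsubst} to $\form[\tv/\tx]$,
\begin{align*}
  \rtsem[\env]{\refine{\rtb}{\form[\tv/\tx]}}
  &= \{ v \in \typesem{\rtb} \mid \rtsem[{\env\{\tz := v\}}]{\form[\tv/\tx]} \} \\
  &= \{ v \in \typesem{\rtb} \mid \rtsem[{\env\{\tz := v\}\{\tx := \ksem{\env\{\tz := v\}}{\kost}{\tv}\}}]{\form} \}.
\end{align*}
Now, by the weakening fact (as $\tz \notin \mathsf{fv}(\tv)$) we have $\ksem{\env\{\tz := v\}}{\kost}{\tv} = \ksem{\env}{\kost}{\tv}$, and since $\tz \neq \tx$ the two valuation updates commute, so the right-hand side equals $\{ v \in \typesem{\rtb} \mid \rtsem[{\env\{\tx := \ksem{\env}{\kost}{\tv}\}\{\tz := v\}}]{\form}\}$, which is precisely $\rtsem[{\env\{\tx := \ksem{\env}{\kost}{\tv}\}}]{\refine{\rtb}{\form}}$.

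\emph{Inductive cases.} For $\rty = (\txtwo \ofty \rty_1) \Rightarrow \rty_2$ we have $\rty[\tv/\tx] = (\txtwo \ofty \rty_1[\tv/\tx]) \Rightarrow \rty_2[\tv/\tx]$. Expanding the interpretation of function refinement types, applying the induction hypothesis to $\rty_1$ (inside the membership test $v \in \rtsem[\env]{\rty_1}$) and to $\rty_2$ (inside $\apply(f,v) \in \rtsem[{\env\{\txtwo := v\}}]{\rty_2}$), and then using once more the commutation of the updates for $\txtwo$ and $\tx$ together with $\ksem{\env\{\txtwo := v\}}{\kost}{\tv} = \ksem{\env}{\kost}{\tv}$, one identifies $\rtsem[\env]{\rty[\tv/\tx]}$ with $\rtsem[{\env\{\tx := \ksem{\env}{\kost}{\tv}\}}]{\rty}$. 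The case $\rty = \forall \txtwo \ofty \rty_1.\ \rty_2$ is completely analogous, reducing again to the induction hypothesis on $\rty_1$ and $\rty_2$ and the same valuation-swapping observations. The only genuinely delicate point — and the one I expect to require the most care — is exactly this bookkeeping with bound variables: ensuring that the substitution causes no capture, and that the valuation updates introduced by the semantic clauses (for $\tz$ in the base case, for $\txtwo$ in the higher-order cases) can be permuted past $\{\tx := \ksem{\env}{\kost}{\tv}\}$. Both are discharged by the renaming convention and the weakening property of $\ksem{}{}{}$; with those in hand, each case is a routine unfolding.
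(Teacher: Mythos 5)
Your proof is correct and follows essentially the same route as the paper's: structural induction on $\rty$, invoking Lemma~\ref{lem:fsubst} in the base case and the induction hypothesis in the dependent-arrow and universal cases. The only difference is that you make explicit the capture-avoidance convention and the weakening/commutation of valuation updates (e.g.\ $\ksem{\env\{\tz := v\}}{\kost}{\tv} = \ksem{\env}{\kost}{\tv}$), bookkeeping the paper's proof leaves implicit.
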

\begin{proof}
  By induction on the structure of $\rty$.
  \begin{itemize}
    \item Case $\rtsem[\env]{\refine{\rtb}{\form}[\tv/\tx]}$:
      Using Lemma~\ref{lem:fsubst}, we have
      \begin{align*}
	\rtsem[\env]{\refine{\rtb}{\form}[\tv/\tx]}
	&= \{ v \in \typesem{\rtb} \mid \rtsem[\env\{\tz := v\}]{\form[\tv/\tx]} \} \\
	&= \{ v \in \typesem{\rtb} \mid \rtsem[\env\{\tz := v,\tx:=\ksem{\env}{\kost}{\tv}\}]{\form} \} \\
	&= \rtsem[\env\{\tx:=\ksem{\env}{\kost}{\tv}\}]{\refine{\rtb}{\form}}
      \end{align*}
    \item Case $\rtsem[\env]{((\tx' \ofty \rty) \Rightarrow \rty')[\tv/\tx]}$:
      \begin{align*}
	\rtsem[\env]{((\tx' \ofty \rty) \Rightarrow \rty')[\tv/\tx]} 
	=& \{ f \in \typesem{\skel{\rty} \Rightarrow \skel{\rty'}}
    \\
    &\mid \forall v \in \rtsem[\env]{\rty[\tv/\tx]},\\
    &\ \ \apply(f,v) \in \rtsem[\env\{\tx := v\}]{\rty'[\tv/\tx]} \} \\
	=& \{ f \in \typesem{\skel{\rty} \Rightarrow \skel{\rty'}} 
    \\
    & \mid \forall v \in \rtsem[\env\{\tx:=\ksem{\env}{\kost}{\tv}\}\}]{\rty},\\
    &\ \ \apply(f,v) \in \rtsem[\env\{\tx := v,\tx:=\ksem{\env}{\kost}{\tv}\}]{\rty'} \}\\
	=& \rtsem[\env\{\tx:=\ksem{\env}{\kost}{\tv}\}]{(\tx \ofty \rty) \Rightarrow \rty'}
      \end{align*}
    \item Case $\rtsem[\env]{(\forall \tx' \ofty \rty.\ \rty')[\tv/\tx]}$:
      \begin{align*}
	\rtsem[\env]{(\forall \tx' \ofty \rty.\ \rty')[\tv/\tx]} 
	=& \{ f \in \typesem{\skel{\rty'}}\\
    & \mid \forall v \in \rtsem[\env]{\rty[\tv/\tx]},\\
    & \ f \in \rtsem[\env\{\tx' := v\}]{\rty'[\tv/\tx]} \} \\
	=& \{ f \in \typesem{\skel{\rty'}}\\
    & \mid \forall v \in \rtsem[\env\{\tx:=\ksem{\env}{\kost}{\tv}\}]{\rty},\\
    &\ \ f \in \rtsem[\env\{\tx' := v\}]{\rty'} \}\\
	=& \rtsem[\env\{\tx:=\ksem{\env}{\kost}{\tv}\}]{\forall \tx' \ofty \rty.\ \rty'}
	\tag*{\qedhere}
      \end{align*}
  \end{itemize}
\end{proof}

\begin{lemma}[Subtyping soundness]\label{lem:subtypingSoundness}
  If $\rctx\INFs \rty \rsubtype \rty'$ then 
  $\rtsem[\env]{\rty} \subseteq \rtsem[\env]{\rty'}$ for every $\env \in \rtsem{\rctx}$.
\end{lemma}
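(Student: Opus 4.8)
The plan is to argue by induction on the derivation of $\rctx \INFs \rty \rsubtype \rty'$, treating one case per subtyping rule of \Cref{fig:refine:sub}, and in each case fixing an arbitrary $\env \in \rtsem{\rctx}$ and showing $\rtsem[\env]{\rty} \subseteq \rtsem[\env]{\rty'}$. As a preliminary observation I would record that any derivable subtyping forces $\skel{\rty} = \skel{\rty'}$ (inspect the three non-reflexive rules), so that both interpretations are genuinely subsets of the same carrier $\typesem{\skel{\rty}}$ and the inclusion is meaningful; this is also consistent with the tacit well-formedness assumptions $\rctx \INFwf \rty$ and $\rctx \INFwf \rty'$.

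Rules~\ref{rsubtype:refl} and \ref{rsubtype:trans} are immediate: reflexivity gives an equality, and transitivity simply composes the two inclusions supplied by the induction hypotheses. For Rule~\ref{rsubtype:basic}, unfold $\rtsem[\env]{\refine{\rtb}{\form}} = \{ v \in \typesem{\rtb} \mid \rtsem[\env\{\tz := v\}]{\form}\}$ and pick such a $v$. Since $\env \VALID \rctx$, the side condition $\rctx \VALID \forall \tz^{\rtb}.\ \form \Rightarrow \formtwo$ gives $\env \VALID \forall \tz^{\rtb}.\ \form \Rightarrow \formtwo$, which by the clauses for $\forall$ and $\Rightarrow$ in \Cref{fig:semforms} means that $\env\{\tz := v\} \VALID \form$ entails $\env\{\tz := v\} \VALID \formtwo$; hence $v \in \rtsem[\env]{\refine{\rtb}{\formtwo}}$, as required.

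The arrow case, Rule~\ref{rsubtype:arr}, is where care is needed because of the contravariant premise. Take $f \in \rtsem[\env]{(\tx \ofty \rty_0) \Rightarrow \rty_0'}$; by the skeleton remark it already lies in the carrier of $(\tx \ofty \rty_1) \Rightarrow \rty_1'$, so only the membership clause remains. Fix $v \in \rtsem[\env]{\rty_1}$. The induction hypothesis on $\rctx \INFs \rty_1 \rsubtype \rty_0$ gives $v \in \rtsem[\env]{\rty_0}$, hence $\apply(f,v) \in \rtsem[\env\{\tx := v\}]{\rty_0'}$ by the defining property of $f$. Since $\env \in \rtsem{\rctx}$ and $v \in \rtsem[\env]{\rty_1}$ we have $\env\{\tx := v\} \in \rtsem{\rctx, \tx : \rty_1}$, so the induction hypothesis on $\rctx, \tx : \rty_1 \INFs \rty_0' \rsubtype \rty_1'$ applies at this environment and yields $\apply(f,v) \in \rtsem[\env\{\tx := v\}]{\rty_1'}$. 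Thus $f \in \rtsem[\env]{(\tx \ofty \rty_1) \Rightarrow \rty_1'}$. I expect the only delicate point to be the environment bookkeeping in this last case — checking that $\env\{\tx := v\}$ indeed adheres to the extended context, and that the direction of each premise matches the inclusion its induction hypothesis provides; the rest is a direct unfolding of \Cref{fig:semforms}.
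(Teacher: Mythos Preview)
Your proposal is correct and follows essentially the same route as the paper: induction on the subtyping derivation, with the reflexivity and transitivity cases being immediate, the base case unfolding the satisfaction relation, and the arrow case using the contravariant and covariant induction hypotheses. If anything, your treatment of Rule~\ref{rsubtype:arr} is slightly more careful than the paper's, since you explicitly verify that $\env\{\tx := v\} \in \rtsem{\rctx, \tx : \rty_1}$ before invoking the second induction hypothesis at that extended environment, whereas the paper glosses over this point.
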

\begin{proof}
  We proceed by induction on the size of the derivation.
  \begin{itemize}
      \item Rule \ref{rsubtype:refl}. This case is trivial by the reflexivity
	of the subset relation.

      \item Rule \ref{rsubtype:trans}. This case is trivial by the transitivity
	of the subset relation.

      \item Rule \ref{rsubtype:basic}, we have $\rctx\INFs
	\refine{\rtb}{\form} \rsubtype \refine{\rtb}{\formtwo}$ as a
	consequence of $\rctx\vDash \forall \tz^\rtb,\ \form
	\Rightarrow \formtwo$. Direct from the definition of $\vDash_\kost$.

      \item Rule \ref{rsubtype:arr}, we have $\rctx\INFs
	\tx:\rty_0\Rightarrow\rty_0' \rsubtype \rty_1\Rightarrow\rty_1'$ as a
	consequence of $\rctx\INFs \rty_1 \rsubtype \rty_0$ and
	$\rctx\INFs \rty_0' \rsubtype \rty_1'$.  We need to prove that
	$\forall\env\in\rtsem{\rctx}$,
	$\rtsem[\env]{\tx:\rty_0\Rightarrow\rty_0'}\subseteq\rtsem[\env]{\tx:\rty_1\Rightarrow\rty_1'}$.
	That is, $\forall f\in\typesem{\skel{\rty_0}\Rightarrow\skel{\rty_0'}}$
	such that $\forall v_0\in\rtsem[\env]{\rty_0}$,
	$\apply(f,v_0)\in\rtsem[\env\{\tx:=v_0\}]{\rty_0'}$, implies that
	$\forall v_1\in\rtsem[\env]{\rty_1}$,
	$\apply(f,v_1)\in\rtsem[\env\{\tx:=v_1\}]{\rty_1'}$.

	By the induction hypothesis,
	$\rtsem[\env]{\rty_1}\subseteq\rtsem[\env]{\rty_0}$, and thus
	$\apply(f,v_1)\in\rtsem[\env\{\tx:=v_1\}]{\rty_0'}$, and since also by
	the induction hypothesis,
	$\rtsem[\env]{\rty_0'}\subseteq\rtsem[\env]{\rty_1'}$, we are done.
	\qedhere
  \end{itemize}
\end{proof}
\begin{proof}[Proof of Theorem~\ref{thm:rts} (Type soundness)]
  Theorem statement:
  \emph{If $\rctx \INFr \te : \rty$ then $\ksem{\env}{\kost}{\te} \in \rtsem[\env]{\rty}$ for every $\env \in \rtsem{\rctx}$.}

  Proof:
  We proceed by induction on the size of the derivation.
    \begin{itemize}
      \item Rule \ref{rty:ax}, we have $\rctx\INFr  \tx : \rty$ with
	$\tx:\rty\in\rctx$.  We need to prove that $\forall \env \in
	\rtsem[\env]{\rctx},\ \ksem{\env}{\kost}{\tx} \in \rtsem[\env]{\rty}$.
	That is, $\forall \env \in \{\env'\{\tx:=v\}\mid
	\env'\in\rtsem{\rctx\setminus\{\tx:\rty\}}, v\in\rtsem[\env']{\rty}\}$,
	we have $\env(\tx) \in \rtsem[\env]{\rty}$, which is trivial.

      \item Rule \ref{rty:expi}, we have $\rctx,\rctx'\INFr \lambda \tx.\ \te :
	(\tx \ofty \rty) \Rightarrow \rty'$ as a consequence of $\rctx, \tx :
	\rty,\rctx'\INFr \te : \rty'$, with $X\notin\mathsf{fv}(\rctx')$. 
	We need to prove that $\forall \env \in
	\rtsem{\rctx,\rctx'},\ \ksem{\env}{\kost}{\lambda \tx.\ \te} \in
	\rtsem[\env]{(\tx \ofty \rty) \Rightarrow \rty'}$.  That is, $\forall
	\env \in \rtsem{\rctx,\rctx'},\ \ksem{\env}{\kost}{\lambda \tx.\ \te}
	\in [\typesem{\skel{\rty}} \tocont \typesem{\skel{\rty'}}]$,
	or, in an equivalent manner, $\forall v \in \rtsem[\env]{\rty},\
	\ksem{\env\{X:=v\}}{\kost}{\te} \in \rtsem[\env\{\tx := v\}]{\rty'}$,
	which is true by the induction hypothesis.

      \item Rule \ref{rty:expe}, we have $\rctx\INFr \te \app \tv :
	\rty'[\tv/\tx]$ as a consequence of $\rctx\INFr \te : (\tx \ofty
	\rty) \Rightarrow \rty'$ and $\rctx\INFr \tv : \rty$.  We need
	to prove that $\forall \env \in \rtsem{\rctx},\
	\ksem{\env}{\kost}{\te \app \tv} \in \rtsem[\env]{\rty'[\tv/\tx]}$, or
	equivalently, $\apply(\ksem{\env}{\kost}{\te},\ksem{\env}{\kost}{\tv})
	\in \rtsem[\env]{\rty'[\tv/\tx]}$.

	By the induction hypothesis $\ksem{\env}{\kost}{\te} \in
	\rtsem[\env]{(\tx \ofty \rty) \Rightarrow \rty'}$ and
	$\ksem{\env}{\kost}{\tv} \in \rtsem[\env]{\rty}$.  Since, by
	Lemma~\ref{lem:subst}, we have
	$\rtsem[\env]{\rty'[\tv/\tx]}=\rtsem[\env\{\tx:=\ksem{\env}{\kost}{\tv}\}]{\rty'}$,
	we are done.

      \item Rule \ref{rty:st}, we have
	$\rctx\INFr\ket{\psi}:\refine{\qbty}{\tz = \ket\psi}$.  We need
	to prove that $\forall \env \in \rtsem{\rctx},\
	\ksem{\env}{\kost}{\ket{\psi}} \in \rtsem[\env]{\refine{\qbty}{\tz =
	\ket\psi}}$, or equivalently, $\ket{\psi} \in \{
	v\in\typesem{\qbty}\mid\rtsem[\env\{\tz:=v\}]{\tz = \ket\psi} \}
	=\{\ket\psi \}$, which is trivial.

      \item Rule \ref{rty:un}, we have $\rctx\INFr
	\unary{\tv}:\refine{\qbty}{\tz = \unary{\tv} \land \form[\tv]}$ as a
	consequence of $\rctx\INFr \tv:\refine{\qbty}{\form}$.  We
	need to prove that $\forall \env \in \rtsem{\rctx},\
	\ksem{\env}{\kost}{\unary{\tv}} \in \rtsem[\env]{\refine{\qbty}{\tz =
	\unary{\tv} \land \form[\tv]}}$.  Since, by the induction
	hypothesis, $\form[{\ksem{\env}{\kost}{\tv}}]$ holds, we are done.

      \item Rule \ref{rty:prod}, we have $\rctx\INFr \tv_0\qtimes
	\tv_1:\refine{\qbty}{\tz = \tv_0 \qtimes \tv_1 \land \form_0[ \tv_0]
	\land \form_1[\tv_1]}$ as a consequence of $\rctx\INFr
	\tv_0:\refine{\qbty}{\form_0}$ and $\rctx\INFr
	\tv_1:\refine{\qbty}{\form_1}$.

	We need to prove that $\forall \env \in \rtsem{\rctx}$, 
	\begin{align*}
        \ksem{\env}{\kost}{\tv_0\qtimes \tv_1} \in
	    \rtsem[\env]{\refine{\qbty}{\tz = \tv_0 \qtimes \tv_1 \land \form_0[
	    \tv_0]  \land \form_1[\tv_1]}}.
    \end{align*}
  Since, by the induction
	hypothesis, we have that $\form_0[\ksem{\env}{\kost}{\tv_0}]$ and
	$\form_1[\ksem{\env}{\kost}{\tv_1}]$ hold, we are done.

      \item Rule \ref{rty:cons}, we have $\rctx  \INFr
	\cons(\vec{\tv}):\refine[\tz]{\typ}{ \tz = \cons(\vec{\tz}) \land
	\overrightarrow{\land \form}}[\vec{\tv}/\vec{\tz}]$ as a consequence of
	$\rctx \INFr \overrightarrow{\tv} :
	\overrightarrow{\refine{\typ}{\form}}$ with $\cons :: \vec{\typ} \to
	\typ$.  We need to show that $\forall \env \in \rtsem{\rctx},\
	\ksem{\env}{\kost}{\cons(\vec{\tv})} \in
	\rtsem[\env]{\refine[\tz]{\typ}{ \tz = \cons(\vec{\tz}) \land
	\overrightarrow{\land \form}}[\vec{\tv}/\vec{\tz}]}$. That is,
	$$\ksem{\env}{\kost}{\cons(\vec{\tv})} \in
	\rtsem[{\env\{\vec{\tz}:=\ksem{\env}{\kost}{\vec{N}}\}}]{\refine[\tz]{\typ}{
	\tz = \cons(\vec{\tz}) \land \overrightarrow{\land \form}}}.$$
  Since,
	by the induction hypothesis, we have that
	$\overrightarrow{\form[\ksem{\env}{\kost}{\tv}]}$ holds, we are done.

      \item Rule \ref{rty:case}, we have
	$\rctx \INFr\caseof{\tv}[\cons(\vec{\tx})]{\te_0}[\txtwo]{\te_1}: \rty$ as a
	consequence of
	$\rctx\INFr \tv : \refine{\typ}{\form}$,
	$\rctx, \overrightarrow{\tx : \typ}; \form,\tv = \cons(\vec{\tx})
	\INFr \te_0 : \rty$, and
	$\rctx, \txtwo : \refine{\typ}{\form}; \txtwo = \tv,
	\forall \vec{X}^{\vec{\typ}}.\ \txtwo \neq \cons(\vec{\tx}) \INFr \te_1
	: \rty$, where $\cons :: \vec \typ  \to \typ$.

	We need to prove that $\forall \env \in \rtsem{\rctx}$,
	$$\ksem{\env}{\kost}{\caseof{\tv}[\cons(\vec{\tx})]{\te_0}[\txtwo]{\te_1}}
	\in \rtsem[\env]{\rty}.$$
	That is, 
	$\ksem{\env\{\vec X:=\ksem{\env}{\kost}{\vec{\tv}}\}}{\kost}{\te_0} \in \rtsem[\env]{\rty}$ when $\ksem{\env}{\kost}{\tv} =
	\cons(\ksem{\env}{\kost}{\vec{\tv}})$ and
	$\ksem{\env\{\txtwo:=\ksem{\env}{\kost}{\vec{\tv}}\}}{\kost}{\te_1} \in
	\rtsem[\env]{\rty}$ otherwise; which are direct consequences from the
	induction hypothesis.

      \item Rule \ref{rty:rec}, we have $\rctx,\rctx'\INFr
	\letrec{\tf}{\tx}{\te} : \rrfty$ as a consequence of $\rctx,\tf:
	\rrfty,\rctx'\INFr  \lambda \tx .\te : \rrfty$, where
	$\text{Admis}(\rctx\INFwf  \rrfty)$ and $\tf\notin\mathsf{fv}(\rctx')$.
	We need to prove that
	$\forall \env \in \rtsem{\rctx,\rctx'}$,
	$\ksem{\env}{\kost}{\letrec{\tf}{\tx}{\te}} \in \rtsem[\env]{\rrfty}$.
	By the induction hypothesis, $\forall v\in\rtsem[\env]{\rrfty}$,
	$\ksem{\env\{\tf:=v\}}{\kost}{\lambda \tx .\te} \in
	\rtsem[\env\{\tf:=v\}]{\rrfty}$, and since we have $\text{Admis}(\rctx\INFwf  \rrfty)$, we are done.

      \item Rule \ref{rty:real}, we have $\rctx\INFr \breal{r} :
	\refine{\realty}{\tz = \breal{r}}$.  We need to prove that $\forall
	\env \in \rtsem{\rctx}$, $\ksem{\env}{\kost}{\breal{r}} \in
	\rtsem[\env]{\refine{\realty}{\tz = \breal{r}}}$, or equivalently,
	$\breal{r} \in \{v \in \Rext \mid \rtsem[\env\{\tz = v\}]{\tz =
	\breal{r}}\}$, which is trivial.

      \item Rule \ref{rty:opplus}, we have $\rctx\INFr \te_0 \cadd
	\te_1 : \refine{\kty}{\exists \tz_0^{\realty}, \tz_1^{\kty}.\ \tz =
	\tz_0 \cadd \tz_1 \land \form_0[\tz_0] \land \form_1[\tz_1]}$ as a
	consequence of $\rctx\INFr \te_0 : \refine{\realty}{\form_0}$
	and $\rctx\INFr \te_1 : \refine{\kty}{\form_1}$.

	We need to prove that $\forall \env \in \rtsem{\rctx}$,
	\begin{multline*}\ksem{\env}{\kost}{\te_0 \cadd \te_1} \in\\
	\rtsem[\env]{\refine{\kty}{\exists \tz_0^{\realty}, \tz_1^{\kty}.\ \tz
	= \tz_0 \cadd \tz_1 \land \form_0[\tz_0] \land \form_1[\tz_1]}}.\end{multline*}
	Since, by the induction hypothesis,
	$\form_0[\ksem{\env}{\kost}{\te_0}]$ and
	$\form_1[\ksem{\env}{\kost}{\te_1}]$ hold, we are done.

      \item Rule \ref{rty:bary}, we have
	\begin{multline*}\rctx\INFr \te_0\cbary{\tv} \te_1  :\\ \{ \tz : \kty \mid
	{\exists \tz_0^{\kty}, \tz_1^{\kty}.\ \tz = \tz_0 \cbary{\tv} \tz_1
        \land \form_0[\tz_0 ] \land\form_1[\tz_1 ]\land \form_2[\tv ]}\}\end{multline*}
	as a consequence of
	$\rctx\INFr \te_0 : \refine{\kty}{\form_0}$,
	$\rctx\INFr \te_1 : \refine{\kty}{\form_1}$,
	and $\rctx\INFr \tv : \refine{\qbty}{\form_2}$.

	We need to prove that $\forall \env \in \rtsem{\rctx}$, we have
\begin{multline*}
\ksem{\env}{\kost}{\te_0\cbary{\tv} \te_1} \in\\ \rtsem[\env]{\{\tz: \kty \mid \exists \tz_0^{\kty}, \tz_1^{\kty}.\ \tz = \tz_0 \cbary{\tv} \tz_1 \land \form_0[\tz_0] \land \form_1[\tz_1] \land \form_2[\tv]\}}.
\end{multline*}

	Since, by the induction hypothesis, we have that
	$\form_0[\ksem{\env}{\kost}{\te_0}]$,
	$\form_1[\ksem{\env}{\kost}{\te_1}]$, and
	$\form_2[\ksem{\env}{\kost}{\tv}]$ hold, we are done.

      \item Rule \ref{rty:RTsub}, we have $\rctx\INFr \te : \rty$ as a
	consequence of $\rctx\INFr \te : \rty'$ and $\rctx\INFs \rty' \rsubtype
	\rty$.  We need to prove that $\forall \env \in \rtsem{\rctx}$,
	$\ksem{\env}{\kost}{\te} \in \rtsem[\env]{\rty}$.  By the induction
	hypothesis, $\ksem{\env}{\kost}{\te} \in \rtsem[\env]{\rty'}$ and by
	Lemma~\ref{lem:subtypingSoundness},
	$\rtsem[\env]{\rty'} \subseteq \rtsem[\env]{\rty}$, thus, we are done.

      \item Rule \ref{rty:Gen}, we have $\rctx,\rctx' \INFr \te : \forall \tx :
	\rty'.\ \rty$ as a consequence of $\rctx, \tx : \rty',\rctx'\INFr \te :
	\rty$, with $\tx \notin \mathsf{fv}(\rctx')\cup\mathsf{fv}(\te)$.  We
	need to prove that $\forall \env \in \rtsem{\rctx}$,
	$\ksem{\env}{\kost}{\te} \in \rtsem[\env]{\forall \tx : \rty'.\ \rty}$.
	That is, $\ksem{\env}{\kost}{\te} \in\typesem{\skel{\rty}}$ and
	$\forall v'\in\rtsem[\env]{\rty'}$,
	$\ksem{\env}{\kost}{\te}\in\rtsem[\env\{\tx:=v'\}]{\rty}$.  By the
	induction hypothesis, $\ksem{\env\{\tx:=v'\}}{\kost}{\te} \in
	\rtsem[\env\{\tx:=v'\}]{\rty}$, and since $\tx \notin
	\mathsf{fv}(\te)$, we are done.

      \item Rule \ref{rty:Inst}, we have $\rctx \INFr \te : \rty[\tv/\tx]$ as a consequence of
	  $\rctx \INFr \te : \forall \tx : \rty'.\ \rty$ and $\rctx \INFr \tv : \rty'$.
	  We need to prove that $\forall \env \in \rtsem{\rctx}$, $\ksem{\env}{\kost}{\te} \in \rtsem[\env]{\rty[\tv/\tx]}$, which, using Lemma~\ref{lem:subst}, is the same to prove that $\ksem{\env}{\kost}{\te} \in \rtsem[\env\{\tx:=\ksem{\env}{\kost}{\tv}\}]{\rty}$.
	  By the induction hypothesis, $\ksem{\env}{\kost}{\te} \in \rtsem[\env]{\forall \tx : \rty'.\ \rty}$ and $\ksem{\env}{\kost}{\tv} \in \rtsem[\env]{\rty'}$, thus, we are done.
	  \qedhere
    \end{itemize}
\end{proof}

\end{document}